\documentclass[11pt]{amsart}

\usepackage{amssymb,amsthm,amsmath}
\usepackage[numbers,sort&compress]{natbib}
\usepackage{color}
\usepackage{graphicx}
\usepackage{tikz}
\usepackage{amssymb,amsthm,amsmath}
\usepackage[numbers,sort&compress]{natbib}
\usepackage{color}
\usepackage{graphicx}
\usepackage{tikz}
\usepackage{comment}


\hoffset -3.5pc

\title[ ]{ Universal reflective-hierarchical structure of quasiperiodic
  eigenfunctions and sharp spectral transition in phase  }
\author{ Svetlana Jitomirskaya}
\address[ Svetlana Jitomirskaya]{ Department of Mathematics, University of California, Irvine, California 92697-3875, USA}
\email{szhitomi@math.uci.edu}

\author{Wencai Liu}
\address[Wencai Liu]{Department of Mathematics, University of California, Irvine, California 92697-3875, USA}\email{liuwencai1226@gmail.com}


\theoremstyle{plain}
\newtheorem{theorem}{Theorem}[section]
\newtheorem{corollary}[theorem]{Corollary}
\newtheorem{lemma}[theorem]{Lemma}

\newcommand{\R}{\mathbb{R}}
\newcommand{\Z}{\mathbb{Z}}
\theoremstyle{definition}
\newtheorem{definition}[theorem]{Definition}

\newtheorem{remark}[theorem]{Remark}

\begin{document}


\begin{abstract}
We prove sharp  spectral transition in the arithmetics of phase between localization and singular
continuous spectrum for  Diophantine  almost
Mathieu operators. We also
determine exact exponential asymptotics of eigenfunctions and
of corresponding transfer matrices  throughout the localization
region. This uncovers a universal structure in their behavior governed
by the exponential phase resonances. The structure features a new type
of hierarchy, where self-similarity holds upon alternating reflections.

\end{abstract}
\maketitle

\section{Introduction}

Unlike random, one-dimensional quasiperiodic operators feature
spectral transitions with changes of parameters. The transitions
between absolutely continuous and singular spectrum are governed by
vanishing/non-vanishing of the Lyapunov exponent \cite{kot}. In the regime of positive Lyapunov exponents (also called
supercritical in the analytic case, with the name inspired by the almost Mathieu operator)  there are also more delicate transitions: between localization
(point spectrum with exponentially decaying eigenfunctions) and
singular continuous spectrum. They are governed by the resonances: eigenvalues
of box restrictions that are too close to each other in relation to the
distance between the boxes, leading to small denominators in various expansions. Localization is said to be a game of
resonances, a statement attributed to P. Anderson and Ya. Sinai (e.g. \cite{gs}). Indeed, all known proofs of localization, starting with
Fr\"ohlich-Spencer's multi-scale analysis \cite{fs} are based, in one
way or another, on avoiding
resonances and removing resonance-producing parameters, while all known proofs of singular continuous spectrum and
even some of the absolutely continuous one \cite{aiz}
are based on showing their
abundance.

For quasiperiodic operators, one category of resonances are the ones
determined entirely by the frequency. Indeed, for smooth potentials,
large coefficients in the continued fraction expansion of the
frequency lead
to almost repetitions and thus resonances, regardless of the values of
other parameters.  Such resonances were first
understood and exploited to show
singular continuous spectrum for Liouville frequencies  in
\cite{as, as1}, based on \cite{gordon} \footnote{According to \cite{mar},
the fact that the Diophantine properties of the frequencies
should play a role was first observed in \cite{sarnak}.}. The strength of frequency
resonances is measured by the arithmetic parameter
\begin{equation}\label{beta}
\beta(\alpha)=\limsup_{k\to \infty} -\frac{\ln ||k\alpha||_{\R/\Z}}{|k|}
\end{equation}
where $||x||_{\R/\Z}=\inf_{\ell\in \Z}|x-\ell|$.
Another class of
resonances,
appearing for all {\it even} potentials, was discovered in \cite{js}, where it
was shown for the first time that the arithmetic properties of
the phase also play a role and may lead to singular continuous spectrum
even for the Diophantine frequencies. Indeed, for even potentials, phases with
almost symmetries lead to resonances, regardless of the
values of other parameters.\footnote{Symmetry based resonances were first observed in
  \cite{fsw} for the almost Mathieu operator in the perturbative regime.}\label{f} The strength of phase
resonances is measured by the arithmetic parameter
\begin{equation}\label{G.delta}
\delta(\alpha,\theta)=\limsup_{k\to\infty} -\frac{\ln ||2\theta+ k\alpha||_{\R/\Z}}{|k|}
\end{equation}

In both these cases, the strength of the
resonances is in competition with the exponential growth controlled by the
Lyapunov exponent.  It was conjectured in 1994 \cite{conj}
that for the almost Mathieu family- the prototypical quasiperiodic
operator - the two above types of resonances are the only ones
that appear (as is the case in the perturbative regime of \cite{fsw}), and the competition between the Lyapunov growth and
resonance strength  resolves, in both cases, in a sharp way. The
frequency half of the conjecture was recently  solved
\cite{ayz,jl1}. 
In this paper we present the solution of  the phase
half. Moreover, our proof of the pure point part of the conjecture
uncovers  a universal structure of the eigenfunctions throughout the
entire pure point spectrum regime, which, in presence of exponentially strong
resonances, demonstrates a new phenomenon that we call a {\it reflective
  hierarchy}, when the eigenfunctions feature self-similarity upon
proper reflections. This phenomenon was not even previously described
in the (vast) physics literature. 
This paper is, in some sense, dual to the recent work  \cite{jl1}. While the
universal hierarchical structure governed by the frequency resonances
discovered in \cite{jl1} is conjectured to hold, for a.e. phase,
throughout the entire class of analytic potentials, the structure
discovered here requires evenness of the function defining the
potential, and moreover, in general, resonances of other types may
also be present. However, we conjecture that
for general even analytic potentials for a.e. frequency only finitely many other exponentially strong
resonances will appear, thus the structure described in this paper will hold for
the corresponding class, with the $\ln\lambda$ replaced by the Lyapunov exponent $L(E)$ throughout.

  The almost Mathieu operator (AMO) is the (discrete) quasiperiodic   Schr\"{o}dinger operator  on  $   \ell^2(\mathbb{Z})$:
 \begin{equation}\label{Def.AMO}
 (H_{\lambda,\alpha,\theta}u)(n)=u({n+1})+u({n-1})+ 2\lambda v(\theta+n\alpha)u(n),  \text{ with }  v(\theta)=2\cos2\pi \theta,
 \end{equation}
where $\lambda$ is the coupling, $\alpha $ is the frequency, and
$\theta $ is the phase.

It is the central quasiperiodic model due to coming from physics and attracting continued
interest there. It first appeared in Peierls \cite{peierls1933theorie}, and arises as related, in two different ways, to a
two-dimensional electron subject to a perpendicular magnetic field. It
plays the central role in
the Thouless et al theory of the integer quantum Hall effect. For further background, history,
and surveys of results see  \cite{jitomirskaya2015dynamics,dam,bbook,sim60,last} and references therein.


Frequency $\alpha$ is called Diophantine if there exist  $\kappa>0$ and $\tau>0$ such
that  for $k\neq 0$,
\begin{equation}\label{Diokt}
    ||k\alpha||_{\R/\Z}\geq \frac{\tau}{|k|^{\kappa}}.
\end{equation}
From here on, unless otherwise noted, we will
always assume $\alpha$ is Diophantine.  When we need to refer to (\ref{Diokt}) in a non-quantitative way we will sometimes call it  the Diophantine condition (DC) on $\alpha.$ \footnote{It is rather
  straightforward to extend all the results to the case
  $\beta(\alpha)=0,$ without any changes in formulations. We present
  the proof under the condition (\ref{Diokt}) just for a slight
  simplification of some arguments.}

  Operator $H$ is said to have
Anderson localization if it has pure point spectrum with exponentially
decaying eigenfunctions.



We have
\begin{theorem}\label{Maintheorem}

\begin{enumerate}
\item [1.] 
$H_{\lambda,\alpha,\theta}$ has Anderson localization if $|\lambda|> e^{\delta(\alpha,\theta)}$,
\item [2.]
$H_{\lambda,\alpha,\theta}$ has purely singular  continuous
  spectrum  if $1<|\lambda|< e^{ \delta(\alpha,\theta)}$.
\item [3.] 
$H_{\lambda,\alpha,\theta}$  has
  purely abolutely continuous
  spectrum  if $|\lambda|< 1$.
\end{enumerate}
\end{theorem}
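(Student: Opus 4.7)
The plan is to treat the three parts separately, concentrating on Part 1, which is the new sharp localization result; Parts 2 and 3 are in essence classical. Part 3 ($|\lambda|<1$) is the subcritical regime: by Avila's almost reducibility theory (or, alternatively, by Aubry duality combined with the non-localization results of \cite{js}), one gets purely absolutely continuous spectrum for every Diophantine $\alpha$ and every $\theta$. Part 2 ($1<|\lambda|<e^{\delta(\alpha,\theta)}$) combines Kotani theory with a Gordon-type construction: positivity of the Lyapunov exponent $L(E)=\ln|\lambda|$ on the spectrum rules out any absolutely continuous component, while the fast convergence $\|2\theta+k_j\alpha\|\to 0$, together with evenness of $v(\theta)=2\cos 2\pi\theta$, induces an asymptotic reflection symmetry of the potential whose exponential strength $e^{-\delta(\alpha,\theta)|k_j|}$ overwhelms the exponential growth $e^{\ln|\lambda|\,|k_j|}$ of any candidate eigenfunction, forcing the absence of point spectrum by a standard reflection-based Gordon argument.

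For Part 1, fix an energy $E$ in the spectrum and let $\phi$ be a nontrivial polynomially bounded generalized eigenfunction, $H_{\lambda,\alpha,\theta}\phi=E\phi$. The aim is exponential decay of $\phi$ at the sharp rate $\ln|\lambda|$. The starting point is the block-resolvent identity
$$\phi(n) = -G_I(n,a)\phi(a-1) - G_I(n,b)\phi(b+1), \qquad I=[a,b]\ni n,$$
which converts decay of $\phi$ into upper bounds on Green's functions of finite-box restrictions. Expanding $G_I(n,x)$ via Cramer's rule in terms of transfer-matrix norms and $\det(E-H_I)$, and invoking Jitomirskaya's large-deviation estimate for the AMO cocycle, one obtains $|G_I(n,x)|\le e^{-(\ln|\lambda|-\epsilon)|n-x|+\epsilon|I|}$ whenever $I$ is $(E,\epsilon)$-regular, that is, whenever $|\det(E-H_I)|$ is not exponentially small. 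The core task reduces to: for each large $n$, exhibit a regular interval containing $n$ sufficiently deep in its interior.

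The only obstruction to finding such an interval comes from phase resonances. Let $\{k_j\}$ realize $\delta(\alpha,\theta)$. Evenness of $v$ forces the potential to be nearly invariant under reflection through $-k_j/2$ with error $e^{-\delta(\alpha,\theta)|k_j|+o(|k_j|)}$; consequently any box placed symmetrically around such a reflection axis is forced to carry an eigenvalue within this distance of $E$ and is therefore $E$-singular. The plan is a two-scale analysis: (i) on sites $n$ whose distance to the set of reflection axes $\{-k_j/2\}$ is comparable to $|n|$, choose intervals of length $\sim|n|$ that dodge all relevant axes; standard arguments show these are $(E,\epsilon)$-regular and give $|\phi(n)|\le e^{-(\ln|\lambda|-o(1))|n|}$. (ii) Near a reflection axis $-k_j/2$ the resonance is genuine, and rather than seeking a regular interval one transports the known smallness of $\phi$ across the axis via the near-symmetry, producing a reflected scaled copy of $\phi$ centered at $-k_j$. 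This is the \emph{reflective hierarchy} of the abstract, and the condition $\ln|\lambda|>\delta(\alpha,\theta)$ is exactly the threshold at which each reflected copy is genuinely smaller than its predecessor, so iterating never loses the decay.

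The principal difficulty is step (ii). One must (a) show that the singular block $H_I$ straddling a reflection axis has a \emph{unique} eigenvalue near $E$, so the resonance can be located precisely and treated as a rank-one obstruction; (b) upgrade the rank-one perturbative picture to a quantitative pointwise bound that loses only $\delta(\alpha,\theta)|k_j|$ rather than $\epsilon|I|$ in the exponent; and (c) piece the estimates on the non-resonant scale (i) and the resonant scale (ii) together over infinitely many resonance scales without any accumulation of errors. Because the available exponential budget $\ln|\lambda|-\delta(\alpha,\theta)$ is exactly sharp, each estimate must be carried out with the correct exponent rather than merely up to a $o(|n|)$ loss; this is what ultimately forces the reflective self-similar description and distinguishes the phase-resonance regime from the purely frequency-resonant hierarchy of \cite{jl1}.
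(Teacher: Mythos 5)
Your road map (Kotani plus a reflection argument for Part 2, box-Green's-function expansion plus handling of phase resonances for Part 1, and citing almost-reducibility/duality for Part 3) is correct at the level of strategy, but the concrete mechanism you propose for the resonant scale in Part 1 is not the one the paper uses, and as stated it has a genuine gap.

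For Part 1, step (ii) of your plan — ``transport the known smallness of $\phi$ across the axis via the near-symmetry, producing a reflected scaled copy of $\phi$'' — does not produce an \emph{upper} bound on $\phi$ near the resonance site. The reflected function $\hat\phi(n)=\phi(K_j-n)$ solves a nearby equation, but $\phi$ and $\hat\phi$ are in general two linearly independent solutions of (almost) the same three-term recurrence; knowing $\phi$ is small on one side of the axis tells you nothing a priori about its size on the other side (the generic second solution grows). In the paper this reflection/Wronskian mechanism is used precisely in the opposite direction: for the \emph{lower} bound on $\|U(\ell)\|$ (Lemma \ref{Keylemma2}) and for ruling out point spectrum below the threshold (Theorem \ref{general}), where one assumes $\phi\in\ell^2$ and derives a contradiction. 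Likewise your step (a), the rank-one analysis of the eigenvalue of the singular block $H_I$, does not appear in the paper at all and is not obviously needed. What the paper actually does to get the sharp upper bound at a resonance of strength $e^{-t|k|}$ is a refined uniformity argument: it constructs two symmetric blocks $I_1\cup I_2$ of total length $4sq_n$ around $0$ and around $k$, shows via the Uniformity Theorem \ref{universalth} that the associated sampling set is $\frac{tk}{4sq_n}+\varepsilon$-uniform (the resonance is traded for a quantitatively weaker uniformity constant), finds via Lemma \ref{Le.Uniform} a ``good'' $j_0$, excludes $j_0\in I_1$ by the normalization $\|U(0)\|=1$, and then block-expands (Theorem \ref{blockth}) to conclude $\|U(k)\|\leq\max\{\|U(0)\|,\|U(2k)\|\}e^{-(\ln\lambda-t-\varepsilon)|k|}$ (Lemma \ref{Keylemma1}). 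So at the resonant scale the paper \emph{still} exhibits a regular interval, just with regularity constant $\ln\lambda-t$ rather than $\ln\lambda$, which is exactly where the budget $\ln\lambda-\delta$ comes from; no transport across the axis is used for the upper bound.

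For Part 2, your description (``a standard reflection-based Gordon argument'') glosses over the one thing the paper flags as genuinely new. The proof of Theorem \ref{general} is a Wronskian constancy argument for $u$ and its reflection $u_i(n)=u(k_i-n)$, with a case split on $k_i$ odd/even and a factorization of the quadratic Wronskian expression, followed by propagation by the transfer matrix. This is what lets the argument reach the sharp threshold $L(E)<\delta$; the older palindromic argument of \cite{js} has an inherent factor-of-four loss that a ``standard'' argument will not remove. If you are claiming the statement without this refinement, the argument only gives singular continuity for $1<|\lambda|<e^{\delta/4}$.
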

{\bf Remark}
\begin{enumerate}
\item We will prove part 2 for all irrational $\alpha, $ and general Lipshitz $v$ in (\ref{Def.AMO}), see Theorem \ref{general}.
\item Part 3 is known for all $\alpha,\theta$  \cite{av} and is included here for completeness.
\item Parts 1 and 2 of Theorem \ref{Maintheorem} verify the phase half of
  the conjecture in \cite{conj}, as stated there. The frequency half was recently proved
in \cite{ayz,jl1}.
\end{enumerate}

Singular continuous spectrum was first established for $1<|\lambda|<
e^{ c\delta(\alpha,\theta)}$, for sufficiently small $c$
\cite{js}. One can see that even with tight upper semicontinuity
bounds the argument of \cite{js} does not work for $c>1/4.$  Here we
introduce new ideas to remove the factor of $4$ and approach the
actual threshold.

Anderson localization for Diophantine $\alpha$ and
$\delta(\alpha,\theta)=0$ was proved in \cite{MR1740982}. The argument was theoretically extendable to $|\lambda|>
e^{ C\delta(\alpha,\theta)}$ for a large $C$ but not
beyond. Therefore, the case of $\delta(\alpha,\theta)>0$ was
completely open before. In fact, the localization method of \cite{MR1740982}
could not deal with exponentially strong resonances. The first way to
handle exponentially strong {\it frequency} resonances was developed
in \cite{MR2521117}. It was then pushed to the technical limits in \cite{MR3340177} but that method could not
approach the threshold. An important technical achievement of \cite{jl1}
was to develop a way to handle frequency resonances that works up to
the very transition and leads to sharp bounds. In this paper we
develop the first, and at the same time the sharp, way to treat exponential {\it phase}
resonances.

We
borrow two basic technical ingredients from prior work, that we abstract out as
Theorems \ref{universalth} and \ref{blockth} which we prove in the
appendices. Otherwise, since
frequency and phase resonances are fundamentally different in nature
(one is based on the repetitions and the other on reflections), the
specific techniques and constructions required to achieve sharp
results both on the point/upper bounds and
singular continuos/lower bounds sides are completely different.

Recently, it became possible to prove pure point spectrum in a
non-constructive way,  avoiding the
localization method, using instead
reducibility  for the dual model \cite{ayz} (see also
\cite{jk}) as was first done,
in the perturbative regime in \cite{bel}. Coupled with recent   arguments \cite{a,afk,hy,yz} that allow to conjugate the
global transfer-matrix cocycle into the local almost reducibility regime \footnote{For the Diophantine case this is the
Eliasson's regime \cite{el}} and proceed by almost
reducibility,
this offers a powerful technique that
led to a solution of the measure theoretic version of the frequency
part of the conjecture of \cite{conj} by Avila-You-Zhou in \cite{ayz} and a
corresponding sharp result  for the supercritical regime in the
extended Harper's model \cite{hj}. However, we note that proofs by
dual reducibility inherently lose the control over phases (thus can
only be measure theoretic), and therefore cannot approach the
transitions in phase.

Our proof of localization is based on determining the {\it exact  asymptotics} of the generalized eigenfunctions in the regime $ |\lambda|>e^{\delta(\alpha,\theta)}$.

For any  $\ell$,
              let  $x_0$ (we can choose any one if $x_0$ is not unique) be such that
              \begin{equation*}
             |\sin\pi(2\theta+x_0\alpha)|  = \min_{|x|\leq 2|\ell|}|\sin\pi(2\theta+x\alpha)|.
              \end{equation*}
               Let $\eta=0$ if $2\theta+x_0\alpha\in \mathbb{Z}$, otherwise  let $\eta\in (0, \infty)$ be given by the following equation,
                \begin{equation}\label{eta}
                   |\sin\pi(2\theta+x_0\alpha)|=e^{-\eta |\ell|}.
                \end{equation}
                    Define $f: \mathbb{Z}\rightarrow \mathbb{R}^+$ as follows.

                    Case 1: $x_0\cdot \ell \leq  0$. Set
                     \begin{equation*}
                        f(\ell)  = e^{-|\ell|\ln|\lambda|}  .
                    \end{equation*}

                    Case 2. $x_0\cdot \ell >  0$. Set

                    \begin{equation*}
                        f(\ell)  = e^{-(|x_0|+|\ell-x_0|) \ln|\lambda|} e^ {\eta |\ell|}  +e^{-|\ell|\ln|\lambda|}  .
                    \end{equation*}

We say that $\phi$ is a generalized  eigenfunction of $H$ with generalized
eigenvalue $E$, if
   \begin{equation} \label{shn}
     H\phi=E\phi  ,\text{ and }  |\phi(k)|\leq \hat{C}(1+|k|).
  \end{equation}

   For a fixed generalized eigenvalue $E$ and  corresponding generalized eigenfunction  $\phi$ of $H_{\lambda,\alpha,\theta}$, let $ U(\ell) =\left(\begin{array}{c}
                             \phi(\ell)\\
                            \phi({\ell-1})
                          \end{array}\right)
                      $.
We have

                     \begin{theorem}\label{Maintheoremdecay}
                    Assume $\ln |\lambda| >\delta(\alpha,\theta)$. If  $E$ is a generalized eigenvalue  and  $\phi$ is
                     the  corresponding generalized eigenfunction  of $H_{\lambda,\alpha,\theta}$, then for any $\varepsilon>0$, there exists $K$ such that for any $|\ell|\geq K$,  $U(\ell)$ satisfies
                     \begin{equation}\label{G.Asymptotics}
                      f(\ell)e^{-\varepsilon|\ell|} \leq ||U(\ell)||\leq f(\ell)e^{\varepsilon|\ell|}.
                       \end{equation}
                            In particular,  the eigenfunctions decay at the rate $\ln |\lambda| -\delta(\alpha,\theta)$.
                     \end{theorem}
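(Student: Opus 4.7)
The enhancement $e^{\eta|\ell|}$ in Case 2 is the fingerprint of the approximate reflection symmetry of the potential about $x_0/2$, forced by evenness of $v$ together with $2\theta+x_0\alpha$ being exponentially close to $0$ modulo $1$. For the upper bound on $\|U(\ell)\|$ I would use the Poisson/block-resolvent formula
\[
\phi(\ell)=-G_I(\ell,a)\phi(a-1)-G_I(\ell,b)\phi(b+1)
\]
together with the Schnol bound $|\phi(k)|\le\hat C(1+|k|)$. In Case 1 ($x_0\cdot\ell\le 0$) one can pick an interval $I\ni\ell$ of length $\sim|\ell|$ that stays on the side opposite to $x_0$; Theorem \ref{universalth} then yields $|G_I(\ell,\partial I)|\le e^{-|\ell|\ln|\lambda|+\varepsilon|\ell|}$, producing the baseline $f(\ell)=e^{-|\ell|\ln|\lambda|}$. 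In Case 2 the natural tiling from $0$ to $\ell$ is obstructed by the resonant block containing $x_0$; Theorem \ref{blockth} bounds its Green's function by $e^{-|I|\ln|\lambda|+\eta|\ell|+o(|\ell|)}$, so tiling through $x_0$ gives the enhanced factor $e^{-(|x_0|+|\ell-x_0|)\ln|\lambda|}e^{\eta|\ell|}$, while an alternative tiling avoiding $x_0$ still recovers $e^{-|\ell|\ln|\lambda|}$. The sum of the two tilings matches $f(\ell)$.

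\textbf{Lower bound.} The lower bound combines a universal transfer-matrix baseline with a reflective contribution. The baseline $\|U(\ell)\|\ge e^{-|\ell|\ln|\lambda|-\varepsilon|\ell|}$ follows from $L(E)=\ln|\lambda|$ in the spectrum and the impossibility of a non-trivial solution decaying faster than the Lyapunov rate simultaneously on both sides of the origin. For Case 2 I would exploit the approximate symmetry: writing $\tilde\phi(n):=\phi(x_0-n)$, evenness of $v$ yields $(H_{\lambda,\alpha,\theta}-E)\tilde\phi(n)=O(e^{-\eta|\ell|})$ on $|n|\le 2|\ell|$, so $\tilde\phi$ is a nearly-exact second solution on that window. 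Estimating the Wronskian $W(\phi,\tilde\phi)$ in terms of the symmetry defect $e^{-\eta|\ell|}$, using Cramer's rule to decompose $\phi(\ell)$ along the reflected direction plus an exponentially small error, and substituting the already-established upper bound for $|\phi(x_0-\ell)|$, yields a reflective contribution of amplitude $e^{-|x_0|\ln|\lambda|}e^{\eta|\ell|}\cdot|\phi(x_0-\ell)|\sim e^{-(|x_0|+|\ell-x_0|)\ln|\lambda|}e^{\eta|\ell|}$, in agreement with the first summand of $f(\ell)$.

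\textbf{Main obstacle.} The delicate step is the lower bound in Case 2: one must certify that the reflective contribution is not cancelled by the direct part of $\phi$. This requires identifying $E$ as one of a near-degenerate pair produced by the reflection symmetry, measuring the splitting precisely as $e^{-\eta|\ell|}$ via the Wronskian, and using the scale-$2|\ell|$ minimization in the definition of $x_0$ to exclude any competing resonance inside the window. Unlike the frequency-resonance case of \cite{jl1}, where the enhancement stems from near-repetitions producing direct self-similar copies of the eigenfunction, here the mechanism is reflective and demands a Wronskian/symmetry argument with no analogue in the odd-potential or non-symmetric setting; converting the Wronskian bound into a pointwise lower bound uniformly in $\ell$ is the genuinely new ingredient.
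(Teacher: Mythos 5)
Your upper-bound picture is in the spirit of the paper's argument (a non-resonant block-expansion lemma built on Theorems \ref{universalth} and \ref{blockth}, plus a separate estimate at the resonant position $x_0$), though the paper does not literally "tile through a resonant block whose Green's function is enhanced by $e^{\eta|\ell|}$". Instead it first proves, in Lemma \ref{Keylemma1}, the direct vector estimate $\|U(x_0)\|\le \max\{\|U(0)\|,\|U(2x_0)\|\}e^{-(\ln\lambda-t-\varepsilon)|x_0|}$; the crucial step there is an extremality argument (Lemma \ref{lem}) that the regular point $j_0$ produced by Lemma \ref{Le.Uniform} must lie in the interval near $x_0$, not near $0$, because $\|U(0)\|=1$. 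Once $\|U(x_0)\|$ is controlled, the asymptotics at general $\ell$ follow by non-resonant block expansion between $0$, $x_0$, $2\ell$. Your phrasing misses this two-step structure, but the ingredients are the right ones.

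For the lower bound the gap you flag in your own "Main obstacle" paragraph is real, and the paper takes a genuinely different route that sidesteps it. You propose to decompose $\phi(\ell)$ directly along the reflected solution $\tilde\phi(n)=\phi(x_0-n)$ and certify that the reflective contribution does not cancel against the direct one; you do not supply that non-cancellation argument, and your suggestion to "identify $E$ as one of a near-degenerate pair" is not something the paper does or needs. The paper's Lemma \ref{Keylemma2} instead argues by contradiction \emph{only at the resonant site} $x_0$: assume $\|U(x_0)\|\le e^{-(\ln\lambda-t+\sigma)x_0}$. The near-reflection symmetry bounds the Wronskian $W(\phi,\hat\phi)$ (with $\hat\phi(n)=\phi(x_0-n)$), which at the midpoint $m=x_0/2$ forces $\|U(m)\pm\hat U(m)\|$ to be exponentially small for one choice of sign; transporting this to $0$ with the (telescoped) transfer matrices yields $\|U(0)+\iota\hat U(0)\|\to 0$, i.e. $\|\hat U(0)\|=\|(\phi(x_0),\phi(x_0+1))\|\approx\|U(0)\|=1$, which contradicts the already-established decay. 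No cancellation control between a "direct" and a "reflected" piece is ever needed. The lower bound at a general $\ell$ in the resonant case then follows cheaply from the lower bound at $x_0$ via the crude transfer-matrix estimate \eqref{G.new18}, $\|U(\ell)\|\ge C e^{-(\ln\lambda+\varepsilon)|\ell-x_0|}\|U(x_0)\|$, rather than from any fine decomposition of $\phi(\ell)$ itself. I would recommend restructuring your lower bound along these lines: contradiction at $x_0$ through the Wronskian/midpoint argument, then transport; as written your Cramer's-rule decomposition of $\phi(\ell)$ is not a proof.
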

             {\bf Remark}\label{rem}

              \begin{itemize}
              \item For $\delta=0$ we have that for any $\varepsilon>0$,
              \begin{equation*}
             e^{-(\ln|\lambda|+\varepsilon)|\ell|}  \leq  f(\ell)\leq e^{-(\ln|\lambda|-\varepsilon)|\ell|}.
              \end{equation*}
              This implies that the eigenfunctions decay precisely at the rate of Lyapunov exponent $\ln |\lambda|$.
              \item
              For $\delta>0$, by the definition of $\delta$ and $f$, we have for any $\varepsilon>0$,
              \begin{equation}
                f(\ell)\leq e^{-(\ln|\lambda|-\delta-\varepsilon)|\ell|}.
              \end{equation}
              \item
              By the definition of $\delta$ again, there exists a subsequence $\{\ell_i\} $ such that
              \begin{equation*}
                 |\sin\pi(2\theta+\ell_i \alpha)|\leq e^{-(\delta-\varepsilon) |\ell_i|}.
              \end{equation*}
              By the DC  on $\alpha$, one has that
              \begin{equation*}
                |\sin\pi(2\theta+\ell_i\alpha)|  = \min_{|x|\leq 2|\ell_i|}|\sin\pi(2\theta+x\alpha)|.
              \end{equation*}
              Then
               \begin{equation}\label{locmax}
               f(\ell_i)\geq e^{-(\ln|\lambda|-\delta+\varepsilon)|\ell_i|}.
               \end{equation}
               This implies the eigenfunctions decay precisely at the rate $\ln |\lambda| -\delta(\alpha,\theta)$.
                 \item
                 If $x_0$ is not unique, by the DC on $\alpha$, we must have that $\eta$ is arbitrarily small.
                 Then
                 \begin{equation*}
                  e^{-(\ln|\lambda|+\varepsilon)|\ell|}  \leq ||U(\ell)||\leq e^{-(\ln|\lambda|-\varepsilon)|\ell|}.
                 \end{equation*}

              \end{itemize}

The behavior
described in Theorem  \ref{Maintheoremdecay}  happens around arbitrary
point. This, coupled with effective control of parameters at the local
maxima, allows to uncover the self-similar nature of the
eigenfunctions.  Hierarchical behavior of solutions,
despite significant numerical studies and even a discovery of  Bethe Ansatz
solutions \cite{wieg,aw}  has  remained an important open challenge even at
the physics level.
In paper \cite{jl1} we obtained  universal hierarchical structure  of
the  eigenfunctions  for all frequencies  $\alpha $ and phases with
$\delta(\alpha,\theta)=0$. In studying the eigenfunctions of
$H_{\lambda,\alpha,\theta}$ for $\delta(\alpha,\theta)>0$ we obtain a
different kind of universality throughout the pure point spectrum
regime, which features a self-similar hierarchical structure 
upon proper {\it reflections}.

\vskip.2in

Assume phase $\theta$ satisfies $0<\delta(\alpha,\theta)<\ln \lambda$. Fix $0<\varsigma <\delta(\alpha,\theta).$

 Let $k_0$ be a global maximum of eigenfunction $\phi$.\footnote{Can take any one if there are several.}
Let $K_i$ be the positions of exponential resonances of the phase
$\theta'=\theta +k_0\alpha $ defined by
\begin{equation}\label{resonantKi}
  ||2\theta+(2k_0+K_i)\alpha||_{\R/\Z}\leq  e^{-\varsigma |K_i|},
\end{equation}

This means that $|v(\theta'+\ell\alpha)-v(\theta'+(K_i-\ell)\alpha)|\leq
Ce^{-\varsigma |K_i|}$, uniformly in $\ell,$ or, in other words, the
potential $v_n=v(\theta+n\alpha)$ is $ e^{-\varsigma |K_i| }$-almost symmetric with respect to
$(k_0+K_i)/2.$

Since $\alpha$ is Diophantine, we have
\begin{equation}\label{KiKi-1}
   |K_i|\geq c e^{ c|K_{i-1}|},
\end{equation}
where $c$ depends on $\varsigma$ and $\alpha$ through the Diophantine constants $\kappa,\tau.$ On the other hand, $K_i$ is necessarily an infinite sequence.

Let $\phi$ be an  eigenfunction, and   $ U(k) =\left(\begin{array}{c}
        \phi(k)\\
       \phi({k-1})
     \end{array}\right)
 $.
We say $k$ is a local $K$-maximum if $||U(k)||\geq ||U(k+s)||$ for all $s-k\in
[-K,K]$.

We first describe the hierarchical structure of local maxima
informally. There exists a {\it constant} $\hat {K}$ such that there
is a local $cK_j$-maximum $b_{j}$ within distance $\hat {K}$ of
each resonance $K_j$. The exponential
behavior of the eigenfunction
in the local $cK_j$-neighborhood of each such local maximum, normalized
by the value at the local maximum is given by the {\it reflection} of
$f$. Moreover, this describes the entire collection of local maxima of
depth $1$, that is $K$ such that $K$ is a $cK$-maximum. Then we have a similar picture in the vicinity of $b_{j}:$
there are local $cK_i$-maxima $b_{j,i}, i<j,$ within distance $\hat {K}^2$ of
each $K_j-K_i.$  The exponential
(on the $K_i$ scale) behavior of the eigenfunction
in the local $cK_i$-neighborhood of each such local maximum, normalized
by the value at the local maximum is given by
$f$. Then we get the next level maxima
$b_{j,i,s}, s<i$ in the $\hat {K}^3$-neighborhood of $K_j-K_i+K_s$ and reflected
behavior around each, and so on, with reflections alternating with
steps. At the end we obtain a
  complete hierarchical structure of local maxima that we denote by
  $b_{j_0,j_1,\ldots,j_s}, $ with each
  ``depth $s+1$" local maximum $b_{j_0,j_1,\ldots,j_{s}}$ being in the corresponding
  vicinity of the ``depth $s$" local maximum
  $b_{j_0,j_1,\ldots,j_{s-1}}\approx k_0+\sum_{i=0}^{s-1} (-1)^{i}K_{j_i} $ and with universal
  behavior at the corresponding scale around each. The quality of the
  approximation of the position of the next maximum gets lower with each level of
  depth, with $b_{j_0,j_1,\ldots,j_{s-1}}$ determined with $\hat {K}^s$
  precision, thus it presents an accurate picture as long as $ K_{j_s}  \gg
 \hat{K}^s.$

We now describe the hierarchical structure precisely.

\begin{theorem}\label{Universalend}
Assume sequence $K_i$ satisfies (\ref{resonantKi}) for some $\varsigma>0$.
Then
there exists   $\hat {K}  (\alpha,\lambda,\theta,\varsigma)<\infty$\footnote{$\hat {K}$ depends on $\theta$ through $2\theta+k\alpha$, see \eqref{G.delta}.}  such that for any  $j_0>j_1>\cdots>j_k\geq 0$ with $K_{j_k}\geq   \hat{K}^{k+1}$,
for each $0\leq s\leq k$ there exists a
local $\frac{\varsigma}{2\ln\lambda}K_ {j_s}$-maximum\footnote{Actually, it can be a local $(\frac{\varsigma}{\ln\lambda}-\varepsilon) K_ {j_s}$-maximum for any $\varepsilon>0$.} $b_{{ j_0},{ {j_1}},...,{ {j_s}}}$
such that the
following holds:
\begin{description}
  \item[I]   $|b_{{ j_0},{ {j_1}},...,{ {j_s}}}-k_0-
  \sum_{i=0}^s (-1)^{i} K_{j_i}|\leq   \hat{K}^{s+1}.$
  \item[II]For any $\varepsilon>0$,  if $ C\hat{K}^{k+1}\leq |x-b_{{ j_0},{ {j_1}},...,{ {j_k}}}|\leq \frac{\varsigma}{4\ln\lambda}|K_{ {j_k}}|$, where $C$ is a large constant depending on $ \alpha,\lambda,\theta,\varsigma$ and $\varepsilon$, then  for each
    $s=0,1,...,k,$

\begin{equation}\label{G.add1local}
 f((-1)^{s+1}x_s)e^{-\varepsilon|x_s|} \leq \frac{||U(x)||}{||U(b_{{ j_0},{ {j_1}},...,{ {j_s}}})||}\leq f((-1)^{s+1}x_s)e^{\varepsilon|x_s|},
\end{equation}
where $x_s=x- b_{{ j_0},{ {j_1}},...,{ {j_s}}} $.
\end{description}

\end{theorem}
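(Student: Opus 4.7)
The plan is to induct on the depth $s$, applying Theorem \ref{Maintheoremdecay} at each stage to the shifted eigenfunction $\tilde\phi_s(n):=\phi(n+b_{j_0,\ldots,j_s})$ with shifted phase $\tilde\theta_s:=\theta+b_{j_0,\ldots,j_s}\alpha$. From the $f$-profile of this shifted problem, determined by the exponentially strong resonances of $\tilde\theta_s$, one reads off both the location of the next local maximum $b_{j_0,\ldots,j_{s+1}}$ (as the Case 2 bump position) and the asymptotics (\ref{G.add1local}) on the appropriate scale.

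For the base case $s=0$, Theorem \ref{Maintheoremdecay} is applied around $k_0$. For $|\ell|$ in a window comparable to $|K_{j_0}|$, the spacing (\ref{KiKi-1}) together with (\ref{resonantKi}) makes $x_0=K_{j_0}$ the argmin in the definition of $f$, with $\eta|\ell|\geq \varsigma|K_{j_0}|(1-o(1))$. Case 2 then produces a bump at $\ell=K_{j_0}$ of height $\exp(-(\ln\lambda-\varsigma+o(1))|K_{j_0}|)$, and the sandwich $f(\ell)e^{\pm\varepsilon|\ell|}$ in (\ref{G.Asymptotics}) forces an actual local maximum of $\|U(\cdot)\|$ within $\hat K$ of $k_0+K_{j_0}$, which we declare $b_{j_0}$. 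Property II at $s=0$ is then just (\ref{G.Asymptotics}) rewritten for $\tilde\phi_0$; the reflected argument $f(-x_0)$ records that the dominant shifted resonance (namely the original $k_0$) lies on the $-K_{j_0}$ side of $b_{j_0}$.

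For the inductive step, given $b_{j_0,\ldots,j_s}$ satisfying I, the crucial resonance estimate to establish is
\begin{equation*}
\|2\tilde\theta_s+(-1)^{s+1}K_{j_{s+1}}\alpha\|_{\R/\Z}\leq C e^{-\varsigma|K_{j_{s+1}}|},\qquad j_{s+1}<j_s,
\end{equation*}
with no spurious $|K'|<|K_{j_{s+1}}|$ admitting an exponentially stronger bound. Assuming this, Theorem \ref{Maintheoremdecay} applied to $\tilde\phi_s$ with $|\ell|$ around $|K_{j_{s+1}}|$ places a Case 2 bump at the shifted-frame position $(-1)^{s+1}K_{j_{s+1}}$; choosing $b_{j_0,\ldots,j_{s+1}}$ as an actual local max of $\|U(\cdot)\|$ within $\hat K$ of this point yields I at depth $s+1$, the error bound $\hat K^{s+2}$ absorbing the inherited $\hat K^{s+1}$ plus a fresh $O(\hat K)$. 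Property II at depth $s+1$ is then (\ref{G.Asymptotics}) for $\tilde\phi_{s+1}$, with the alternating sign $(-1)^{s+2}$ reflecting that the dominant shifted resonance flips to the opposite side of $b_{j_0,\ldots,j_{s+1}}$.

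The main obstacle is the resonance estimate itself. Substituting $b_{j_0,\ldots,j_s}=k_0+\sum_{i=0}^s(-1)^iK_{j_i}+r_s$ into $\|2\tilde\theta_s+(-1)^{s+1}K_{j_{s+1}}\alpha\|$ and expanding via (\ref{resonantKi}) produces a telescoping combination of the $K_{j_i}$-resonances plus a position-error term $2r_s\alpha$; the telescoping contributions are individually bounded by $e^{-\varsigma|K_{j_i}|}\leq e^{-\varsigma|K_{j_{s+1}}|}$ and combine with precisely the sign pattern yielding the $(-1)^{s+1}$. The subtle piece is controlling $\|2r_s\alpha\|_{\R/\Z}$: the DC on $\alpha$ only yields polynomial control in $|r_s|\leq\hat K^{s+1}$, which is insufficient. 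The resolution is that $b_{j_0,\ldots,j_s}$ is not arbitrary within its $\hat K^{s+1}$-neighborhood — it is precisely the local maximum, and this extremality pins $r_s$ to a value for which the residual symmetry defect is exponentially small, so that the local-max condition itself encodes the required resonance. Ruling out spurious closer resonances of $\tilde\theta_s$ at scale below $|K_{j_{s+1}}|$ then follows from the DC together with the completeness of $\{K_i\}$ as the full list of exponential resonances of $\theta$ at $k_0$.
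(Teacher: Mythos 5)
The overall architecture — induct on depth, shift the operator by the current local maximum, read off the next bump from a local version of Theorem \ref{Maintheoremdecay} — matches the paper's strategy (which packages the ``find the next local max'' step as Theorem \ref{localmaximum} and the ``local asymptotics'' step as Theorem \ref{Maintheoremdecaylocal}). You also correctly identify the crux: after shifting by $b_{j_0,\ldots,j_s}=k_0+\sum(-1)^iK_{j_i}+r_s$, the term $\|2r_s\alpha\|_{\R/\Z}$ is only polynomially small, which naively wrecks the resonance estimate.

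However, your proposed resolution of that crux is wrong. You claim the extremality of the local maximum ``pins $r_s$ to a value for which the residual symmetry defect is exponentially small.'' That is not the case, and the paper never asserts it. Nothing about being the argmax of $\|U(\cdot)\|$ in a window of size $\hat K^{s+1}$ forces $\|2r_s\alpha\|$ to be exponentially small; the local maximum sits wherever the solution happens to peak, and the DC gives you nothing better than $\|2r_s\alpha\|\gtrsim |r_s|^{-\kappa}$, with no upper bound at all. The paper's actual resolution is different: it does not try to show $\tilde\theta_s$ has a resonance at $(-1)^{s+1}K_{j_{s+1}}$. Instead, when it invokes Theorem \ref{localmaximum} with shift $s_0=b_{j_0,\ldots,j_s}$, it takes the resonance position to be $K=(-1)^{s+1}K_{j_{s+1}}+2\sum_{i=0}^s(-1)^{s+i+1}b'_{j_i}$, so that the correction $2\sum(-1)^{s+i+1}b'_{j_i}$ exactly cancels the error in $2s_0$ coming from $\sum(-1)^{i-s}b'_{j_i}$ (see the derivation of \eqref{smallshift44}: the expression $2s_0+K$ reduces algebraically to $2\sum(-1)^iK_{j_i}+(-1)^{s+1}K_{j_{s+1}}$, which telescopes via the $\|2\theta+K_{j_i}\alpha\|\leq e^{-\varsigma|K_{j_i}|}$ bounds). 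In short: you cannot verify an exact resonance at the nominal shifted position, so you instead shift the target resonance position to absorb the accumulated displacement errors. This is the missing idea, and without it the inductive step does not close.

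A secondary issue: you assert each inductive step contributes ``a fresh $O(\hat K)$'' to the positional error. In fact the fresh error at depth $s+1$ is of order $\hat K^{s+2}$, not $O(\hat K)$, because the lower cutoff $L_1$ in Theorem \ref{localmaximum} must grow like $\hat K^{s+1}$: the shifted phase $\tilde\theta_s$ only satisfies the non-resonance bound \eqref{2thetaalpha} for $|k|$ above the scale of the inherited displacement, so the window in which a spurious closer maximum must be excluded grows with depth. The paper tracks this carefully via the sequence $a_n=\hat K^2(\hat K+1)^{n-2}$ with $|b'_{j_i}|\leq a_{i+1}$, yielding the $(\hat K+1)^{s+1}$ bound in Part I. Your accounting understates the error growth, though the corrected version still closes.

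Finally, ``reading off the next local max from the Case~2 bump of $f$'' is itself nontrivial: one must rule out that the true argmax of $\|U\|$ in the $\sim K_{j_{s+1}}$ window sits far from the bump. This is the content of Theorem \ref{localmaximum}, whose proof (Cases 1, 2.1, 2.2) uses the uniformity and block-expansion machinery again, not merely the sandwich bound \eqref{G.Asymptotics}. Your proposal glosses over this step.
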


\begin{remark}
Actually \eqref{G.add1local} holds for $x$ with $ C\hat{K}^{k+1}\leq |x-b_{{ j_0},{ {j_1}},...,{ {j_k}}}|\leq (\frac{\varsigma}{2\ln\lambda}-\varepsilon)|K_{ {j_k}}|$ for any $\varepsilon>0$.
\end{remark}

Thus the behavior of $\phi(x)$ is described by the same universal $f$
in each $\frac{\varsigma}{2\ln\lambda}K_ {j_s}$ window around the
corresponding local maximum $b_{{ j_0},{ {j_1}},...,{ {j_s}}}$
after alternating reflections. The positions of the local maxima in
the hierarchy are determined up to errors that at all but possibly
the last step are superlogarithmically small in $K_ {j_s}.$  We call
such a structure {\it reflective hierarchy}.

We are not aware of previous results describing the structure of
eigenfunctions for Diophantine $\alpha$ (The structure in the regime
$\beta >0$ is described in \cite{jl1}). Certain results indicating the
hierarchical structure in the corresponding
semi-classical/perturbative regimes were previously obtained in the
works of Sinai, Helffer-Sjostrand, and Buslaev-Fedotov (see \cite{helf,busl,sin}, and also \cite{zhi} for another model
). We were
also informed \cite{GP} that for strongly Diophantine $\alpha$ the fact that many
eigenfunctions of box restrictions for analytic $v$ in (\ref{Def.AMO})
can only ``bump up'' at resonances, can be obtained from the
avalanche principle expansion of the determinants, an important
method developed in \cite{gs}.

\begin{center}
\begin{tikzpicture}[thick, scale=1.8]
\node [below] at (3.0,5.6){reflective self-similarity  of  an eigenfunction};
\draw[dashed](0,0)--(0,4.9);
\draw [->](-1,0)--(9,0);
\draw[dashed](1.97,4)--(1.97,0);
\draw[dashed](8,2.2)--(8,0);
\draw[dashed](7,1.8)--(7,0);
\draw[dashed](0,4.9)--(-1,4.9);
\draw[dashed](-1,4.9)--(-1,0);
\node [left] at (-0.5,2){I};
\node [left] at (1.0,2.0){II};
\node [left] at (7.6,1.0){II$^\prime$};
\draw[dashed](8,2.2)--(8.8,2.2);
\draw[dashed](8.8,2.2)--(8.8,0);
\node [left] at (8.5,1.0){I$^\prime$};
\node [above] at (0,4.9){Global maximum};
\node [below] at (1.85,0){$K_{j_1}$};
\node [below] at (1.85,4.3){$b_{j_1}$};
\node [below] at (8,0){$K_{j_0}$};
\node [below] at (8,2.6){$b_{j_0}$};
\node [below] at (7,2.2){$b_{{j_0},{j_1}}$};
\node [below] at (0,0){0};
\draw plot [smooth] coordinates {(-0.5,4.5)(0,4.9)(0.5,4.5)(1.3,3)(2,4)(3,3)};
\draw [dashed] plot [smooth] coordinates {(3,3)(5,1)(6,1.3)};

\draw  plot [smooth] coordinates {(6,1.3)(6.55,1.6)(7,1.8)(7.3,1.5)(8,2.2)(8.5,1.8)};
\draw[dashed,->](7.4,1.3)--(7,1.3);
\draw[dashed,->](7.6,1.3)--(8,1.3);
\node [below] at (7.5,1.45){$K_{j_1}$};
\end{tikzpicture}
\end{center}

Fig.1: This depicts reflective self-similarity of an eigenfunction with global maximum at 0.
The self-similarity: I$^\prime$ is obtained from I  by scaling the $x$-axis  proportional to the ratio of the heights of the maxima in I and  I$^\prime$.
II$^\prime$ is obtained from  II  by scaling the $x$-axis
proportional to the ratio of the heights of  the maxima in II and
II$^\prime$. The behavior in the regions I$^\prime$, II$^\prime$
mirrors the behavior in regions I, II upon reflection and
corresponding dilation.
\vspace{4ex}

\par


 Our final main result is   the asymptotics of the transfer matrices.
Let $A_0=I$ and for $k\geq 1$,
\begin{equation*}
A_{k}(\theta)=\prod_{j=k-1}^{0 }A(\theta+j\alpha)=A(\theta+(k-1)\alpha)A(\theta+(k-2)\alpha)\cdots A(\theta)
\end{equation*}
and
\begin{equation*}
A_{-k}(\theta)=A_{k}^{-1}(\theta-k\alpha),
\end{equation*}
where $A(\theta)=\left(
             \begin{array}{cc}
               E-2\lambda\cos2\pi\theta & -1 \\
               1& 0\\
             \end{array}
           \right)
$.
$A_{k}$  is called the (k-step) transfer matrix. As is clear from the
definition, it also depends on $\theta$ and $E$ but since those
parameters will be usually fixed, we omit this from the notation.

We
define a new function $g :\mathbb{Z}\rightarrow \mathbb{R}^+$ as follows.

Case 1: If $x_0\cdot \ell \leq 0$ or $|x_0|>|\ell|$, set
\begin{equation*}
    g(\ell)=e^{|\ell|\ln|\lambda| }.
\end{equation*}

Case 2: If $x_0\cdot \ell \geq  0$ and $|x_0|\leq| \ell|\leq 2|x_0|$, set
\begin{equation*}
    g(\ell)=e^{(\ln\lambda-\eta)|\ell|}+e^{|2x_0-\ell|\ln|\lambda| }.
\end{equation*}

Case 3: If $x_0\cdot \ell \geq  0$ and $|\ell|> 2|x_0|$, set
\begin{equation*}
    g(\ell)=e^{(\ln\lambda-\eta)|\ell|}.
\end{equation*}
We have
\begin{theorem}\label{Thtransferasy}
Under the conditions of  Theorem \ref{Maintheoremdecay},   we have
\begin{equation}\label{Gtransferasy}
   g(\ell) e^{-\varepsilon|\ell|}\leq  ||A_{\ell}||\leq g(\ell) e^{\varepsilon|\ell|}.
\end{equation}

\end{theorem}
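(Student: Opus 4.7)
The plan is to reduce Theorem~\ref{Thtransferasy} to the eigenfunction asymptotics of Theorem~\ref{Maintheoremdecay} via the Wronskian structure of the transfer-matrix cocycle, combined with an approximate reflective identity for the cocycle arising from the phase resonance. I would pick any second solution $\psi$ of $(H-E)\psi=0$ linearly independent of $\phi$, form $M(n) = \bigl(U_\phi(n)\mid U_\psi(n)\bigr)$ with $\det M(n)\equiv W(\phi,\psi)\neq 0$, and write $A_\ell = M(\ell)M(0)^{-1}$. The entrywise comparison then yields
\begin{equation*}
||A_\ell|| \asymp \max\bigl(||U_\phi(\ell)||,\,||U_\psi(\ell)||\bigr)
\end{equation*}
with implied constants depending only on $W$, $||U_\phi(0)||$, $||U_\psi(0)||$. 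Since $||U_\phi(\ell)||\asymp f(\ell)$ by Theorem~\ref{Maintheoremdecay}, the task reduces to producing $\psi$ realizing $||U_\psi(\ell)||\asymp g(\ell)$.

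An automatic lower bound $||A_\ell||\geq 1/f(\ell)$ is immediate from $\det A_\ell=1$ and $||A_\ell U_\phi(0)||\asymp f(\ell)$: the minimum singular value of $A_\ell$ is at most $f(\ell)$, hence $||A_\ell||\geq 1/f(\ell)$. In Cases~1 and~3 of $g$ this already matches $g(\ell)$; for the matching upper bound I would choose $\psi$ with $U_\psi(0)\perp U_\phi(0)$ of unit norm, so that $W=O(1)$, and then the Wronskian identity combined with the universal Lyapunov upper bound $||U_\psi(\ell)||\leq e^{(\ln\lambda+\varepsilon)|\ell|}$ pins $||U_\psi(\ell)||\asymp 1/f(\ell)$ in these ranges, closing both bounds.

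The novel input is Case~2 ($|x_0|\leq|\ell|\leq 2|x_0|$, same sign), where $g(\ell)$ acquires the extra term $e^{|2x_0-\ell|\ln\lambda}$, dominating $1/f(\ell)\asymp e^{(\ln\lambda-\eta)|\ell|}$ when $\ell$ is close to $x_0$. Its source is the approximate reflection symmetry $v(\theta+n\alpha)=v(\theta+(x_0-n)\alpha)+O(e^{-\eta|x_0|})$ about $x_0/2$, obtained from the resonance condition $|\sin\pi(2\theta+x_0\alpha)|\leq e^{-\eta|x_0|}$ together with the evenness of $v$. This symmetry yields an approximate reflective identity for the cocycle that effectively converts the forward growth of the transfer matrix over distance $|2x_0-\ell|$ at the reflected phase (which lies outside its own resonance window and hence sits in Case~1 of $g$ for that phase, growing at the full Lyapunov rate $e^{|2x_0-\ell|\ln\lambda}$) into a contribution to $||A_\ell(\theta)||$. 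Concretely, I would construct a direction $v_*\in\R^2$ arising from the initial data of a reflected approximate eigenfunction (centered near $2x_0-k_0$) along which $A_\ell$ stretches by $\sim e^{|2x_0-\ell|\ln\lambda}$, producing the required lower bound; combining with the Wronskian-based $1/f(\ell)$ contribution via the entrywise comparison gives $||A_\ell||\asymp g(\ell)$.

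The main obstacle will be controlling the propagation of the symmetry error. A per-step error of $O(e^{-\eta|x_0|})$ in the reflective identity could a priori amplify through $O(|x_0|)$ cocycle steps at the Lyapunov rate $\ln\lambda$, overwhelming the main term when $\eta\ll\ln\lambda$. The resolution parallels the key estimates in the proof of Theorem~\ref{Maintheoremdecay}: the error is structured (proportional to $\phi$ evaluated at reflected sites) rather than generic, and its accumulation is controlled via Green's-function bounds on $[0, 2|x_0|]$, keeping the total propagated error below $e^{-\eta|x_0|/2}$ times the main term --- safely absorbable into the $e^{\varepsilon|\ell|}$ slack in the conclusion.
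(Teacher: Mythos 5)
Your framework (second solution $\psi$, Wronskian, reduce $||A_\ell||$ to eigensolution growth) is reasonable and overlaps with the paper's use of $\tilde U$ and the Last--Simon identity, but there are two genuine gaps that together undermine the plan. First, the claim that the automatic lower bound $1/f(\ell)$ already matches $g(\ell)$ throughout Case~1 of $g$ is false: when $x_0\cdot\ell>0$ and $|\ell|<|x_0|\lesssim|\ell|\bigl(1+\tfrac{\eta}{2\ln\lambda}\bigr)$ one still has $g(\ell)=e^{|\ell|\ln\lambda}$ while $f(\ell)\approx e^{-(2|x_0|-|\ell|)\ln\lambda+\eta|\ell|}$, so $1/f(\ell)\approx e^{(2|x_0|-|\ell|)\ln\lambda-\eta|\ell|}<g(\ell)$. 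The tangency window straddles the boundary $|\ell|=|x_0|$, not just Case~2, so your ``automatic lower bound suffices'' step fails on part of Case~1, and a dedicated argument is still needed there. Second, and more seriously, your route to the matching \emph{upper} bound does not close: the Wronskian gives only $||U_\psi(\ell)||\geq|c|/||U_\phi(\ell)||$ (a lower bound on $\psi$), and the universal Lyapunov estimate gives $||U_\psi(\ell)||\leq e^{(\ln\lambda+\varepsilon)|\ell|}$, but $g(\ell)\approx e^{(\ln\lambda-\eta)|\ell|}$ in Case~3 (and in part of Case~2), which is strictly smaller when $\eta>0$. Nothing in your sketch produces the improved upper bound $||A_\ell||\leq e^{(\ln\lambda-\eta+\varepsilon)|\ell|}$; the paper gets it from the Last--Simon subordinacy inequalities (their (8.5) and (8.7)), which bound $||A_k||$ from above in terms of a tail sum $\sum_{p\geq k}||A_p||^{-2}$ and the already-established eigenfunction asymptotics at $2x_0$. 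Without some analogue of that subordinacy input, the upper half of the theorem is unproven.

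On the lower bound in the tangency region, your reflective-cocycle-identity idea is genuinely different from what the paper does, and the error-propagation concern you flag is real but not resolved by your sketch. The paper instead uses a much more elementary ``balancing point'' argument: pick $k_0=x_0\bigl(1-\tfrac{\eta'}{2\ln\lambda}\bigr)$, where the two terms in the bound for $||U_\phi(k_0)||$ (coming from $0$ and from $x_0$) are equal, so that $||U_\phi(k_0)||\lesssim e^{-(\ln\lambda-\eta'/2)x_0}$ with no slack; then $||A_{k_0}||\geq||U_\phi(k_0)||^{-1}$, and feeding this into the two-sided interpolation inequality of Theorem \ref{Th.new2transfer} forces $||A_{x_0}||\gtrsim e^{\ln\lambda x_0}$. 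This sidesteps any reflective identity and any worry about exponential amplification of the $O(e^{-\eta|x_0|})$ symmetry error over $O(|x_0|)$ steps --- a worry that, with $\eta$ much smaller than $\ln\lambda$, is in fact fatal for a naive propagation of the reflective identity as you propose, absent the structured-error cancellation you gesture at but do not make precise.
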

Let $\psi(\ell)$ denote  any solution to $
H_{\lambda,\alpha,\theta}\psi=E\psi$ that is linearly independent with $\phi(\ell)$. Let $\tilde{U} (\ell)=\left(\begin{array}{c}
        \psi(\ell)\\
       \psi(\ell-1)
     \end{array}\right)
 $. An immediate counterpart of (\ref{Gtransferasy}) is the following
\begin{corollary}\label{C.anysolution}
Under the conditions of Theorem \ref{Maintheoremdecay},  vectors  $\tilde{U}(\ell)$ satisfy
\begin{equation}\label{G.anysolution}
   g(\ell)e^{-\varepsilon|\ell|} \leq ||\tilde{U}(\ell)||\leq g(\ell)e^{\varepsilon|\ell|}.
\end{equation}
\end{corollary}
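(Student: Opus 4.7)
The idea is to derive the bounds on $\tilde U(\ell)$ from the ones already established for $\|A_\ell\|$ and $\|U(\ell)\|$, via $\tilde U(\ell)=A_\ell\tilde U(0)$. The upper bound in \eqref{G.anysolution} is immediate from Theorem \ref{Thtransferasy}:
\begin{equation*}
\|\tilde U(\ell)\|\leq\|A_\ell\|\,\|\tilde U(0)\|\leq g(\ell)\,e^{\varepsilon|\ell|},
\end{equation*}
the constant $\|\tilde U(0)\|$ being absorbed into the exponential for $|\ell|$ large.

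For the lower bound, I would use $U(0)\neq 0$ (else $\phi\equiv 0$) together with the linear independence of $\phi$ and $\psi$ to decompose $\tilde U(0)=s\,U(0)+t\,v$, where $v$ is a unit vector perpendicular to $U(0)$ and $t\neq 0$. Then $\tilde U(\ell)=s\,U(\ell)+t\,A_\ell v$, so it suffices to bound $\|A_\ell v\|$ from below. Since each $A(\theta)$ has determinant $1$, $A_\ell\in SL(2,\mathbb{R})$, and the Frobenius-norm identity $\|A\|_F^2=\|A\|^2+\|A\|^{-2}$ applied to the orthonormal basis $\{U(0)/\|U(0)\|,\,v\}$ gives
\begin{equation*}
\|A_\ell v\|^2=\|A_\ell\|^2+\|A_\ell\|^{-2}-\frac{\|U(\ell)\|^2}{\|U(0)\|^2}.
\end{equation*}
Combining $\|A_\ell\|\geq g(\ell)e^{-\varepsilon|\ell|}$ from Theorem \ref{Thtransferasy} with $\|U(\ell)\|\leq f(\ell)e^{\varepsilon|\ell|}$ from Theorem \ref{Maintheoremdecay}, and noting that $f(\ell)/g(\ell)$ decays exponentially (see below), the negative term is negligible compared with $\|A_\ell\|^2$, so $\|A_\ell v\|\geq \tfrac{1}{2}g(\ell)e^{-\varepsilon|\ell|}$ for $|\ell|$ sufficiently large. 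Substituting back,
\begin{equation*}
\|\tilde U(\ell)\|\geq|t|\,\|A_\ell v\|-|s|\,\|U(\ell)\|\geq\tfrac{|t|}{2}\,g(\ell)e^{-\varepsilon|\ell|}-|s|\,f(\ell)e^{\varepsilon|\ell|},
\end{equation*}
and the first term dominates since $f\ll g$, yielding \eqref{G.anysolution} after rescaling $\varepsilon$.

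The one non-routine point is the exponential decay of $f/g$. This follows from a short case check using the bound $\eta|\ell|\leq(\delta+\varepsilon)|x_0|$ (a consequence of the minimality of $x_0$ over $|x|\leq 2|\ell|$ together with the definition of $\delta$), together with $|x_0|\leq 2|\ell|$ and the standing hypothesis $\ln|\lambda|>\delta$: in each of the three cases in the definition of $g$ one verifies $\ln(f/g)\leq -\mu|\ell|$ for some $\mu>0$. It is worth emphasizing that the product $f(\ell)\cdot g(\ell)$ itself is \emph{not} bounded---it can grow like $e^{\eta|\ell|}$ in the borderline range $|x_0|\approx|\ell|$, $x_0\ell>0$---so one really needs the ratio estimate rather than the tempting but false $fg\lesssim 1$, and this is what makes the argument nontrivial.
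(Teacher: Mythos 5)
Your argument is correct and, at heart, it is the same argument as the paper's, though you re-derive a step the paper simply cites. The paper's proof is a one-liner: combine Theorem \ref{Thtransferasy} with (\ref{G.last}), the Last--Simon subordinacy inequality $\|A_\ell\|\ge\|A_\ell\tilde U(0)\|\ge c\|A_\ell\|$ (valid because $\phi\in\ell^2$ forces the most-contracted direction of $A_\ell$ to align with $U(0)$, and $\tilde U(0)$ is transverse to it). You instead prove this comparability from scratch: the upper bound by submultiplicativity, and the lower bound via the orthogonal split $\tilde U(0)=sU(0)+tv$ and the $SL(2,\mathbb R)$/Frobenius identity $\|A_\ell v\|^2=\|A_\ell\|^2+\|A_\ell\|^{-2}-\|U(\ell)\|^2$ (with $\|U(0)\|=1$), after which the exponential smallness of $f/g$ --- which you check case by case from the definitions, using $\eta|\ell|\lesssim(\delta+\varepsilon)|x_0|$ and $\ln|\lambda|>\delta$ --- makes the subtracted term negligible against $\|A_\ell\|^2\gtrsim g(\ell)^2e^{-2\varepsilon|\ell|}$. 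Your closing remark that $f\cdot g$ is \emph{not} bounded, so one genuinely needs the ratio $f/g$ to be small rather than a crude product bound, is correct and worth keeping. The one caveat: this isn't a route around Last--Simon globally, since Theorem \ref{Thtransferasy} itself is proved in the paper using (\ref{G.last}) and Last--Simon's (8.5)--(8.7); but as a proof of the corollary from the stated ingredients your version is valid, self-contained, and a bit more explicit than the paper's citation.
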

Our analysis also gives
\begin{corollary}\label{densityco}
                   Under the conditions of Theorem \ref{Maintheoremdecay},
                     we have,
\begin{enumerate}
\item [i)]
 $$ \limsup_{k\to \infty}\frac{\ln ||A_k||}{k}=\limsup_{k\to \infty}\frac{-\ln||U(k)||}{k}=\ln|\lambda|,$$

\item[ii)]
 $$ \liminf_{k\to \infty}\frac{\ln ||A_k||}{k}=\liminf_{k\to \infty}\frac{-\ln||U(k)||}{k}=\ln|\lambda|-\delta.$$
\item[iii)] outside a  sequence of lower density $1/2$,
                      \begin{equation}\label{densityE}
                       \lim_{k\to \infty}\frac{-\ln||U(k)||}{|k|}=\ln|\lambda|,
                      \end{equation}
                    \item[iv)]
                         outside a sequence of  lower density $0$,
                      \begin{equation}\label{densityT}
                       \lim_{k\to \infty}\frac{\ln ||A_k||}{|k|}=\ln|\lambda|.
                      \end{equation}
\end{enumerate}
\end{corollary}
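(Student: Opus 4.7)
The plan is to deduce all four parts of Corollary~\ref{densityco} from the sharp pointwise asymptotics $\|U(\ell)\|\asymp f(\ell)$ and $\|A_\ell\|\asymp g(\ell)$ furnished by Theorems~\ref{Maintheoremdecay} and \ref{Thtransferasy}, by carefully tracking how the auxiliary data $x_0(\ell)$ and $\eta(\ell)$ entering $f,g$ vary with $\ell$. The key structural observation is that $x_0(\ell)$ depends only on $|\ell|$ and is piecewise constant, jumping only when $|\ell|$ reaches $|K_j|/2$ for a resonance $K_j$ stronger than all its predecessors, and between jumps the product $\eta(\ell)|\ell| = -\ln|\sin\pi(2\theta+x_0\alpha)|$ is constant. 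By the Diophantine condition \eqref{Diokt}---equivalently the growth \eqref{KiKi-1}---consecutive jump points satisfy $|K_{j+1}|\ge c e^{c|K_j|}$, so each interval $I_j := \{\ell : |K_j|/2 \le |\ell| < |K_{j+1}|/2\}$ hyperexponentially dominates its predecessors.

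For part~(i), the universal bounds $f(\ell) \ge e^{-|\ell|\ln\lambda}$ (the second summand in Case~2, or Case~1) and $g(\ell) \le 2 e^{|\ell|\ln\lambda}$ (since $\eta \ge 0$ and $|2x_0-\ell| \le |\ell|$) immediately give $\limsup \le \ln\lambda$ in both assertions. For the matching lower bounds I pick subsequences on which the formulas collapse to pure exponentials: whenever $\ell \cdot x_0(\ell) \le 0$ we are in Case~1 of both $f$ and $g$, with $f(\ell) = e^{-|\ell|\ln\lambda}$ and $g(\ell) = e^{|\ell|\ln\lambda}$ exactly.

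For part~(ii), the second and third bullets in the Remark after Theorem~\ref{Maintheoremdecay} directly give $\liminf(-\ln\|U(k)\|/k) = \ln\lambda - \delta$, via $f(\ell) \le e^{-(\ln\lambda-\delta-\varepsilon)|\ell|}$ uniformly and $f(\ell_i) \ge e^{-(\ln\lambda-\delta+\varepsilon)|\ell_i|}$ along the resonant subsequence. To transfer this to $\|A_k\|$, I would apply the $SL(2,\mathbb{R})$ identity $\|A_\ell\| = \|A_\ell^{-1}\| \ge \|U(0)\|/\|U(\ell)\|$ to obtain $\liminf \ln\|A_k\|/k \ge \ln\lambda - \delta$, and read the matching upper bound off $g$ on a subsequence in Case~2 of $g$ near $|\ell|\approx |K_j|\bigl(1+\delta_j/(2\ln\lambda)\bigr)$ with $\delta_j \to \delta$, where the two exponential contributions $e^{(\ln\lambda-\eta)|\ell|}$ and $e^{|2x_0-\ell|\ln\lambda}$ balance at the target exponent.

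For parts~(iii)--(iv) I again work interval by interval. Inside each $I_j$, the subset $\{\ell \in I_j : \ell \cdot x_0 \le 0\}$ has density $1/2$, and on its union (whose complement has lower density $1/2$) Case~1 of $f$ gives $-\ln\|U(\ell)\|/|\ell| \to \ln\lambda$, yielding (iii). For (iv), $g(\ell)$ fails to equal $e^{|\ell|\ln\lambda}$ only on the transition window $|\ell| \in (|K_j|, 2|K_j|)$ of Case~2 of $g$, which has length $|K_j|$ inside $I_j$ of length $\sim |K_{j+1}|$; by \eqref{KiKi-1} the union over $j$ has lower density $0$, and on its complement Case~1 or Case~3 with $\eta = \eta_0 |K_j|/|\ell| \to 0$ gives $\ln\|A_\ell\|/|\ell| \to \ln\lambda$. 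The main obstacle is the Case~2 balance step for $\|A_k\|$ in~(ii): whereas $f$ attains $e^{-(\ln\lambda-\delta)|\ell|}$ already at the resonances, the analogous sharp upper bound for $g$ requires working slightly off-resonance at the correct scale, and this in turn relies on the hyperexponential separation of the $K_j$ to ensure that $x_0(\ell)$ remains locked at the chosen $K_j$ throughout the balance window.
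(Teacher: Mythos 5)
Your overall strategy mirrors the paper's (the paper also works directly from the asymptotics of $f$ and $g$ and from a fixed resonance sequence $n_j$ for the density claims), but there is a concrete gap in part~(ii) for $\|A_k\|$. You propose to read the matching upper bound off $g$ at the balance point $\ell^* \approx |K_j|\bigl(1+\delta_j/(2\ln\lambda)\bigr)$ in Case~2, asserting that the two summands of $g$ balance "at the target exponent" $\ln\lambda-\delta$ per unit $|\ell|$. Carrying out the computation, however, gives a strictly larger number. Write $\eta_0=-\ln|\sin\pi(2\theta+x_0\alpha)|$, so that $\eta|\ell|=\eta_0$, and set $\delta_j=\eta_0/|x_0|$. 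The balance condition $\ln\lambda\,\ell-\eta_0=(2x_0-\ell)\ln\lambda$ gives $\ell^*=x_0+\eta_0/(2\ln\lambda)$, and the common exponent there is $\ln\lambda\,x_0-\eta_0/2$. Dividing,
\[
\frac{\ln g(\ell^*)}{\ell^*}\approx\frac{\ln\lambda\,x_0-\eta_0/2}{x_0+\eta_0/(2\ln\lambda)}=\ln\lambda\,\frac{2\ln\lambda-\delta_j}{2\ln\lambda+\delta_j},
\]
which exceeds $\ln\lambda-\delta_j$ by $\delta_j^2/(2\ln\lambda+\delta_j)>0$ for $\delta_j>0$. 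Since this is also the \emph{minimum} of $\ln g(\ell)/|\ell|$ over $\ell$ within a resonance block (the two branches of the $\max$ move monotonically in opposite directions), you do not get $\liminf \ln\|A_k\|/k\le \ln\lambda-\delta$ out of the stated formula for $g$ by balancing. Your lower bound $\liminf\ln\|A_k\|/k\ge\ln\lambda-\delta$ via $\|A_\ell\|\ge 1/\|U(\ell)\|$ is fine; it is the matching upper bound that your argument fails to produce, and this is exactly the step you flag as the main obstacle. It needs a genuinely different idea (the paper itself supplies no detail here, stating only that (i) and (ii) follow ``directly'' from Theorem~\ref{Thtransferasy} and Corollary~\ref{C.anysolution}, which is at odds with the above calculation and should be double-checked against the definition of $g$).

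For (iii) and (iv) your mechanism differs from the paper's. You use the sign dichotomy $\ell\cdot x_0\le 0$ (Case~1 of $f$, resp.\ $g$) to locate a set of density $1/2$ (resp.\ density nearly $1$) on which the formulas are pure exponentials. The paper instead fixes a resonance subsequence $n_j$ with a two-sided bound on $\|2\theta+n_j\alpha\|$, and works on the intervals $|k|\in[\varepsilon_2 n_{j+1},\,n_{j+1}/2]$ (resp.\ $[\varepsilon_2 n_{j+1},n_{j+1}]$) where $\eta$ is forced to be tiny regardless of the sign of $k$, then sends $\varepsilon_1,\varepsilon_2\to0$. Both routes give the stated (non-sharp) density bounds; yours is perhaps a bit cleaner for~(iii) but is sign-dependent and so requires a word about how the $n_j$ may alternate in sign, whereas the paper's argument is sign-agnostic. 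Also, in (iv) the window where $g(\ell)\ne e^{|\ell|\ln\lambda}$ is not only $|\ell|\in(|K_j|,2|K_j|)$: Case~3 still has $g(\ell)=e^{\ln\lambda|\ell|-\eta_0}$ for all $|\ell|>2|K_j|$ up to the next jump, and the ratio is only $\ln\lambda-\eta_0/|\ell|\to\ln\lambda$ gradually, so the bad window is really $|\ell|\in(|K_j|,C|K_j|)$ for a large constant $C$; this is harmless for the density-$0$ conclusion but should be stated correctly.
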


Thus our analysis presents the second, after \cite{jl1}, study of the
dynamics of Lyapunov-Perron non-regular points, in a natural
setting. It is interesting to remark that (\ref{densityT}) also holds
throughout the pure point regime of \cite{jl1}. As in \cite{jl1}, the
fact that $g$ is not always the reciprocal of $f$ leads to exponential tangencies between
contracted and expanded directions with the rate
approaching $-\delta$ along a subsequence. Tangencies are an attribute of
 nonuniform hyperbolicity and are usually viewed as a difficulty to avoid
 through e.g. the parameter exclusion (e.g. \cite{bc,lsy,bj}). Our
 analysis allows to study them in detail and uncovers the hierarchical
structure of exponential
 tangencies positioned precisely at resonances. This will be explored
 in the future work. Finally we mention that the methods developed in  this paper have made it
 possible to determine the {\it exact} exponent of the exponential decay rate in
expectation for the two-point function \cite{jkl}, the first result of
this kind for any model.

The rest of this paper is organized  as follows. We list the
definitions and standard preliminaries in Section \ref{pre}.  Section \ref{loc}
is devoted to the  upper bound on the generalized eigenfunction in Theorem \ref{Maintheoremdecay}, establishing sharp
upper bounds for any eigensolution in the resonant case. This part of
the proof  has two technical ingredients similar to the arguments
used previously to prove localization in presence of
exponential frequency resonances. We present a universal version
of these statements in Theorem \ref{universalth} (a uniformity
statement for any Diophantine $\alpha$) and Theorem \ref{blockth} (a resonant block
expansion theorem for any one-dimensional operator), proved
correspondingly in   Appendices  A and B. Those statements can be of
use for proving localization for other models.  The rest of the argument is
based on new ideas specific to the phase resonance situation.
In Section \ref{js94} we prove the  sharp transition - Theorem \ref{Maintheorem}, and
lower bound on the generalized eigenfunctions in Theorem
\ref{Maintheoremdecay}. The part on the singular continuous spectrum, in particular, requires a new approach
to the palindromic argument  in order to remove a factor of four
inherent in the previous proofs, and the sharp lower bound in the
localization regime requires an even more delicate approach.
 In Section \ref{refl}, we use the local version of Theorem
 \ref{Maintheoremdecay} and establish reflective hierarchical structure of {\it
   resonances}  to prove the
reflective hierarchical structure Theorem       \ref{Universalend}. In Section \ref{transfer}, we study the growth of
transfer matrices  and  prove Theorem
\ref{Thtransferasy}, and Corollaries \ref{C.anysolution} and
\ref{densityco}. Except for the (mostly standard) statements listed in the preliminaries
and Lemma \ref{a1} this paper is entirely self-contained.

 \section{Preliminaries}\label{pre}
Without loss of generality, we assume $\lambda>1$ and $\ell>0$.
If $2\theta\in \alpha\Z+\Z$, then $\delta(\alpha,\theta)=0$, in which case the result  follows from \cite{j05}. Thus
in what follows we always assume        $2\theta\notin \alpha\Z+\Z$.

For any solution  of $H_{\lambda,\alpha,\theta}\varphi= E\varphi$,
 we have for   any $k,m$,
 \begin{equation}\label{G.new17}
    \left(\begin{array}{c}
                                                                                            \varphi(k+m) \\
                                                                                           \varphi(k+m-1)                                                                                        \end{array}\right)
                                                                                           =A_{k}(\theta+m\alpha)
 \left(\begin{array}{c}
                                                                                            \varphi(m) \\
                                                                                          \varphi(m-1)                                                                                        \end{array}\right).
 \end{equation}
 The Lyapunov exponent 
is given  by
 \begin{equation}\label{G21}
    L(E)=\lim_{k\rightarrow\infty} \frac{1}{k}\int_{\mathbb{R}/\mathbb{Z}} \ln \| A_k(\theta)\|d\theta.
 \end{equation}
The Lyapunov exponent can be computed precisely for $E$ in the
spectrum of $H_{\lambda,\alpha,\theta}$. We denote the spectrum by
$\Sigma_{\lambda,\alpha}$ (it  does not depend on $\theta$).
 \begin{lemma}\cite{MR1933451}\label{lya}
For $E\in \Sigma_{\lambda,\alpha}$ and $\lambda>1$, we have
 $L(E)=\ln\lambda$.
 \end{lemma}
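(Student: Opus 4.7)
The plan is to establish two matching inequalities: the a priori lower bound $L(E)\ge \ln\lambda$ valid for every $E$, and the reverse bound $L(E)\le\ln\lambda$ which uses that $E$ lies in the spectrum.

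For the lower bound I would apply Herman's subharmonicity trick. Since $A(\theta)$ extends holomorphically to $\theta\in\mathbb{C}$, so does $A_k(\theta)$, and
\begin{equation*}
L_k(y)\;=\;\int_0^1\tfrac{1}{k}\log\|A_k(\theta+iy)\|\,d\theta
\end{equation*}
is convex in $y\in\R$ (by plurisubharmonicity of $\log\|A_k\|$) and even in $y$ (by the symmetry of the cosine under $\theta\mapsto-\theta$ combined with averaging). For $y\to+\infty$ one has $2\cos2\pi(\theta+iy)=e^{2\pi y-2\pi i\theta}+e^{-2\pi y+2\pi i\theta}$, whose first term dominates; extracting this rank-one contribution from each factor of the product yields
\begin{equation*}
L_k(y)\;=\;\ln\lambda+2\pi y+o_{y\to\infty}(1),
\end{equation*}
uniformly in $k$. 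Convexity and evenness then force $L_k(0)\ge \ln\lambda-o(1)$, and passing to $k\to\infty$ delivers $L(E)\ge \ln\lambda$.

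For the upper bound I would invoke Aubry duality. A direct rescaling of the cocycle together with the identification $\Sigma_{\lambda,\alpha}=\lambda\cdot\Sigma_{1/\lambda,\alpha}$ (which follows from the unitary equivalence between $H_{\lambda,\alpha,\theta}$ and the dual operator at coupling $1/\lambda$) gives the symmetry $L_\lambda(E)=L_{1/\lambda}(E/\lambda)+\ln\lambda$. Since $\lambda>1$, the dual coupling $1/\lambda$ is subcritical, and Kotani theory combined with the vanishing of the Lyapunov exponent on the a.c.\ spectrum of $H_{1/\lambda,\alpha,\theta}$ yields $L_{1/\lambda}(E')=0$ for every $E'\in\Sigma_{1/\lambda,\alpha}$; the matching upper bound $L_\lambda(E)\le\ln\lambda$ on $\Sigma_{\lambda,\alpha}$ follows.

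The main obstacle is the upper bound. The lower bound is a general subharmonicity phenomenon that uses no information about the spectrum, but equality requires the special algebraic structure of the AMO via Aubry duality, and an external input that the Lyapunov exponent vanishes on the spectrum in the subcritical regime $|\lambda|<1$. One must be careful to supply this input independently (for instance, via the Deift--Simon/Kotani result that the a.c.\ spectrum is the essential closure of $\{L=0\}$) rather than via the a.c.\ spectrum statement in Theorem \ref{Maintheorem}(3), to avoid circularity with the machinery that the rest of the paper builds on Lemma \ref{lya}.
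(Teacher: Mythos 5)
The paper supplies no proof of this lemma; it cites \cite{MR1933451} (Bourgain--Jitomirskaya), whose main theorem is continuity of $L(E,\alpha)$ at irrational $\alpha$, with the exact formula $L=\max\{\ln|\lambda|,0\}$ on the spectrum obtained as a corollary. Your two-bound plan is the right shape, and the lower bound is fine. A small remark there: for the AMO the cleanest form of Herman's trick avoids the convexity-and-evenness detour altogether, observing that $z\mapsto zA(\theta)$ (with $z=e^{2\pi i\theta}$) is polynomial in $z$, so subharmonicity of $\log\|\cdot\|$ gives $\frac1k\int\log\|A_k\|\,d\theta\ge\frac1k\log\|B_k(0)\|=\ln\lambda$ directly for every $k$; your evenness claim is true but follows from Schwarz reflection $A_k(\theta-iy)=\overline{A_k(\theta+iy)}$, not really from the $\theta\mapsto-\theta$ symmetry you invoke.

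The genuine gap is in the upper bound. The duality identity $L_\lambda(E)=L_{1/\lambda}(E/\lambda)+\ln\lambda$ (from Thouless together with $dN_\lambda(E)=dN_{1/\lambda}(E/\lambda)$) and $\Sigma_{\lambda,\alpha}=\lambda\Sigma_{1/\lambda,\alpha}$ are correct. But your claim that Kotani theory plus vanishing of $L$ on the a.c.\ spectrum ``yields $L_{1/\lambda}(E')=0$ for every $E'\in\Sigma_{1/\lambda,\alpha}$'' does not hold as stated. Deift--Simon/Kotani gives $\Sigma_{ac}=\overline{\{L=0\}}^{\mathrm{ess}}$ and $L=0$ for Lebesgue-a.e.\ $E$ in the a.c.\ spectrum: a statement about a.e.\ points of the a.c.\ part, not about every point of $\Sigma$. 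To promote it to $L_{1/\lambda}\equiv 0$ on all of $\Sigma_{1/\lambda,\alpha}$ you need two further inputs that your proposal does not supply: \textbf{(i)} that the a.c.\ spectrum of the subcritical dual fills out the whole spectrum (this is Avila's theorem \cite{av}, which postdates the cited lemma and is itself nontrivial), and \textbf{(ii)} continuity of $E\mapsto L(E)$, which is exactly the main result of the reference \cite{MR1933451} you are implicitly avoiding. Your worry about circularity with Theorem \ref{Maintheorem}(3) is misplaced --- Avila's a.c.\ result does not rely on this lemma --- but replacing it by Kotani alone leaves the argument open. The correct takeaway is that this lemma is a corollary of Bourgain--Jitomirskaya's continuity theorem together with duality (and either a rational-approximation computation or the a.c.\ input), and cannot be recovered from subharmonicity and Kotani theory by themselves.
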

 Recall that we always assume $E\in \Sigma_{\lambda,\alpha}$, so by
 upper semicontinuity and unique ergodicity,
one has
\begin{equation}\label{G23}
   \ln\lambda=\lim_{k\rightarrow\infty} \sup_{\theta\in\mathbb{R}/ \mathbb{Z}}\frac{1}{k} \ln \| A_k(\theta)\|,
\end{equation}
that is,  the convergence in (\ref{G23}) is  uniform   with respect to  $\theta\in\mathbb{R}$.
 Precisely, $ \forall \varepsilon >0$,
\begin{equation}\label{G24}
  \| A_k(\theta)\|\leq e^{(\ln\lambda+\varepsilon)k},  \text {for  }k  \text { large enough}.
\end{equation}


We start with the basic setup going back to
\cite{MR1740982}. Let us denote
$$ P_k(\theta)=\det(R_{[0,k-1]}(H_{\lambda,\alpha,\theta}-E) R_{[0,k-1]}).$$
It is easy to check  that
\begin{equation}\label{G34}
 A_{k}(\theta)=
\left(
  \begin{array}{cc}
   P_k(\theta) &- P_{k-1}(\theta+\alpha)\\
    P_{k-1}(\theta) & - P_{k-2}(\theta+\alpha) \\
  \end{array}
\right).
\end{equation}

    \par
    By Cramer's rule 
 for given  $x_1$ and $x_2=x_1+k-1$, with
     $ y\in I=[x_1,x_2] \subset \mathbb{Z}$,  one has
     \begin{eqnarray}
       |G_I(x_1,y)| &=&  \left| \frac{P_{x_2-y}(\theta+(y+1)\alpha)}{P_{k}(\theta+x_1\alpha)}\right|,\label{Cramer1}\\
       |G_I(y,x_2)| &=&\left|\frac{P_{y-x_1}(\theta+x_1\alpha)}{P_{k}(\theta+x_1\alpha)} \right|.\label{Cramer2}
     \end{eqnarray}
By  (\ref{G24})  and (\ref{G34}), the numerators in  (\ref{Cramer1}) and (\ref{Cramer2}) can be bounded uniformly with respect to $\theta$. Namely,
for any $\varepsilon>0$,
\begin{equation}\label{Numerator}
    | P_k(\theta)|\leq e^{(\ln \lambda+\varepsilon)k}
\end{equation}
for $k$ large enough.
\begin{definition}\label{Def.Regular}
Fix $\tau > 0$. A point $y\in\mathbb{Z}$ will be called $(\tau,k)$ regular  if there exists an
interval $[x_1,x_2]$  containing $y$, where $x_2=x_1+k-1$, such that
\begin{equation*}
  | G_{[x_1,x_2]}(y,x_i)|<e^{-\tau|y-x_i|} \text{ and } |y-x_i|\geq \frac{1}{40} k \text{ for }i=1,2.
\end{equation*}
\end{definition}

It is  easy to check that for any solution  of $H_{\lambda,\alpha,\theta}\varphi= E\varphi$, 
 \begin{equation}\label{Block}
   \varphi(x)= -G_{[x_1 ,x_2]}(x_1,x ) \varphi(x_1-1)-G_{[x_1 ,x_2]}(x,x_2) \varphi(x_2+1),
 \end{equation}
 where  $ x\in I=[x_1,x_2] \subset \mathbb{Z}$.


       \begin{definition}
     We  say that the set $\{\theta_1, \cdots ,\theta_{k+1}\}$ is $ \epsilon$-uniform if
      \begin{equation}\label{Def.Uniform}
        \max_{ x\in[-1,1]}\max_{i=1,\cdots,k+1}\prod_{ j=1 , j\neq i }^{k+1}\frac{|x-\cos2\pi\theta_j|}
        {|\cos2\pi\theta_i-\cos2\pi\theta_j|}<e^{k\epsilon}.
      \end{equation}
     \end{definition}
      Let $A_{k,r}=\{\theta\in\mathbb{R} \;|\;P_k(\cos2\pi  ( \theta -\frac{1}{2}(k-1)\alpha )  )|\leq e^{(k+1)r}\} $ with $k\in \mathbb{N}$ and $r>0$.
     We have the following Lemma.
      \begin{lemma}\label{Le.Uniform}(\text{Lemma 9.3 },\cite{MR2521117})
      Suppose  $\{\theta_1, \cdots ,\theta_{k+1}\}$ is  $ \epsilon_1$-uniform. Then there exists some $\theta_i$ in set  $\{\theta_1, \cdots ,\theta_{k+1}\}$ such that
     $\theta_i\notin A_{k,\ln\lambda-\epsilon}$  if   $ \epsilon>\epsilon_1$ and $ k$
      is sufficiently large.
      \end{lemma}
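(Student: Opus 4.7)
The plan is to set up a Lagrange interpolation contradiction on the polynomial representation of $P_k$, using Chebyshev's theorem to produce a lower bound that clashes with the $\epsilon_1$-uniform upper bound. First I would reduce $P_k$ to an honest polynomial. By reading off the indices, $P_k(\theta)$ is symmetric about $\frac{k-1}{2}\alpha$, that is, $P_k(\theta)=P_k\bigl((k-1)\alpha-\theta\bigr)$, so the shifted function $\widetilde{P}_k(\theta):=P_k\bigl(\theta+\tfrac{k-1}{2}\alpha\bigr)$ is even in $\theta$, and therefore admits a representation $\widetilde{P}_k(\theta)=Q(\cos 2\pi\theta)$ for a polynomial $Q$ of degree $k$. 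Expanding the determinant along the diagonal and collecting highest Fourier frequencies shows that the leading coefficient of $Q$ has absolute value $(2\lambda)^k$.

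Next I would run a Lagrange interpolation argument. Suppose, toward a contradiction, that every $\theta_i$ belongs to $A_{k,\ln\lambda-\epsilon}$, which by definition gives $|Q(\cos 2\pi\theta_i)|\leq e^{(k+1)(\ln\lambda-\epsilon)}$ for $i=1,\dots,k+1$. Setting $x_i=\cos 2\pi\theta_i$, the exact Lagrange interpolation formula for a degree $k$ polynomial on $k+1$ nodes yields, for all $x\in[-1,1]$,
\begin{equation*}
|Q(x)|\;\leq\;\sum_{i=1}^{k+1}|Q(x_i)|\prod_{j\neq i}\frac{|x-x_j|}{|x_i-x_j|}\;\leq\;(k+1)\,e^{(k+1)(\ln\lambda-\epsilon)}\,e^{k\epsilon_1},
\end{equation*}
where the last factor comes directly from the $\epsilon_1$-uniformity hypothesis (\ref{Def.Uniform}).

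Now I would apply Chebyshev's extremal theorem to extract a matching lower bound. Since $Q$ has degree $k$ with leading coefficient of modulus $(2\lambda)^k$, the monic polynomial $Q/(\pm(2\lambda)^k)$ has sup norm on $[-1,1]$ at least $2^{1-k}$ (the Chebyshev minimum), hence $\max_{x\in[-1,1]}|Q(x)|\geq 2\lambda^k=2e^{k\ln\lambda}$. Combining with the Lagrange bound,
\begin{equation*}
2e^{k\ln\lambda}\;\leq\;(k+1)\,e^{(k+1)(\ln\lambda-\epsilon)+k\epsilon_1},
\end{equation*}
which after taking logarithms and dividing by $k$ forces $\ln\lambda\leq \ln\lambda-\epsilon+\epsilon_1+O(\log k/k)$, i.e.\ $\epsilon\leq\epsilon_1+o(1)$. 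Since $\epsilon>\epsilon_1$ is fixed, this is a contradiction for all sufficiently large $k$, so at least one $\theta_i$ must lie outside $A_{k,\ln\lambda-\epsilon}$.

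The only genuinely delicate step is the polynomial reduction at the outset: one must verify both the symmetry of $P_k$ about $\frac{k-1}{2}\alpha$ and, more importantly, the exact modulus $(2\lambda)^k$ of the leading coefficient of $Q$, because the Chebyshev lower bound is tight and any smaller leading coefficient would destroy the contradiction. Once this normalization is in hand, the interpolation/Chebyshev dichotomy is routine.
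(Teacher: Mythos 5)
Your proof is correct and is essentially the argument used in the cited reference \cite{MR2521117} (the paper itself states Lemma \ref{Le.Uniform} as a citation and does not reprove it): shift $P_k$ to a polynomial $Q$ of degree $k$ in $\cos 2\pi\theta$ with leading coefficient $(2\lambda)^k$, bound $\|Q\|_{L^\infty[-1,1]}$ from above via Lagrange interpolation on the $k+1$ nodes together with the $\epsilon_1$-uniformity estimate, and from below via the Chebyshev extremal polynomial, forcing $\epsilon\le\epsilon_1+o(1)$ if all nodes lie in $A_{k,\ln\lambda-\epsilon}$. The only small slip is the sign of the symmetry center: since the diagonal entries of $P_k$ are indexed by $[0,k-1]$, the palindromic identity is $P_k(\theta)=P_k(-\theta-(k-1)\alpha)$, so the even shift is $P_k\bigl(\theta-\tfrac{k-1}{2}\alpha\bigr)$ (consistent with the paper's definition of $A_{k,r}$), not $P_k\bigl(\theta+\tfrac{k-1}{2}\alpha\bigr)$; this is immaterial to the rest of the argument.
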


\section{Localization}\label{loc}

Let $\alpha$  be Diophantine and $\delta(\alpha,\theta)$ be given by (\ref{G.delta}).
                    Suppose $\ln |\lambda| >\delta(\alpha,\theta)$.
                    Recalling that for  $E$ a generalized eigenvalue of $H_{\lambda,\alpha,\theta}$ and  $\phi$
                     the corresponding generalized eigenfunction,
                    we denote  $ U(\ell) =\left(\begin{array}{c}
                             \phi(\ell)\\
                            \phi({\ell-1})
                          \end{array}\right)
                      $.
In this part we will prove the localization part  of Theorem \ref{Maintheorem} and the upper bound of Theorem \ref{Maintheoremdecay}.
That is
 \begin{theorem}\label{MaintheoremAL}

                    For any $\varepsilon>0$, there exists $K$ such that for any $|\ell|\geq K$,  $U(\ell)$ satisfies
                     \begin{equation}\label{G.upper}
                    ||U(\ell)||\leq f(\ell)e^{\varepsilon|\ell|}.
                      \end{equation}
                            In particular, $H_{\lambda,\alpha,\theta}$  satisfies Anderson localization,
                            and  the following upper bound holds for the generalized eigenfunction,
                            \begin{equation}\label{decayingupper}
                                 ||U(\ell)||\leq e^{-(\ln\lambda-\delta-\varepsilon)|\ell|}.
                            \end{equation}
                     \end{theorem}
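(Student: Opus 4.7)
The plan is to follow the block-resolvent strategy going back to \cite{MR1740982}, but adapted to deal with phase resonances. By the Shnol bound in \eqref{shn}, $\phi$ has at most polynomial growth. The block formula \eqref{Block} then shows that it suffices, for each sufficiently large $\ell$, to exhibit an interval $I=[x_1,x_2]\ni\ell$ with $\min(|\ell-x_1|,|\ell-x_2|)\geq c|\ell|$ and with both $|G_I(\ell,x_1)|$ and $|G_I(\ell,x_2)|$ bounded by $f(\ell)e^{o(|\ell|)}$. By Cramer's rule \eqref{Cramer1}--\eqref{Cramer2} and the uniform upper bound \eqref{Numerator} on the numerators $|P_{j}|$, the decay of $G_I$ is controlled by a lower bound on the denominator $|P_{k}(\theta+x_1\alpha)|$, which by Lemma \ref{Le.Uniform} follows once the set $\{\theta+(x_1+j)\alpha:j=0,\ldots,k\}$ is shown to be $\epsilon_1$-uniform in the sense of \eqref{Def.Uniform}.

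The new phenomenon is that uniformity can fail because of phase resonances: $\cos 2\pi(\theta+j\alpha)-\cos 2\pi(\theta+j'\alpha)$ factors as $-2\sin\pi(2\theta+(j+j')\alpha)\sin\pi(j-j')\alpha$, which is small whenever $j+j'$ is close to a resonance $K_i$ satisfying $\|2\theta+K_i\alpha\|_{\R/\Z}\ll 1$. The Diophantine condition on $\alpha$, together with the definition \eqref{G.delta} of $\delta$, guarantees that on scale $|\ell|$ at most one such resonance point $x_0$ can appear in $[-2|\ell|,2|\ell|]$, and this is exactly the $x_0$ entering the definition of $f$. In Case 1 ($x_0\cdot\ell\le 0$), I choose $I$ on the same side of $0$ as $\ell$ so that $I$ avoids both $0$ (where the eigenfunction has its "own" resonance via $\phi(0)$) and $x_0$; a uniformity statement of the kind abstracted in Theorem \ref{universalth} then delivers $|G_I|\le e^{-\ln\lambda\,|\ell|+\varepsilon|\ell|}=f(\ell)e^{\varepsilon|\ell|}$.

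Case 2 ($x_0\cdot\ell>0$) is where the main new input is needed. Here $\ell$ lies on the same side of $0$ as $x_0$, and any interval around $\ell$ that is large enough to support uniformity must contain the resonant pair $\{0,x_0\}$. Theorem \ref{blockth} is the abstract resonant block expansion that handles precisely this situation: it expresses $G_I$ as a sum over Green's functions of non-resonant sub-intervals obtained by cutting at the resonance, with boundary terms at the resonant junctions producing the factor $|\sin\pi(2\theta+x_0\alpha)|^{-1}=e^{\eta|\ell|}$. Inserting non-resonant decay $e^{-\ln\lambda\,(\text{subinterval length})}$ yields the contribution $e^{-(|x_0|+|\ell-x_0|)\ln\lambda}e^{\eta|\ell|}$, while the direct term avoiding $x_0$ contributes $e^{-|\ell|\ln\lambda}$; together these exactly match the definition of $f(\ell)$. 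Multiplying by the polynomial bound $|\phi(x_i\pm 1)|\lesssim |\ell|$ gives \eqref{G.upper}, from which \eqref{decayingupper} and hence Anderson localization follow via the estimate $f(\ell)\le e^{-(\ln\lambda-\delta-\varepsilon)|\ell|}$ noted in the Remark after Theorem \ref{Maintheoremdecay}.

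The main obstacle I expect is the execution of Case 2: one must verify quantitatively that the Diophantine hypothesis on $\alpha$, combined with $\ln\lambda>\delta$, rules out secondary resonances on the scale of $|\ell|$, so that the resonant block expansion closes after a single iteration and does not accumulate additional powers of $e^{\eta|\ell|}$. This requires carefully tracking the dependence of $\eta$ on the scale $|\ell|$ via \eqref{eta} and matching it against the sparsity of resonances coming from \eqref{KiKi-1}. Once this bookkeeping is in place, the two contributions to $G_I$ line up with the two terms in the definition of $f(\ell)$ in Case 2, and the proof concludes in parallel with the non-resonant case.
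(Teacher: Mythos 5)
Your general framework — block formula, Cramer's rule, uniformity in the sense of \eqref{Def.Uniform}, and the abstract Theorems \ref{universalth} and \ref{blockth} — is the right one, and your reading of Case 1 (choose the interval $I$ to avoid the resonant pair) is essentially correct. But there is a genuine gap in Case 2, and it is precisely where the paper's new idea lives.

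You write that Theorem \ref{blockth} ``expresses $G_I$ as a sum over Green's functions of non-resonant sub-intervals obtained by cutting at the resonance, with boundary terms at the resonant junctions producing the factor $|\sin\pi(2\theta+x_0\alpha)|^{-1}=e^{\eta|\ell|}$.'' That is not what Theorem \ref{blockth} does. It is a propagation-of-decay statement: given that every interior point $y$ of a block $[y_1,y_2]$ is $(\tau,k_1)$-regular, it bounds $r^\varphi_y$ by the values $r^\varphi_{y_1}$, $r^\varphi_{y_2}$ at the endpoints, multiplied by exponential decay. It never ``produces'' a factor $e^{\eta|\ell|}$; it merely transports whatever bound you already have at the singular endpoints. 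In the resonant case the singular endpoints are $0$ and $x_0$. The normalization gives $\|U(0)\|=1$, but the size of $\|U(x_0)\|$ is not supplied by the block expansion — it has to be established independently, and the factor $e^{\eta|\ell|}$ in $f(\ell)$ comes entirely from that estimate at $x_0$, propagated outward, not from anything Theorem \ref{blockth} does internally.

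The missing ingredient is Lemma \ref{Keylemma1}. Its content is the sharp bound $\|U^{\varphi}(k)\|\leq\max\{\|U^{\varphi}(0)\|,\|U^{\varphi}(2k)\|\}\,e^{-(\ln\lambda-t-\varepsilon)|k|}$ at the resonance $k$ with $|\sin\pi(2\theta+k\alpha)|=e^{-t|k|}$. Its proof is not a block expansion: it applies Theorem \ref{universalth} with the degraded uniformity constant $\tfrac{tk}{4sq_n}+\varepsilon$ (because the resonance makes $\min_{i,j}|\sin\pi(2\theta+(i+j)\alpha)|$ exponentially small), obtains a non-uniform point $j_0\in I_1\cup I_2$, and then must \emph{exclude} $j_0\in I_1$: if $j_0$ were in $I_1$, the same estimate translated by $-k$ would contradict $\|U^{\phi}(0)\|=1$. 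This reflection step — using the translation $I_1+k=I_2$ and the normalization at $0$ to pin down which block is bad — is the new phase-resonance mechanism, and nothing in your proposal replaces it. Once \eqref{x0} (i.e.\ $\|U(x_0)\|\leq e^{-(\ln\lambda-\varepsilon)|x_0|}e^{\eta\ell}$) is in hand, Lemma \ref{Keylemma} (your ``non-resonant decay'') then propagates it to $\ell$ exactly as you envision, and the two terms of $f(\ell)$ fall out. So your bookkeeping worry at the end is misplaced: the difficulty is not ruling out secondary resonances or iterating the block expansion, but establishing the bound at $x_0$ in the first place.
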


  By  Schnol's Theorem \cite{berezanskii1968expansions} if every generalized eigenfunction of $H$ decays
  exponentially,  then $H$  satisfies Anderson  localization.  Thus,
  by Remark \ref{rem}.2, in order to prove Theorem \ref{MaintheoremAL}, it suffices to prove the first
  part of  Theorem \ref{MaintheoremAL}.
  \par

  Without loss of generality assume $|\phi(0)|^2+|\phi(-1)|^2=1$. Let
  $ \psi$ be  any  solution of $H_{\lambda,\alpha,\theta} \psi=E\psi$ linear independent with  $\phi$, i.e.,
  $|\psi(0)|^2+|\psi(-1)|^2=1$ and
  \begin{equation*}
    \phi(-1)\psi(0)-\phi(0)\psi(-1)=c,
  \end{equation*}
  where $c\neq 0$.

  Then
    by the constancy of the Wronskian, one has
  \begin{equation}\label{W}
    \phi(y+1)\psi(y)-\phi(y)\psi(y+1)=c.
  \end{equation}
  We also will denote by $\varphi$  an {\it arbitrary} solution, so
  either $\psi$ or $\phi$, and denote
by $ U^{\varphi}(y) =\left(\begin{array}{c}
                             \varphi(y)\\
                            \varphi({y-1})
                          \end{array}\right)
                      $.
  Let $ U(y) =\left(\begin{array}{c}
                             \phi(y)\\
                            \phi({y-1})
                          \end{array}\right)
                      $
 and
 $ \tilde{U}(y) =\left(\begin{array}{c}
                             \psi(y)\\
                            \psi({y-1})
                          \end{array}\right)
                      $.

                      Below $\varepsilon>0$ is always  sufficiently small and  $\frac{p_n}{q_n}$ is the continued fraction expansion of $\alpha$.


We will make a repeated use of the following two Theorems that can be
useful also in other situations. The first
theorem is an arithmetic statement that holds for any Diophantine $\alpha$.
\begin{theorem}(Uniformity Theorem)\label{universalth}

  Let $I_1,I_2$ be two disjoint intervals in $\Z$ such that $\# I_1= s_1 q_n$ and $\#I_2= s_2q_n$, where $s_1,s_2\in \Z^+$.
  Suppose $|j|\leq C sq_n$ for any $j\in I_1\cup I_2$ and $|s|\leq q_n^C$, where $s=s_1+s_2$. Let $\gamma>0$ be such that
  \begin{equation}\label{appG1}
    e^{-\gamma sq_n}=\min_{i,j\in I_1\cup I_2}|\sin\pi (2\theta+(i+j)\alpha)|.
  \end{equation}
  Then for any $\varepsilon>0$, $\{\theta_{j}=\theta +j\alpha\}_{j\in
    I_1\cup I_2}$ is $\gamma+\varepsilon$ uniform if $n$ is large
  enough (not depending on $\gamma$).
\end{theorem}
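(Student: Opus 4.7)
The plan is to estimate the Lagrange-interpolation factor
\[
\Lambda_i(x) = \prod_{j \in I_1 \cup I_2,\, j\ne i} \frac{|x - \cos 2\pi\theta_j|}{|\cos 2\pi\theta_i - \cos 2\pi\theta_j|}
\]
uniformly in $i \in I_1 \cup I_2$ and $x \in [-1,1]$. Writing each $\cos$-difference via the identity $\cos 2\pi a - \cos 2\pi b = -2\sin\pi(a+b)\sin\pi(a-b)$, the denominator factors as
\[
2^{k}\prod_{j\ne i}|\sin\pi(2\theta+(i+j)\alpha)|\cdot\prod_{j\ne i}|\sin\pi(i-j)\alpha|,
\]
where $k = sq_n - 1$. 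The strategy is to match the leading $-k\ln 2$ produced by the equidistribution of $\ln|\sin\pi\cdot|$ on both sides and to show that the only residual contribution is the $\gamma sq_n$ coming from the single minimizing ``sum phase''.

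For the numerator I will use the classical Jensen identity $\int_0^1 \ln|\cos 2\pi\xi - \cos 2\pi y|\,dy = -\ln 2$, valid for every $\xi$, combined with a Koksma-type discrepancy estimate for the orbit $\{j\alpha\}$ on an interval of length $sq_n \le q_n^{C+1}$. Since $\alpha$ is Diophantine, the discrepancy of this orbit is at most $q_n^{-c}$ for some $c>0$, and since the logarithmic singularity of the integrand is integrable, one obtains
\[
\frac{1}{k}\sum_{j\ne i}\ln|x-\cos 2\pi\theta_j| \le -\ln 2 + \tfrac{\varepsilon}{4}
\]
uniformly in $x$, once $n$ is large. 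Thus the numerator is at most $e^{-k\ln 2 + \varepsilon k/4}$.

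The ``difference'' denominator $\prod_{j\ne i}|\sin\pi(i-j)\alpha|$ is bounded below via the standard estimate $\sum_{1\le m\le N}\ln|\sin\pi m\alpha| \ge -N\ln 2 - O(\ln N)$, which for Diophantine $\alpha$ follows from a grouping argument based on the three-distance theorem; noting that each nonzero value of $|i-j|$ occurs with multiplicity at most two, this yields the lower bound $e^{-k\ln 2 - \varepsilon k/4}$. The ``sum'' denominator $\prod_{j\ne i}|\sin\pi(2\theta+(i+j)\alpha)|$ is the decisive factor. Each term is at least $e^{-\gamma sq_n}$ by the defining hypothesis on $\gamma$, and the Diophantine condition rules out multiple near-resonances: if two distinct indices $j_1,j_2$ both produced $|\sin\pi(2\theta+(i+j_*)\alpha)|<(sq_n)^{-C'}$ for $C'$ sufficiently larger than $\kappa$, then $\|(j_1-j_2)\alpha\|_{\R/\Z}<2(sq_n)^{-C'}$ would contradict (\ref{Diokt}) since $|j_1-j_2|\le 2Csq_n \le 2Cq_n^{C+1}$. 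Removing the single smallest factor (whose contribution is no worse than $e^{-\gamma sq_n}$) and repeating the Koksma-discrepancy Riemann-sum estimate on the remaining $k-1$ factors — now uniformly polynomially separated from the singularity of $\ln|\sin\pi\cdot|$ — gives
\[
\prod_{j\ne i}|\sin\pi(2\theta+(i+j)\alpha)| \ge e^{-k\ln 2 - \gamma sq_n - \varepsilon k/4}.
\]

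Combining these estimates with $2^k = e^{k\ln 2}$ produces $\Lambda_i(x)\le e^{\gamma sq_n + \varepsilon k}$, and since $sq_n = k+1$, this is $\le e^{(\gamma+\varepsilon)k}$ for $n$ large, uniformly in $i$ and $x$, which is the required $(\gamma+\varepsilon)$-uniformity. The main technical obstacle is making the Riemann-sum estimates genuinely uniform across the whole permitted scale window $s\le q_n^C$ while cleanly isolating the unique resonant index from the bulk and absorbing the logarithmic singularity of $\ln|\sin\pi\cdot|$; the Diophantine hypothesis enters twice — first to control the discrepancy of the $\alpha$-orbit, and then to forbid multiple near-coincidences among the sum phases $2\theta + m\alpha$.
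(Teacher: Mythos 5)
Your proposal takes a genuinely different technical route from the paper, but as written it has gaps in exactly the places you flag as ``the main technical obstacle,'' and at least one of the intermediate claims is false as stated.

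The paper's proof does \emph{not} use Koksma or any discrepancy estimate. Its entire analytic engine is Lemma \ref{a1} (Lemma 9.7 of \cite{MR2521117}), which gives the sharp two-sided bound
\[
\Bigl|\sum_{k=0,\,k\ne k_0}^{q_n-1}\ln|\sin\pi(x+k\alpha)|+(q_n-1)\ln 2\Bigr|\le C\ln q_n
\]
uniformly in $x$, after removing only the single closest point $k_0$. Because $\#I_1=s_1q_n$ and $\#I_2=s_2q_n$ are exact multiples of $q_n$, the paper partitions $I_1\cup I_2$ into $s$ blocks of length $q_n$, applies Lemma \ref{a1} once per block, and collects the $s$ removed-minimum terms; the Diophantine condition bounds all but the one resonant minimum polynomially, and (\ref{appG1}) controls the resonant one, giving error $Cs\ln(sq_n)$ which is $o(sq_n)$ under $s\le q_n^C$. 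This lemma simultaneously packages the Chebyshev $-\ln 2$ normalization, the equidistribution error, and the singularity removal in a single statement per block, which is precisely why your ``main technical obstacle'' never arises in the paper.

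Your Koksma route needs two repairs. First, Koksma requires bounded variation, and $\ln|\sin\pi\cdot|$ does not have it; the cure is truncation at level $M\sim\ln(sq_n)$, justification via the Diophantine condition that at most one orbit point per phase falls below the cutoff, a separate accounting of that one term (exactly as you do for the sum-phase factor), and then Koksma applied to the truncated function with total variation $O(M)$. You gesture at this (``uniformly polynomially separated from the singularity'') for the sum-phase denominator but do not carry it out, and you do not address it at all for the numerator or the difference-phase denominator, where the same issue arises (the quantity $\cos 2\pi\theta_i-\cos 2\pi\theta_j$ has two singular factors, $\sin\pi(\theta_i+\theta_j)$ and $\sin\pi(\theta_i-\theta_j)$, and $x$ in the numerator can be as close to a $\cos2\pi\theta_j$ as one likes). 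Second, the ``standard estimate'' $\sum_{1\le m\le N}\ln|\sin\pi m\alpha|\ge -N\ln 2-O(\ln N)$ is not correct for general Diophantine $\alpha$: the $O(\ln N)$ error only holds for $\alpha$ of bounded type. For Diophantine $\alpha$ the error is governed by the number $\lceil N/q_n\rceil$ of $q_n$-blocks times $O(\ln q_n)$, i.e.\ by a term that can be polynomial in $q_n$; it still gives $\ge -N\ln 2-\varepsilon N$ for large $N$, which is what you actually need, but the three-distance/grouping argument you invoke to get $O(\ln N)$ would simply re-derive Lemma \ref{a1} and the per-block accounting, at which point you are doing the paper's proof. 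Finally, the passage from the multiplicity-two observation on $|i-j|$ to the stated lower bound on the difference-phase product is not spelled out: the index set $(I_1\cup I_2)-i$ is a union of two intervals, one of which in general does not start at $0$, so the lower bound has to be assembled by differencing two one-sided sums, requiring the matching \emph{upper} bound as well. None of these gaps is unfixable, but as it stands the proposal is a plausible outline of an alternative proof, not a proof, and it relies on an equidistribution estimate stated with an error term that is simply wrong for the class of $\alpha$ at hand.
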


The second theorem holds for any one-dimensional (not necessarily
quasiperiodic or even ergodic) Schr\"odinger operator. It is the
technique to establish exponential decay with respect to the distance to the resonances .
\begin{theorem}(Block Expansion Theorem)\label{blockth}

Fix $\gamma>0$.
Let $r_{y}^{\varphi}=\max_{|\sigma|\leq 10 \gamma}|\varphi(y+\sigma k)|$.
Suppose    $y_1,y_2\in \Z$ are such that $y_2-y_1=   k$.
 Suppose  there exists some $\tau>0$ such that for any  $ y\in [y_1+\gamma k, y_2-\gamma k]$,
   $y$ is $(\tau,k_1 )$ regular,  for some $\frac{\gamma}{20} k<k_1\leq \frac{1}{2} \min\{|y-y_1|, |y-y_2|\}$.
  Then  for large enough $k$,
 \begin{equation}\label{appfirst}
  r^\varphi_y \leq\max\{ r_{y_1}^{\varphi}  \exp\{-\tau(|y-y_1|-3\gamma k)\},
  r_{y_2}^{\varphi}\exp\{-\tau(|y-y_2|-3\gamma k )\}\},
 \end{equation}
 for all $y\in[y_1+10\gamma k,y_2-10\gamma k]$.
\end{theorem}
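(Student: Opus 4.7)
The plan is to prove Theorem \ref{blockth} by iteratively applying the block resolvent identity (\ref{Block}) together with the hypothesized $(\tau,k_1)$-regularity to propagate exponential decay from interior points of $[y_1,y_2]$ out to its boundary. Fix $y\in[y_1+10\gamma k,\,y_2-10\gamma k]$ and choose $y^{(0)}$ in the window $[y-10\gamma k,\,y+10\gamma k]$ realizing $r_y^\varphi=|\varphi(y^{(0)})|$. The goal is to bound $|\varphi(y^{(0)})|$ by the target expression.

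Since $y^{(0)}\in[y_1+\gamma k,\,y_2-\gamma k]$ for large $k$, it is $(\tau,k_1^{(0)})$-regular with some witnessing interval $[x_1^{(0)},x_2^{(0)}]$, and the Poisson-type formula (\ref{Block}) together with the hypothesized Green's function decay yields
\begin{equation*}
|\varphi(y^{(0)})|\leq e^{-\tau|y^{(0)}-x_1^{(0)}|}\,|\varphi(x_1^{(0)}-1)|+e^{-\tau|y^{(0)}-x_2^{(0)}|}\,|\varphi(x_2^{(0)}+1)|,
\end{equation*}
where $|y^{(0)}-x_i^{(0)}|\geq k_1^{(0)}/40\geq \gamma k/800$. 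I then iterate this expansion at each descendant point still lying in $[y_1+\gamma k,\,y_2-\gamma k]$, producing a binary tree of expansions rooted at $y^{(0)}$, and terminate a branch as soon as it reaches a point within $10\gamma k$ of $y_1$ or $y_2$, at which stage its value is absorbed into $r_{y_1}^\varphi$ or $r_{y_2}^\varphi$.

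Each step moves the current point by at least $\gamma k/800$, so every branch has length at most $O(1/\gamma)$ and the whole tree has at most a constant $C_\gamma$ leaves independent of $k$. Along any branch, the product of Green's function bounds collapses to $\exp\{-\tau\cdot(\text{total branch length})\}$, and by the triangle inequality the total length from root $y^{(0)}$ to a terminal point clustering near $y_j$ is at least $|y^{(0)}-y_j|-10\gamma k\geq |y-y_j|-20\gamma k$. Grouping the leaves of the tree by which of the two boundary clusters they fall in, summing the $C_\gamma$ contributions, and absorbing the combinatorial constant into the exponential by taking $k$ large, I obtain
\begin{equation*}
|\varphi(y^{(0)})|\leq \max\bigl\{r_{y_1}^\varphi\exp\{-\tau(|y-y_1|-3\gamma k)\},\; r_{y_2}^\varphi\exp\{-\tau(|y-y_2|-3\gamma k)\}\bigr\},
\end{equation*}
which is the claimed inequality.

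The main technical obstacle is the bookkeeping of the branching tree: at each node the adaptively chosen $k_1$ simultaneously controls the local decay factor and the length of the next jump, and one must rule out that a branch zigzags in a way that fails to accumulate enough decay to reach the boundary. This is resolved by always bounding accumulated decay from below via the triangle inequality in terms of the net displacement from root to leaf, so even a zigzagging path still produces enough exponential gain. A secondary subtlety is the allocation of the $10\gamma k$ slack: part is spent on the separation between $y$ and $y^{(0)}$, part on the $10\gamma k$ width built into the definition of $r_{y_j}^\varphi$, and the remaining $3\gamma k$ appears in the final exponent of the conclusion.
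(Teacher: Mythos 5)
You have the right general idea (iterate the block resolvent formula (\ref{Block}) using the hypothesized $(\tau,k_1)$-regularity, build a tree of expansions, terminate branches near $y_1$ or $y_2$, absorb combinatorics into the slack), but there is a genuine gap in how you handle branch termination, and this is precisely where the paper does extra work that your proposal skips.

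You claim that ``each step moves the current point by at least $\gamma k/800$, so every branch has length at most $O(1/\gamma)$.'' This is false: each step moves the current node by $\geq \gamma k/800$ in \emph{one direction}, but the next step can move it back, and nothing a priori prevents a branch from zigzagging inside the interior strip $[y_1+\gamma k,y_2-\gamma k]$ for an unbounded number of steps before exiting. Consequently you cannot conclude that there are only $C_\gamma$ leaves, nor that every leaf value lies in the $r_{y_1}^\varphi$ or $r_{y_2}^\varphi$ window. You acknowledge the zigzag concern but only observe that a zigzagging path still accumulates exponential decay; that observation does not bound the value $|\varphi(z_{s+1}')|$ at an interior leaf, nor the total number of leaves. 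The paper resolves this by artificially capping the iteration depth at $\lfloor 1600/\gamma\rfloor$ (so the number of terms is at most $2^{\lfloor 1600/\gamma\rfloor}$, a $k$-independent constant) and then isolating a residual third term $\max_{p\in[y_1,y_2]} e^{-\tau k}|\varphi(p)|$ in inequality (\ref{G.add1}). Crucially, the paper then disposes of this residual term via an extremality argument: let $p'$ maximize $|\varphi|$ over the inner strip; if $|\varphi(p')|$ exceeded $\max\{r_{y_1}^\varphi, r_{y_2}^\varphi\}$, applying (\ref{G.add1}) at $p'$ would force $|\varphi(p')|\leq \max\{e^{-7\tau\gamma k}r_{y_1}^\varphi, e^{-7\tau\gamma k}r_{y_2}^\varphi, e^{-\tau k}|\varphi(p')|\}$, a contradiction. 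Your proposal contains no analogue of this cap-plus-maximality step, and without it the claimed bound does not follow. Adding this argument would complete the proof along essentially the same lines as the paper.
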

These two theorems  are  similar in spirit to the statements
in \cite{jl1} with the ones related to Theorem \ref{universalth} being  in turn   modifications of the ones appearing in
\cite{MR2521117,MR3340177,MR3292353}.
While these techniques were
developed specifically to treat the non-Diophantine case, these ideas
turn out to be relevant for the case of phase resonances as well.  Theorem \ref{blockth} is essentially the block-expansion technique of multiscale analysis, e.g. \cite{fsw}, coupled with certain extremality arguments, an idea used also in \cite{jl1}. We expect Theorem \ref{universalth} to be useful for various one-frequency quasiperiodic problems, and Theorem \ref{blockth} for general one-dimensional models. We
present the proofs  in Appendix A and  B respectively.

The following Lemma establishes the non-resonant decay.
\begin{lemma}\label{Keylemma}
 Suppose  $k_0\in[-2Ck,2Ck]$ is such that
 \begin{equation*}
  |\sin\pi(2\theta+\alpha  k_0)|=\min_{|x|\leq 2 Ck}
  |\sin\pi(2\theta+\alpha  x)|,
 \end{equation*}
 where $C\geq1$ is a   constant.
 Let $\gamma,\varepsilon$ be small positive constants.
Let $y_1=0, y_2=k_0, y_3=y^{\prime}$.
Assume $y$ lies in  $[y_i,y_j]$ (i.e., $y\in [y_i,y_j]$)with  $|y_i-y_j|\geq k$.
 Suppose     $|y_i|,|y_j|\leq Ck$ and $|y-y_i|\geq 10\gamma k$, $|y-y_j|\geq 10\gamma k$.
  Then for large enough $k$,
 \begin{equation}\label{first}
  r^\varphi_y \leq\max\{ r_{y_i}^{\varphi}  \exp\{-(\ln \lambda- \varepsilon)(|y-y_i|-3\gamma k)\},
  r_{y_j}^{\varphi}\exp\{-(\ln \lambda- \varepsilon)(|y-y_j|-3\gamma k )\}\}.
 \end{equation}
\end{lemma}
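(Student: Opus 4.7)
The plan is to deduce Lemma \ref{Keylemma} directly from the Block Expansion Theorem \ref{blockth} applied with $\tau=\ln\lambda-\varepsilon$. Once every $y\in[y_i+\gamma k, y_j-\gamma k]$ is shown to be $(\ln\lambda-\varepsilon, k_1)$-regular at some admissible scale $k_1 \in \bigl(\gamma k/20,\ \tfrac12\min\{|y-y_i|,|y-y_j|\}\bigr]$, the bound (\ref{first}) follows verbatim from (\ref{appfirst}). So the real work is in establishing this regularity for all such $y$.

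To produce regularity at a given $y$, I would combine Cramer's rule (\ref{Cramer1})--(\ref{Cramer2}) with the uniform upper bound (\ref{Numerator}): it suffices to exhibit an interval $[x_1,x_1+k_1-1]\ni y$ with $y$ at distance at least $k_1/40$ from each endpoint and $|P_{k_1}(\theta+x_1\alpha)|\ge e^{(\ln\lambda-\varepsilon)k_1}$. To obtain the lower bound on $|P_{k_1}|$, I would set $k_1=sq_n$ with $q_n\sim\gamma k$ a continued fraction denominator of $\alpha$ and $s$ a bounded integer, and construct a phase set $S=I_1\cup I_2 \subset\Z$ of cardinality $k_1+1$, where $I_1$ is a translate of $[0,q_n-1]$ centered at $y$ and $I_2$ is a disjoint union of translates of $[0,q_n-1]$ positioned to contain the location $0$ of the global maximum of $\phi$. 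Provided the arithmetic hypothesis (\ref{appG1}) is verified with a small $\gamma$, Theorem \ref{universalth} yields uniformity of $\{\theta+j\alpha\}_{j\in S}$, and Lemma \ref{Le.Uniform} returns some $j^\ast\in S$ with $|P_{k_1}(\theta+(j^\ast-(k_1-1)/2)\alpha)|\ge e^{(\ln\lambda-\varepsilon)(k_1+1)}$. The normalization $|\phi(0)|^2+|\phi(-1)|^2=1$ combined with the generalized eigenfunction bound (\ref{shn}) forces $j^\ast\in I_1$ rather than $I_2$ via the standard Poisson/block-expansion contradiction argument around $0$, and the interval centered at $j^\ast$ then provides $y$'s regularity through (\ref{Cramer1})--(\ref{Cramer2}).

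The main obstacle is verifying the arithmetic hypothesis (\ref{appG1}) with small $\gamma$. Since $k_0$ realizes the unique minimum of $\|2\theta+x\alpha\|_{\R/\Z}$ on integers $|x|\le 2Ck$, the Diophantine condition gives, for any such $x\neq k_0$,
\[
\|2\theta+x\alpha\|_{\R/\Z} \ge \tau|x-k_0|^{-\kappa}-\|2\theta+k_0\alpha\|_{\R/\Z} \ge c(Ck)^{-\kappa},
\]
which is only polynomially small and therefore negligible on the exponential scale of $k_1$. The sole genuine obstruction is the possibility that $i+j=k_0$ for some $i,j\in S$. The conditions $|y-y_i|,|y-y_j|\ge 10\gamma k$ together with the fact that $\{y_1,y_2,y_3\}=\{0,k_0,y'\}$ leave room to position $S$ on one side of $k_0/2$ (the side on which $y$ itself sits), ensuring $S\cap(k_0-S)=\emptyset$; the three cases $(y_i,y_j)\in\{(y_1,y_2),(y_1,y_3),(y_2,y_3)\}$ are handled uniformly by this construction. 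The resulting $\gamma=O((\log k)/k_1)$ in Theorem \ref{universalth} is easily small enough to produce the required uniformity, completing the argument.
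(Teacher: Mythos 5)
Your high-level plan matches the paper's: use Theorem \ref{universalth} and Lemma \ref{Le.Uniform} to produce a good phase $\theta_{j^\ast}$, use the normalization at $0$ to force $j^\ast$ to lie near the point to be regularized rather than near $0$, deduce regularity via Cramer's rule and \eqref{Numerator}, and conclude with Theorem \ref{blockth}. You also correctly identify the central obstruction: ensuring $i+j\neq k_0$ for $i,j$ in the chosen phase set $S$.

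However, your proposed resolution of that obstruction is wrong, and this is the one place where all the real work in the proof lives. You claim one can ``position $S$ on one side of $k_0/2$ (the side on which $y$ itself sits)''. This cannot be done: $S$ must contain indices near $0$ (the only location where $\phi$ is known to be of order $1$ and hence where the contradiction argument bites), so if $y$ lies strictly between $k_0/2$ and $k_0$ (with $k_0>0$, say), then $S$ necessarily contains points on both sides of $k_0/2$. Worse, if $y\approx k_0/2$ then a block centered at $y$ already produces $i,j$ in that block with $i+j=k_0$, and no global placement of $S$ can save you. The hypotheses $|y-y_i|,|y-y_j|\geq 10\gamma k$ give no control on $|y-k_0/2|$, since $k_0/2$ is not one of $y_1,y_2,y_3$. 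The actual mechanism in the paper is different and more delicate: the interval near the point $p$ being regularized is chosen \emph{one-sided}, i.e.\ $[p-2sq_n,p-1]$ when $p\leq k_0/2$ and $[p,p+2sq_n-1]$ when $p>k_0/2$, and the interval near $0$ is also chosen one-sided or symmetric depending on the ordering of $0,k_0,p$; this leads to six cases. The bounds on $sq_n$ in terms of $d_p=\tfrac{1}{10}\min\{|p|,|p-k_0|,|p-y'|\}$ then guarantee that all sums $i+j$ land strictly on one side of $k_0$ in every case. A symmetric block around $y$ cannot achieve this, so your construction has a genuine gap precisely in the only nontrivial step.

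A secondary inaccuracy: you say $s$ is ``a bounded integer''. It is not; the Diophantine condition only yields $s\leq q_n^C$, which is what Theorem \ref{universalth} actually requires. Similarly, Theorem \ref{universalth} is stated for two \emph{intervals} $I_1,I_2$, not for a disjoint union of several $q_n$-blocks as you describe for $I_2$; the paper always takes each of $I_1,I_2$ to be a single interval. These are minor and repairable, but the one-sidedness issue above is not a bookkeeping error: it is where the argument would actually fail as written.
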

\begin{proof}
 By the DC on $\alpha$,  there exist $\tau^\prime,\kappa^\prime>0$ such that for any $x\neq k_0$ and $|x|\leq 2Ck$,
\begin{equation}\label{smalldivisorcondition2new}
   |\sin\pi(2\theta+x\alpha)|\geq \frac{\tau^\prime}{k^{\kappa^\prime}}.
 \end{equation}
 Fix $y^\prime$.
For any $p$ satisfying  $|p-y^\prime|\geq \gamma k$, $|p|\geq \gamma k$ and $|p-k_0|\geq \gamma k$, let
\begin{equation*}
    d_p=\frac{1}{10}\min\{|p|,|p-k_0|,|p-y^\prime|\}
\end{equation*}
Let $\frac{p_n}{q_n}$ be the continued fraction expansion of $\alpha$.
Let $n$ be the largest integer such that
\begin{equation*}
   2 q_n \leq  d_p,
\end{equation*}
and  let $s$ be the largest positive integer such that $2sq_n\leq
 d_p$. Notice that $ 2 q_{n+1}> d_p $ and by the Diophantine condition on $\alpha$, we have $s\leq q_n^C$.

Case 1: $0\leq k_0< p$.
 We construct intervals
\begin{equation*}
    I_1=[-2s q_n,-1],I_2=[p- 2sq_n,p+ 2sq_n-1].
\end{equation*}
Case 2: $0\leq  p< k_0$.

If $p\leq \frac{k_0}{2}$,
we construct intervals
\begin{equation*}
    I_1=[-2sq_n,2s q_n-1],I_2=[p- 2sq_n,p-1].
\end{equation*}

If $p> \frac{k_0}{2}$, we construct intervals
\begin{equation*}
    I_1=[- 2sq_n, 2sq_n-1],I_2=[p,p+2sq_n-1].
\end{equation*}
Case 3: $p< k_0\leq 0$.

 We construct intervals
\begin{equation*}
    I_1=[0,2sq_n-1],I_2=[p-2sq_n ,p+2sq_n-1].
\end{equation*}
Case 4: $k_0<  p < 0$.

If $p\leq \frac{k_0}{2}$,
we construct intervals
\begin{equation*}
    I_1=[- 2sq_n,2sq_n-1],I_2=[p- 2sq_n,p-1].
\end{equation*}

If $p> \frac{k_0}{2}$, we construct intervals
\begin{equation*}
    I_1=[-2sq_n,2sq_n-1],I_2=[p,p+2sq_n-1].
\end{equation*}
Case 5: $k_0\leq 0< p$.
\begin{equation*}
    I_1=[0,2sq_n-1],I_2=[p-2sq_n,p+2sq_n-1].
\end{equation*}
Case 6: $p< 0\leq k_0$.
\begin{equation*}
    I_1=[-2sq_n,-1],I_2=[p-2sq_n,p+2sq_n-1].
\end{equation*}
Using the small divisor condition (\ref{smalldivisorcondition2new}) and the construction of $I_1, I_2$, we have in any case
\begin{equation*}
    \min_{i,j\in I_1\cup I_2}|\sin\pi (2\theta+(i+j)\alpha)|\geq \frac{\tau^\prime}{k^{\kappa^\prime}}.
\end{equation*}

By Theorem  \ref{universalth},   for any $\varepsilon>0$, we have,  in each case $\{\theta_{j}=\theta +j\alpha\}_{j\in I_1\cup I_2}$ is $\varepsilon$ uniform.
Combining with  Lemma  \ref{Le.Uniform}, there exists some $j_0$ with  $j_0\in I_1\cup I_2$
    such that
      $ \theta_{j _0}\notin  A_{6sq_n-1,\ln\lambda-\varepsilon }$.

We have the following simple Lemma that will be used repeatedly in the
rest of the paper.

\begin{lemma}\label{lem}
Let $a_n\to\infty$, $0<t<1.$ Then for sufficiently large $n$ and
$|j|<t a_n$ we have $\theta_j=\theta+j\alpha \in
A_{2a_n-1,\ln\lambda-\varepsilon}.$
\end{lemma}

{\bf Proof:}
 Assume  for some $|j|<t a_n,$  $\theta_j\notin
A_{2a_n-1,\ln\lambda-\varepsilon}.$

     Let $I=[j-a_n+1,j+a_n-1]=[x_1,x_2]$. We have $x_1<0<x_2$ and
\begin{equation}\label{xi}|x_i|>(1-t)a_n.\end{equation} By    (\ref{Cramer1}), (\ref{Cramer2}) and (\ref{Numerator}),
 one has
\begin{equation*}
|G_I(0,x_i)|\leq e^{(\ln\lambda+\varepsilon )(2a_n-1-|x_i|)-(2a_n-1)(\ln\lambda -\varepsilon)}.
\end{equation*}
Using (\ref{Block}), we obtain
\begin{equation}\label{0singular}
  |\phi(-1)|,  |\phi(0)|  \leq \sum_{i=1,2} e^{\varepsilon a_n}|\varphi(x_i^{\prime})|e^{-|x_i|\ln \lambda },
\end{equation}
where $x_1^{\prime}=x_1-1 $ and $x_2^{\prime}=x_2+1 $. Because of (\ref{xi}),
 (\ref{0singular}) implies  $ |\phi(-1)|,  |\phi(0)|\leq e^{-(1-t-\varepsilon)\ln \lambda a_n}$.
This  contradicts   $|\phi(-1)|^2+  |\phi(0)|^2=1 $.\qed

Lemma \ref{lem} implies that $j_0$ must belong to $I_2$.

Set $I=[j_0-3sq_n+1,j_0+3sq_n-1]=[x_1,x_2]$.   By    (\ref{Cramer1}), (\ref{Cramer2}) and (\ref{Numerator}) again,
 one has
\begin{equation*}
|G_I(p,x_i)|\leq e^{(\ln\lambda+\varepsilon )(6sq_n-1-|k-x_i|)-(6sq_n-1)(\ln\lambda -\varepsilon)}\leq e^{ \varepsilon sq_n}e^{-|p-x_i|\ln \lambda }.
\end{equation*}
 Notice that $|p-x_1|,|p-x_2|\geq sq_n-1$. Thus
 for any  $p\in[y_i+\gamma k, y_j-\gamma k]$,  $p$ is $(6sq_n-1,\ln\lambda-\varepsilon)$ regular. Block expansion   (Theorem \ref{blockth}) now implies the  Lemma.
\end{proof}
\begin{remark}
Recall that  $ U^{\varphi}(y) =\left(\begin{array}{c}
                             \varphi(y)\\
                            \varphi({y-1})
                          \end{array}\right)
                      $.
                      By (\ref{G.new17}) and (\ref{G24}), we have
  \begin{equation}\label{G.new18}
Ce^{-(\ln\lambda+\varepsilon)|k_1-k_2|}||U^{\varphi}(k_2)|| \leq  ||U^{\varphi}(k_1)||\leq  Ce^{(\ln\lambda+\varepsilon)|k_1-k_2|}||U^{\varphi}(k_2)||.
\end{equation}
Thus  (\ref{first}) implies
\begin{equation}\label{second}
   ||U^{\varphi} (y)||\leq \max\{||U^{\varphi} (y_i)||\exp\{-(\ln \lambda- \varepsilon)(|y-y_i|-14\gamma k)\},
   ||U^{\varphi} (y_j)||\exp\{-(\ln \lambda- \varepsilon)(|y-y_j|-14\gamma k)\}\}
\end{equation}
\end{remark}
\begin{lemma}\label{Keylemma1}
Fix  $0<t<\ln\lambda$. 
Suppose
 \begin{equation}\label{ksmallG}
  |\sin\pi(2\theta+\alpha  k)|= e^{-t |k|}.
 \end{equation}
Then  for  large  $|k|$
 \begin{equation}\label{Equ9}
 ||U^{\varphi}(k)||\leq \max\{ ||U^{\varphi}(0)||, ||U^{\varphi}(2k)|| \} e^{-(\ln\lambda -t-\varepsilon) |k|}.
 \end{equation}
\end{lemma}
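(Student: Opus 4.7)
We follow the strategy of Lemma \ref{Keylemma}, with scales adjusted to reflect the resonance at $k$. Without loss of generality $k>0$. Choose a continued-fraction denominator $q_n$ of $\alpha$ with $q_n$ suitably smaller than $k$, and $s\in\mathbb Z^+$ with $sq_n$ comparable to $2k$ (the Diophantine condition on $\alpha$ ensures $s\le q_n^C$). Construct two disjoint intervals $I_1,I_2\subset\mathbb Z$ placed around $k$ with $\#(I_1\cup I_2)=sq_n$, for instance $I_1=[0,k-1]$ and $I_2=[k+1,2k]$.

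The assumption $|\sin\pi(2\theta+k\alpha)|=e^{-tk}$, combined with the Diophantine bound on $\alpha$ and the standing hypothesis $2\theta\notin\alpha\mathbb Z+\mathbb Z$, yields
\[
\min_{i,j\in I_1\cup I_2}|\sin\pi(2\theta+(i+j)\alpha)|=e^{-tk},
\]
attained at $i+j=k$; contributions from $i+j\in\{0,2k\}$ and other values are only polynomially small, as is seen by writing $2\theta+m\alpha=N+\epsilon+(m-k)\alpha$ and applying the Diophantine estimate to $(m-k)\alpha$. Theorem \ref{universalth} then gives $(t/2+\varepsilon)$-uniformity of $\{\theta_j\}_{j\in I_1\cup I_2}$, and Lemma \ref{Le.Uniform} produces $j_0\in I_1\cup I_2$ with $\theta_{j_0}\notin A_{sq_n-1,\ln\lambda-t/2-\varepsilon}$, i.e.\
\[
|P_{sq_n-1}(\theta+x_1\alpha)|\ge e^{sq_n(\ln\lambda-t/2-\varepsilon)},
\]
where $x_1=j_0-k+1$. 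Lemma \ref{lem}, applied with the $\phi$-normalization at the origin, forces $|j_0|\ge(1-\varepsilon)k$ and thereby places $j_0$ in a neighborhood of $k$.

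Set $I=[x_1,x_2]=[j_0-k+1,j_0+k-1]$, so that $k$ is interior to $I$. By the Poisson formula (\ref{Block}),
\[
\varphi(k)=-G_I(x_1,k)\,\varphi(x_1-1)-G_I(k,x_2)\,\varphi(x_2+1).
\]
Estimate each Green's function via Cramer's rule (\ref{Cramer1})--(\ref{Cramer2}), (\ref{Numerator}) and the above lower bound on $|P_{sq_n-1}|$, and estimate $|\varphi(x_1-1)|$, $|\varphi(x_2+1)|$ via the transfer-matrix bound (\ref{G.new18}) from the nearer of $\{0,2k\}$. In each product, the asymmetric Green's-function exponent, controlled by $d_1=k-x_1$ and $d_2=x_2-k$, is exactly compensated by the opposite asymmetry in the transfer-matrix distance from $x_i\pm 1$ to the chosen endpoint of $\{0,2k\}$; their sum collapses to a quantity depending only on $sq_n$ and $k$ (not on $j_0$), producing
\[
|\varphi(k)|\le C\,e^{-(\ln\lambda-t-\varepsilon)|k|}\,\max\bigl(\Vert U^\varphi(0)\Vert,\Vert U^\varphi(2k)\Vert\bigr).
\]
The identical computation applied to $\varphi(k-1)$ gives (\ref{Equ9}).

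\textbf{Main obstacle.} The technical heart is the algebraic verification that, uniformly over the admissible positions of $j_0$, the Green's-function exponent and the transfer-matrix exponent sum to $-(\ln\lambda-t)|k|+O(\varepsilon k)$. Fitting this cancellation is what pins down the choice of $sq_n$ relative to the resonance depth $t$: too small a scale loses the sharp rate (producing at best $\ln\lambda-2t$), while too large a scale makes $I_1\cup I_2$ too wide for Lemma \ref{lem} to localize $j_0$ near $k$.
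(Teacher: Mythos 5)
Your strategy---calibrate the scale so that the resonance strength $t$ appears as the uniformity defect, invoke Theorem \ref{universalth} and Lemma \ref{Le.Uniform}, then run Cramer's rule and the block formula---is the right one and is what the paper does. The interval placement differs in an essential way, though: the paper takes $I_1=[-sq_n,sq_n-1]$ around $0$ and $I_2=[k-sq_n,k+sq_n-1]$ around $k$ (each of length $2sq_n\approx k$), not $[0,k-1]\cup[k+1,2k]$, and this is what makes the subsequent case split workable.

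The decisive gap is the claim that Lemma \ref{lem} forces $|j_0|\ge(1-\varepsilon)k$ and ``places $j_0$ in a neighborhood of $k$.'' Lemma \ref{Le.Uniform} produces $\theta_{j_0}\notin A_{sq_n-1,\,\ln\lambda-t/2-\varepsilon}$, but $A_{m,\ln\lambda-t/2-\varepsilon}\subset A_{m,\ln\lambda-\varepsilon}$, so escaping the \emph{smaller} set says nothing about the larger set appearing in Lemma \ref{lem}. If you rerun the proof of Lemma \ref{lem} with the weakened radius $\ln\lambda-t/2-\varepsilon$, the contradiction requires $(1-\tau)\ln\lambda>t$ when ruling out $|j_0|<\tau a_n$ (with $a_n\approx k$); for $t$ close to $\ln\lambda$ this only excludes $|j_0|$ of size $o(k)$, nowhere near the $(1-\varepsilon)k$ you need. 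Moreover, even $|j_0|\ge(1-\varepsilon)k$ would not place $j_0$ near $k$: $j_0$ could lie anywhere in $[(1-\varepsilon)k,2k]$, and for $j_0$ near $2k$ the ``compensated'' exponent in your Green's-function/transfer-matrix product turns positive, so the estimate fails. The paper never tries to localize $j_0$; it simply splits into $j_0\in I_2$ and $j_0\in I_1$, and in the latter case uses the symmetry $I_1+k=I_2$ to run the identical expansion shifted by $-k$, producing $\|U^\varphi(0)\|\le\max\{\|U^\varphi(-k)\|,\|U^\varphi(k)\|\}\,e^{-(\ln\lambda-t-\varepsilon)k}$, which contradicts $\|U^\phi(0)\|=1$. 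You also omit two necessary pieces of the productive case $j_0\in I_2$: the endpoints $x_i'$ generally land at intermediate non-resonant positions, not near $\{0,2k\}$, and must first be pushed there by Lemma \ref{Keylemma}; and the ``self-loop'' where $x_i'$ lands back near $k$ must be excluded via inequality (\ref{a}), which is exactly why the paper calibrates $n$ by $(\tfrac{t+C\varepsilon}{\ln\lambda-t-C\varepsilon}+1)q_n\le k/2$ rather than the looser ``$q_n$ suitably smaller than $k$.''
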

\begin{proof}

Without loss of generality assume $k>0$.
By the DC on $\alpha$,
we have
\begin{equation*}
    |\sin\pi(2\theta+\alpha  k)|=\min_{|x|\leq 8k}|\sin\pi(2\theta+\alpha  x)|.
\end{equation*}
Furthermore,
there exist $\tau^\prime,\kappa^\prime>0$ such that for any $x\neq k$ and $|x|\leq 8k$,
\begin{equation*}
   |\sin\pi(2\theta+x\alpha)|\geq \frac{\tau^\prime}{k^{\kappa^\prime}}.
 \end{equation*}
Let $\gamma$ be  any small positive constant and define  $r_{y}^{\varphi}=\max_{|\sigma|\leq 10\gamma}|\varphi(y+\sigma k)|$ .
Let $\frac{p_n}{q_n}$ be the continued fraction expansion of $\alpha$.
Let $n$ be the largest integer such that
\begin{equation*}
   ( \frac{t+C\varepsilon }{\ln\lambda-t-C\varepsilon}+1) q_n \leq \frac{k}{2},
\end{equation*}
where  $C$ is a large constant depending on $\lambda,t$.
Let $s$ be the largest positive integer such that $sq_n\leq
 \frac{1}{2}k$.
 Then $s> \frac{t+C\varepsilon}{\ln\lambda-t-C\varepsilon}$.  Combining with  the fact
 $(s+1)q_n\geq \frac{1}{2}k $,
 we obtain
 \begin{equation}\label{a}
    2s\frac{q_n}{k}> \frac{t}{\ln\lambda}+C\varepsilon.
 \end{equation}

Construct intervals
\begin{equation*}
    I_1=[-sq_n,sq_n-1], I_2=[k-sq_n,k+sq_n-1].
\end{equation*}

Let $\theta_j=\theta+j\alpha$ for $j\in I_1\cup I_2$. The set $\{\theta_j\}_{j\in I_1\cup I_2}$
consists of $4sq_n$ elements.

   By Theorem \ref{universalth} and (\ref{ksmallG}), we have    $\{\theta_j\}_{j\in I_1\cup I_2}$ is $\frac{t k}{4sq_n}+\varepsilon$ uniform.
 Combining with  Lemma  \ref{Le.Uniform}, there exists some $j_0$ with  $j_0\in I_1\cup I_2$
    such that
      $ \theta_{j _0}\notin  A_{4sq_n-1,\ln\lambda-\frac{t k}{4sq_n}-\varepsilon }$.

      First assume $j_0\in I_2$.

      Set $I=[j_0-2sq_n+1,j_0+2sq_n-1]=[x_1,x_2]$.   By    (\ref{Cramer1}), (\ref{Cramer2}) and (\ref{Numerator}),
 one has
\begin{equation*}
|G_I(k,x_i)|\leq e^{(\ln\lambda+\varepsilon )(4sq_n-1-|k-x_i|)-(4sq_n-1)(\ln\lambda -\frac{t k}{4sq_n}-\varepsilon)}.
\end{equation*}
Using (\ref{Block}), we obtain
\begin{equation}\label{IIIcase1}
  |\varphi(k-1)|,  |\varphi(k)|  \leq \sum_{i=1,2} e^{(t+\varepsilon) k}|\varphi(x_i^{\prime})|e^{-|k-x_i|\ln \lambda },
\end{equation}
where $x_1^{\prime}=x_1-1 $ and $x_2^{\prime}=x_2+1 $.

Fix small $\gamma=\frac{\varepsilon}{C}$, where $C$ is a large constant depending on $\lambda,t$.

If  $x_i^{\prime}\in[-10\gamma k, 10\gamma k]$, $x_i^{\prime}\in[k-10\gamma k,k+10\gamma k]$ or
$x_i^{\prime}\in[2k-10\gamma k,2k+10\gamma k]$,  we bound  $\varphi(x_i^{\prime})$ in (\ref{IIIcase1}) by
 $r_0^{\varphi}$, $r_{k}^{\varphi}$ or $r_{2k}^{\varphi}$ respectively.
In other cases, we  bound  $\varphi(x_i^{\prime})$ in (\ref{IIIcase1}) with (\ref{first}) using $k_0=k,y=x_i^{\prime}$ and $y^\prime=-k,2k$ or $3k$. Then
we have
\begin{eqnarray*}
  |\varphi(k-1)|,  |\varphi(k)| &\leq& \max\{r_{k\pm 2k}^{\varphi} \exp\{-(2\ln \lambda - t- C\gamma-\varepsilon)k\},
  r_{k\pm k}^{\varphi} \exp\{-(\ln \lambda - t- C\gamma-\varepsilon)k\}   ,\\
   && r_{k} ^{\varphi} \exp\{-(\ln \lambda -
   \varepsilon)2sq_n+(t+C\gamma)k\}\}.
\end{eqnarray*}

However by (\ref{a}), we have that
\begin{eqnarray*}
  | \varphi(k-1)|,  | \varphi(k)| &\leq& r_{k}^{\varphi}  \exp\{-(\ln \lambda - \varepsilon)2sq_n+(t+C\gamma)k\} \\
   &\leq &  e^{-\varepsilon k}r_{k}^{\varphi}
\end{eqnarray*}
can not happen, so we must have
\begin{equation}\label{G.Secondadd1new}
  |\varphi(k-1)|,  |\varphi(k)|\leq  \exp\{-(\ln \lambda - t-C\gamma- \varepsilon)k\}\max\{ r_{k\pm k}^{\varphi},  e^{-k\ln\lambda} r_{k\pm 2k}^{\varphi} \}.
\end{equation}
Notice that by (\ref{G.new18}), one has
\begin{equation*}
     r_{k\pm 2k}^{\varphi}\leq  e^{(\ln\lambda+C\gamma)k} r_{k\pm k}^{\varphi} .
\end{equation*}
Then (\ref{G.Secondadd1new}) becomes
\begin{equation*}
  ||U^{\varphi}(k)|| \leq  \exp\{-(\ln \lambda - t-C\gamma- \varepsilon)k\}\max\{ r_{0}^{\varphi},   r_{2k}^{\varphi} \}.
\end{equation*}
By (\ref{G.new18}) again, one has
\begin{equation*}
r^{\varphi}_{y}e^{-(\ln\lambda+\varepsilon)10\gamma k}\leq    ||U^{\varphi}(y)||\leq  r^{\varphi}_{y}e^{(\ln\lambda+\varepsilon)10\gamma k}.
\end{equation*}
Thus
\begin{eqnarray}
  ||U^{\varphi}(k)|| &\leq & \max\{ ||U^{\varphi}(0)||, ||U^{\varphi}(2k)|| \} e^{-(\ln\lambda -t-C\gamma-\varepsilon) |k|} \nonumber \\
   &\leq &  \max\{ ||U^{\varphi}(0)||, ||U^{\varphi}(2k)|| \} e^{-(\ln\lambda -t-\varepsilon) |k|}\label{Equ6}.
\end{eqnarray}
This implies (\ref{Equ9}). Thus in order to prove the lemma, it suffices to exclude the case $j_0\in I_1$.

  Suppose  $j_0\in I_1$. Notice that $I_1+k=I_2$(i.e., $I_2$ can be obtained from $I_1$ by moving by $k$ units).
  Following the proof of (\ref{Equ6}), we get (move $-k$ units in (\ref{Equ6}))
  \begin{equation*}
    ||U^{\varphi}(0)||\leq \max\{ ||U^{\varphi}(-k)||, ||U^{\varphi}(k)|| \} e^{-(\ln\lambda -t-\varepsilon) |k|}.
  \end{equation*}
    This  contradicts  $||U^{\phi}(0)||=1$.
\end{proof}
{\bf Proof of Theorem \ref{MaintheoremAL}}
Without loss of generality, assume $\ell>0$.

For  any $\varepsilon>0$,  let  $\gamma=\frac{\varepsilon}{C}>0$, where $C$ is a large constant that may depend  on $\lambda$ and $\delta$.
Let  $x_0^\prime$ (we can choose any one if $x_0^\prime$ is not unique) be such that
              \begin{equation*}
             |\sin\pi(2\theta+x_0^\prime \alpha)|  = \min_{|x|\leq 4|\ell|}|\sin\pi(2\theta+x\alpha)|.
              \end{equation*}
             Let $\eta^{\prime}\in (0, \infty)$ be given by the following equation,
                \begin{equation}\label{etaprime}
                   |\sin\pi(2\theta+x_0^\prime\alpha)|=e^{-\eta ^\prime|\ell|}.
                \end{equation}

Case 1: $|\sin\pi(2\theta+x_0^\prime\alpha)|\neq |\sin\pi(2\theta+x_0\alpha)|$. This implies $|x_0^\prime|>2\ell$. In this case for any $\varepsilon>0$,
we have
 $ \eta\leq \varepsilon$  if $\ell$ is large enough  by the Diophantine condition.
  Let $y=\ell$,  $C=2$, $k=2\ell$, and $y^\prime=2\ell$ in Lemma \ref{Keylemma}.  Then  $k_0=x_0^\prime $ and we obtain
  \begin{equation*}
    |\phi(\ell)|,|\phi(\ell-1)|\leq e^{-(\ln\lambda-C\gamma)\ell},
  \end{equation*}
  This implies the right inequality of (\ref{G.Asymptotics}) in this case.

Case 2: $|\sin\pi(2\theta+x_0^\prime\alpha)|= |\sin\pi(2\theta+x_0\alpha)|$, so $\eta=\eta^\prime$.

If $x_0\leq 0$,  let $y=\ell$,  $C=2$, $k=2\ell$ and $y^\prime=2\ell$ in Lemma \ref{Keylemma}. Then Theorem \ref{MaintheoremAL} holds by (\ref{first}).

Now we consider the case $x_0> 0$.

We split the proof into two subcases.

Subcase i: $\eta\leq\gamma$.

 Fix some $y\in[\gamma\ell,2\ell- \gamma\ell]$.
 Let $n$ be such that $q_n\leq \frac{1}{20}\min\{y,2\ell-y\}<q_{n+1}$, and let $s$ be the largest positive integer such that $sq_n\leq \frac{1}{20}\min\{y,2\ell-y\}$.
  We construct intervals
  \begin{equation*}
    I_1=[-2sq_n,2sq_n-1], I_2=[y-2sq_n,y-1].
  \end{equation*}
  By the definition of $\eta^\prime,\eta$ and construction of $I_1,I_2$, we have
  \begin{equation*}
    \min_{i,j\in I_1\cup I_2}|\sin\pi (2\theta+(j+i)\alpha)| \geq e^{-\eta^{\prime}\ell}=e^{-\eta\ell}.
  \end{equation*}
  By Theorem \ref{universalth},  we get  $\{\theta_{j}=\theta +j\alpha\}_{j\in I_1\cup I_2}$ is $2\gamma$ uniform.
   As in the  proof of Lemma \ref{Keylemma}, we obtain there exists some $j_0$ with  $j_0\in  I_2$
    such that
      $ \theta_{j _0}\notin  A_{6sq_n-1,\ln\lambda- 3\gamma }$. Thus
  $y$ is $(\ln\lambda-3\gamma,6sq_n)$ regular. 
  By block expansion  (Theorem \ref{blockth} with $y_1=0,y_2=2\ell,\tau=\ln\lambda-3\gamma$), we  get
  \begin{equation*}
    |\phi(\ell)|,|\phi(\ell-1)|\leq e^{-(\ln\lambda-C\gamma)\ell},
  \end{equation*}
  This implies the right inequality of (\ref{G.Asymptotics}).

   Subcase ii: $\eta\geq  \gamma$.

   By the definition of $\delta(\alpha,\theta)$ and the fact that $\delta(\alpha,\theta)< \ln\lambda$, we must have
   \begin{equation}\label{largex0}
 \frac{\gamma}{\ln\lambda} \ell  \leq  |x_0|\leq  2\ell.
   \end{equation}

Applying  Lemma \ref{Keylemma1} with $k=x_0$ to the generalized eigenfunction $\phi(k)$, we have
\begin{equation}\label{x0}
    ||U(x_0)||=||U^{\phi}(x_0)||\leq e^{-(\ln\lambda-\varepsilon)|x_0|} e^{\eta \ell}.
\end{equation}
Applying     Lemma \ref{Keylemma} with $y=\ell, k=2\ell, C=2,  y^\prime=2\ell, k_0=x_0,\varphi=\phi$, considering  $\ell>x_0$ and
 $\ell\leq x_0$ separately, and using (\ref{x0}), we obtain  Theorem \ref{MaintheoremAL}.
\begin{remark}\label{Rcase1}

 By (\ref{G24}), we have
  \begin{equation*}
   || U(\ell)||\geq ||A_{\ell}||^{-1}|| U(0)||\geq e^{-(\ln\lambda+\varepsilon)\ell}.
  \end{equation*}
   This already  implies the left inequality of (\ref{G.Asymptotics}), except for the   Subcase ii.
\end{remark}

\section{Palindromic arguments }\label{js94}
\subsection{Singular continuous spectrum}
 We first  show that if $\ln|\lambda|< \delta(\alpha,\theta)$, then
$H_{\lambda,\alpha,\theta}$ has purely singular continuous spectrum, which is the second part of Theorem \ref{Maintheorem}. That is
\begin{theorem}\label{Maintheoremsingular}
Let $H_{\lambda,\alpha,\theta}$ be an almost Mathieu operator with $|\lambda|>1$.
For   any irrational number $\alpha$ and $\theta\in \R$, define $\delta(\alpha,\theta)\in[0,\infty]$ by (\ref{G.delta}).
Then  $H_{\lambda,\alpha,\theta}$ has purely singular continuous spectrum  if $\ln|\lambda|< \delta(\alpha,\theta)$.
\end{theorem}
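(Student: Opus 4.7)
The absence of absolutely continuous spectrum is immediate from Lemma \ref{lya} and Kotani's theorem: since $L(E) = \ln|\lambda| > 0$ on $\Sigma_{\lambda,\alpha}$, the absolutely continuous spectrum is empty. The real task is therefore to exclude point spectrum when $\ln|\lambda| < \delta(\alpha,\theta)$. My plan is a palindromic Gordon-style argument in the spirit of \cite{js}, refined so that the loss of the factor of four in the old threshold $\delta > 4\ln|\lambda|$ is eliminated and the argument reaches the sharp condition $\delta > \ln|\lambda|$.

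Fix $\ln|\lambda| < c < \delta(\alpha,\theta)$ and a sequence $k_i \to \infty$ with $\|2\theta + k_i\alpha\|_{\R/\Z} \le e^{-c|k_i|}$. Evenness and Lipschitz continuity of $v$ give $|v(\theta + (k_i - j)\alpha) - v(\theta + j\alpha)| \le C e^{-c|k_i|}$ for all $j \in \Z$, so the sampled potential $v_n = v(\theta+n\alpha)$ is $e^{-c|k_i|}$-approximately symmetric under the reflection $n \mapsto k_i - n$. Suppose for contradiction that $\phi \in \ell^2(\Z)$ solves $H\phi = E\phi$. Since shifting $\theta$ along its $\alpha$-orbit leaves $\delta(\alpha,\theta)$ unchanged (it only translates $\{2\theta+k\alpha\}$ by multiples of $\alpha$), I may normalize so that $\|U(\cdot)\|$ attains its global maximum at $n = 0$ with $\|U(0)\| \asymp 1$. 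Define the reflected function $\tilde\phi(n) := \phi(k_i - n)$; the approximate palindromic identity yields $|(H - E)\tilde\phi(n)| \le C e^{-ck_i}|\phi(k_i - n)|$, so $\tilde\phi$ is an $\ell^2$ quasi-solution.

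The key step is a comparison $\tilde\phi \approx c_i\phi$ on $[0, k_i]$ with $|c_i| \asymp 1$ and an error of order $e^{-(c - \ln|\lambda| - \varepsilon)k_i}$. The $\ell^2$-eigenspace of $H$ at $E$ is one-dimensional (two independent $\ell^2$ solutions would have a constant, yet vanishing, Wronskian). Working on the finite-volume restriction $H^{[0,k_i]}$, the eigenvalue closest to $E$ is separated from the rest by at least an $e^{-\ln|\lambda| \cdot k_i}$ gap (by a Wegner/Minami-type bound, or by direct gap analysis via the transfer matrices \eqref{G34} combined with the upper bound \eqref{G24}), and this gap dominates the $e^{-ck_i}$ perturbation from the approximate palindrome precisely because $c > \ln|\lambda|$. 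Standard spectral perturbation then delivers the comparison. Evaluating $\tilde\phi = c_i\phi + r_i$ at $n = k_i$ produces $\phi(0) = c_i\phi(k_i) + r_i(k_i)$, so $|\phi(k_i)| \ge (|\phi(0)| - \|r_i\|_\infty)/|c_i| \gtrsim 1$, contradicting $\phi(k_i) \to 0$.

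The main obstacle, and the improvement over \cite{js}, is the alignment step. In \cite{js} the palindromic reflection is handled on a symmetric interval, so the transfer matrix is effectively iterated over $4k_i$ steps, giving the threshold $\delta > 4\ln|\lambda|$. By anchoring the reflection on the one-sided interval $[0, k_i]$ whose left endpoint coincides with the global maximum of $\|U\|$, only a one-sided propagation of length $k_i$ enters the bookkeeping, eliminating the factor of four. A secondary subtlety is that $\phi \in \ell^2$ gives no a priori decay rate for $\phi(k_i)$, so the alignment $\tilde\phi \approx c_i\phi$ must come from intrinsic finite-volume arguments rather than being bootstrapped from the localization statement of Section \ref{loc}, which would be circular. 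The reflective hierarchy of Theorem \ref{Universalend} makes clear, in retrospect, why palindromic symmetry is the right tool: in the localization regime the eigenfunction already builds a mirror image of itself at each phase resonance, and once $\ln|\lambda| < \delta$ those mirror images carry too much mass at infinity to be compatible with $\ell^2$.
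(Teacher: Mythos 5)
Your proposal takes a genuinely different route from the paper, and unfortunately the route has a gap at its heart. The paper's proof (Theorem \ref{general}) never goes near finite-volume eigenvalue gaps or spectral perturbation. Instead it normalizes $\|u\|_{\ell^2}=1$, observes that the Wronskian $W(u,u_i)$ of $u$ with the reflection $u_i(n)=u(k_i-n)$ changes by at most $Ce^{-(\delta-\varepsilon)|k_i|}$ per step, and that $\sum_n|W(u,u_i)(n)|\le 2$; hence $|W(u,u_i)(n)|\le Ce^{-(\delta-\varepsilon)|k_i|}$ for \emph{all} $n$. The crucial gain then comes at the midpoint $m_i\approx k_i/2$: there the vectors $\Phi(m_i)$ and $\Phi_i(m_i)$ are coordinate swaps of each other, so the Wronskian bound forces either $\|\Phi(m_i)+\Phi_i(m_i)\|$ or $\|\Phi(m_i)-\Phi_i(m_i)\|$ to be $\lesssim e^{-\frac{1}{2}(\delta-\varepsilon)k_i}$. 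Propagating by transfer matrices over only $k_i/2$ units of length, using $\|T_i^j\|\lesssim e^{(L+\varepsilon)m_i}$ and $\|T_i^1-T_i^2\|\lesssim e^{(L-2\delta+\varepsilon)m_i}$, yields $\|\Phi(0)+\iota\Phi_i(0)\|\lesssim e^{-(\delta-L-\varepsilon)m_i}\to 0$, contradicting $u\in\ell^2$. The square-root at the midpoint and the half-length propagation are precisely what remove the extraneous factors, not a global-maximum normalization.

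The central step of your proposal --- the alignment $\tilde\phi\approx c_i\phi$ via spectral perturbation for $H^{[0,k_i]}$ --- is not established and, as stated, is unlikely to hold. First, the gap estimate $\mathrm{dist}(E,\sigma(H^{[0,k_i]})\setminus\{E_{\text{closest}}\})\gtrsim e^{-\ln|\lambda|\cdot k_i}$ is not a consequence of Wegner or Minami bounds (which are expectations over random parameters, not statements about a single quasiperiodic realization at fixed $(\alpha,\theta)$), nor does it follow from the upper bound \eqref{G24} on transfer matrices, which bounds $\|A_k\|$ from above and says nothing about eigenvalue separation from below. Worse, the very mechanism of phase resonance at scale $k_i$ produces near-degenerate eigenvalue pairs of the box $[0,k_i]$ whose splitting is comparable to the symmetry-breaking scale $e^{-ck_i}$, which is \emph{smaller} than $e^{-\ln|\lambda|k_i}$ precisely when $c>\ln|\lambda|$ --- the regime you are working in. Second, with your normalization (global maximum of $\|U\|$ at $n=0$), the boundary values $\phi(-1),\phi(0)$ are $O(1)$, so $\phi|_{[0,k_i]}$ is \emph{not} an exponentially good approximate eigenvector of the Dirichlet restriction $H^{[0,k_i]}$; the boundary term at $n=0$ is not small. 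Any perturbative comparison of $\tilde\phi$ to $\phi$ through finite-volume spectral theory would have to confront both of these obstructions, and your proposal does not. The Wronskian argument of the paper avoids them entirely: the global $\ell^1$ bound on $W$ plus the structure at the reflection center gives the needed smallness without any reference to box eigenvalue gaps or boundary conditions.
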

Actually, we can prove a more general result.
\begin{theorem}\label{general}
Let $H_{v,\alpha,\theta}$  be a discrete Schr\"{o}dinger operator,
 \begin{equation*}
 (H_{v,\alpha,\theta}u)(n)=u({n+1})+u({n-1})+  v(\theta+n\alpha)u(n), 
 \end{equation*}
 where $v:\mathbb{T}\longrightarrow \R$  is an even Lipchitz continuous function.
For   any irrational number $\alpha$ and $\theta\in \R$, define $\delta(\alpha,\theta)\in[0,\infty]$ by (\ref{G.delta}).
Then  $H_{v,\alpha,\theta}$ has  no  eigenvalues     in the  regime $\{E\in \R: L(E)<\delta(\alpha,\theta)\}$, where  $L(E)$ is the Lyapunov exponent.
\end{theorem}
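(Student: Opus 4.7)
I argue by contradiction. Suppose $E$ lies in $\{L(E)<\delta(\alpha,\theta)\}$ and is an eigenvalue of $H_{v,\alpha,\theta}$ with $\ell^2$ eigenfunction $\phi$, normalized so that $|\phi(0)|^2+|\phi(-1)|^2=1$. Fix $L(E)<\delta''<\delta(\alpha,\theta)$. By the definition of $\delta$ in \eqref{G.delta} there is a sequence $K_i\to\infty$ with $\|2\theta+K_i\alpha\|_{\R/\Z}\le e^{-\delta''K_i}$. Evenness of $v$ gives $v(\theta+j\alpha)=v(-\theta-j\alpha)$, and combined with Lipschitz continuity,
\[
|v(\theta+j\alpha)-v(\theta+(K_i-j)\alpha)|\le \operatorname{Lip}(v)\cdot e^{-\delta''K_i},
\]
uniformly in $j$. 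Thus the potential sequence $v_j=v(\theta+j\alpha)$ is exponentially close to a palindrome around $K_i/2$.

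Define the reflected candidate $\psi_i(n):=\phi(K_i-n)$. A direct computation gives
\[
(H-E)\psi_i(n)=\bigl[v(\theta+(K_i-n)\alpha)-v(\theta+n\alpha)\bigr]\psi_i(n),
\]
so $\psi_i$ solves the eigenvalue equation pointwise up to a residual of order $e^{-\delta''K_i}\,|\psi_i(n)|$. Using a Duhamel / variation-of-parameters argument with the transfer matrices over a block slightly larger than $[0,K_i]$, I would decompose
\[
\psi_i(n)=a_i\phi(n)+b_i\phi^\perp(n)+\mathrm{err}_i(n),
\]
where $\phi^\perp$ is any fixed solution linearly independent of $\phi$, and the error is bounded by a single application of a transfer-matrix norm, yielding $|\mathrm{err}_i(n)|\le CK_i e^{(L(E)+\varepsilon)K_i}e^{-\delta''K_i}\to 0$ since $\delta''>L(E)$.

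Because $\phi\in\ell^2$, the eigenvalue is simple, and constancy of the Wronskian forces any linearly independent $\phi^\perp$ to grow at both ends. Since $\psi_i$ inherits the two-sided decay of $\phi$, the coefficient $b_i$ is forced to be of the same order as $\mathrm{err}_i$ and hence negligible, so $\psi_i\approx a_i\phi$ on any fixed window. Applying this identity at $n$ and at $K_i-n$ gives $a_i^2\approx 1$, so $|a_i|$ is bounded below; on the other hand evaluating at $n=0$ gives $a_i\phi(0)\approx \phi(K_i)$, and since $\phi(K_i)\to 0$ while $\phi(0)$ is bounded below (after swapping roles with $\phi(-1)$ if necessary), we get $|a_i|\to 0$. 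These two conclusions contradict each other.

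The main obstacle is the sharp error control. The argument of \cite{js} propagated the palindromic residue through several iterations of the transfer matrix, each of norm $\approx e^{L(E)K_i}$, thereby losing a factor of four and requiring $\delta>4L(E)$. To achieve the sharp threshold $\delta>L(E)$ one cannot afford more than one power of $e^{L(E)K_i}$ in the amplification. The needed refinement is to perform the comparison between $\psi_i$ and $\phi$ at a single optimally chosen location near the symmetry center $K_i/2$, where the Wronskian-type identity determining $a_i,b_i$ is tightest, and only afterwards transport the resulting identity $\psi_i\approx a_i\phi$ out to the endpoints using the exact eigenvalue equation rather than iterating through the approximate one. This spends the transfer-matrix growth exactly once and restores the natural threshold $L(E)<\delta$.
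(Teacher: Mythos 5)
Your high-level strategy --- reflect $\phi$ about the almost-symmetry point, compare with $\phi$, and derive a contradiction --- is the same as the paper's, and you correctly diagnose why \cite{js} loses a constant factor. But the mechanism you propose is genuinely different in its bookkeeping, and it has a real gap that, perhaps ironically, can only be filled by the very step the paper isolates.

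Your asserted bound ``$b_i$ is forced to be of the same order as $\mathrm{err}_i$ and hence negligible'' is not justified. The two-sided decay of $\psi_i$ and $\phi$ only tells you that $b_i\phi^\perp(n)+\mathrm{err}_i(n)\to 0$ as $|n|\to\infty$; since both $\phi^\perp$ and the Duhamel error grow at the same exponential rate away from the matching point, this pins down their sum but not $b_i$. To get the needed quantitative bound $|b_i|\lesssim e^{-\delta''K_i}$ you must observe that $b_i$ equals (up to the fixed constant $W(\phi^\perp,\phi)$) the Wronskian $W(\psi_i,\phi)$ at the matching point, and then control $W(\psi_i,\phi)$ \emph{everywhere}. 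This requires exactly what the paper does: the increments $W(n)-W(n-1)$ are bounded by $|V-V_i|\,|u(n)u_i(n)|$, so that $\ell^2$ normalization and Cauchy--Schwarz make $W$ almost constant with total variation $\lesssim e^{-(\delta-\varepsilon)k_i}$, while the $\ell^2$-decay of $u$ and $u_i$ forces $W\to 0$ at infinity, hence $|W(n)|\lesssim e^{-(\delta-\varepsilon)k_i}$ for all $n$. You implicitly need this; it is the heart of the sharp estimate, not a side remark.

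Beyond that gap, your route does close at the sharp threshold, and it is a genuinely different organization from the paper's. The paper never builds the basis decomposition $\psi_i=a_i\phi+b_i\phi^\perp+\mathrm{err}_i$ nor runs the reflection-squared argument $a_i^2\approx 1$. Instead it evaluates $W(u,u_i)$ at the midpoint $m_i\approx k_i/2$, where the palindromic structure factors it as $u(m_i)\bigl[u(m_i+1)-u(m_i-1)\bigr]$ (in the even case), forcing one of the two brackets to be $\lesssim e^{-\frac{1}{2}(\delta-\varepsilon)k_i}$; combined with the eigenvalue equation this yields $\|\Phi(m_i)\pm\Phi_i(m_i)\|\lesssim e^{-\frac12(\delta-\varepsilon)k_i}$ directly, with the sign falling out automatically and no separate argument for the normalization of a coefficient. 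Transporting this vector back to $0$ over distance $m_i$ costs $e^{(L+\varepsilon)m_i}$, and the telescoped transfer-matrix difference contributes $e^{(L-2\delta+\varepsilon)m_i}$; both vanish iff $L<\delta$. Your Duhamel version propagates over the full window $[0,K_i]$ with error $\lesssim K_ie^{(L+\varepsilon)K_i}e^{-\delta''K_i}$, reaching the same threshold. The paper's midpoint factorization is more elementary (no Green's function), sidesteps the choice of $\phi^\perp$ entirely, and handles odd $k_i$ by a parallel factorization at $\tilde m_i=(k_i-1)/2$; your plan, once the $b_i$ estimate is supplied, would also work but needs $\phi(n_*)\neq 0$ at a fixed base point, which is fine since $(\phi(0),\phi(-1))\neq 0$.
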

Theorem \ref{Maintheoremsingular} follows directly from Theorem \ref{general}, Lemma \ref{lya} and Kotani theory.

By the definition of $\delta(\alpha,\theta)$, for any $\varepsilon>0$ there exists a sequence $\{k_i\}_{i=1}^{\infty}$
such that
\begin{equation}\label{G.appi}
    ||2\theta+k_i\alpha||_{\R/\Z}\leq e^{-(\delta-\varepsilon)|k_i|}.
\end{equation}
Without loss of generality assume $k_i>0 $.


{\bf Proof of Theorem \ref{general} }

\begin{proof}
Suppose not. Let $u$ be an   $\ell^2(\Z)$ solution, i.e.,
$H_{v,\alpha,\theta}u=Eu$, with  $L(E)< \delta(\alpha,\theta)$. Without loss of generality assume
\begin{equation*}
 || u||_{\ell^2} = \sum_n|u(n)|^2=1.
\end{equation*}

We let $u_i(n)= u(k_i-n)$,
 $V(n)=v(\theta+n\alpha)$ and $V_i(n)=v(\theta+(k_i-n)\alpha)$.
 Then by (\ref{G.appi}), evenness  and   Lipchitz continuity of $v$, one has for all $n\in\Z$,
 \begin{equation}\label{Eqp}
    |V(n)-V_i(n)|\leq Ce^{-(\delta-\varepsilon)|k_i|}.
 \end{equation}
 We also have
 \begin{equation}\label{Equ}
 u(n+1)+u(n-1)+ V(n)u(n)=Eu(n)
 \end{equation}
 and
  \begin{equation}\label{Equi}
 u_i(n+1)+u_i(n-1)+ V(n)u_i(n)=Eu_i(n).
 \end{equation}
 Let $W(n)=W(f,g)=f(n+1)g(n)-f(n)g(n+1)$ be the
Wronskian, as usual, and let

 \begin{equation*}
   \Phi(n)=\left(\begin{array}{cc}
                   u(n) \\ u(n-1)
                 \end{array}
   \right);
   \Phi_i(n)=\left(\begin{array}{cc}
                   u_i(n) \\ u_i(n-1)
                 \end{array}
   \right).
 \end{equation*}

By a standard calculation using (\ref{Eqp}), (\ref{Equ}) and (\ref{Equi}),
we have
\begin{eqnarray}
   |W(u,u_i)(n)-W(u,u_i)(n-1)| &\leq & |V(n)-V_i(n)||u(n)u_i(n)|  \nonumber \\
  &\leq &  C e^{-(\delta-\varepsilon)|k_i|}|u(n)u_i(n)|. \nonumber
\end{eqnarray}
This implies for any $m>0$ and $n$,
\begin{eqnarray}
   |W(u,u_i)(n+m)-W(u,u_i)(n-1)| &\leq & C e^{-(\delta-\varepsilon)|k_i|}\sum_{j=0}^{m-1} |u(n+j)u_i(n+j)|  \nonumber \\
  &\leq &  C e^{-(\delta-\varepsilon)|k_i|}, \label{Equ7}
\end{eqnarray}
where the second inequality holds by the fact $||u||_{\ell^2}=||u_i||_{\ell^2}=1.$

Notice that 
$\sum_n|W(u,u_i)(n)|\leq 2$ . Thus for some $n $, one has
\begin{equation*}
    |W(u,u_i)(n)|\leq C e^{-(\delta-\varepsilon)|k_i|}.
\end{equation*}
By (\ref{Equ7}), we must have that
\begin{equation}\label{Equ8}
    |W(u,u_i)(n)|\leq C e^{-(\delta-\varepsilon)|k_i|}
\end{equation}
holds for all $n$.

Now we split $k_i$ into odd or even to discuss the problem.

Case 1. $k_i$ is even. Let $m_i=\frac{k_i}{2}$,
then
 \begin{equation*}
   \Phi(m_i)=\left(\begin{array}{cc}
                   u(m_i) \\ u(m_i-1)
                 \end{array}
   \right);
   \Phi_i(m_i)=\left(\begin{array}{cc}
                   u( m_i) \\ u(m_i+1)
                 \end{array}
   \right).
 \end{equation*}

 Applying (\ref{Equ8}) with $n=m_i-1$, we have
 \begin{equation*}
    |u(m_i)||u(m_i+1)-u(m_i-1)|\leq C e^{-(\delta-\varepsilon)|k_i|}.
 \end{equation*}
 This implies
 \begin{equation}\label{Case11}
    |u(m_i)|\leq C e^{-\frac{1}{2}(\delta-\varepsilon)|k_i|},
 \end{equation}
 or
 \begin{equation}\label{Case12}
    |u(m_i+1)-u(m_i-1)|\leq C e^{-\frac{1}{2}(\delta-\varepsilon)|k_i|}.
 \end{equation}
 If (\ref{Case11}) holds, by (\ref{Equ}), we also have
 \begin{equation}\label{Case13}
    |u(m_i+1)+u(m_i-1)|\leq C e^{-\frac{1}{2}(\delta-\varepsilon)|k_i|}.
 \end{equation}
 Putting (\ref{Case11})  and (\ref{Case13})  together, we get
 \begin{equation}\label{Phi1}
    ||\Phi(m_i)+\Phi_i(m_i)||\leq C e^{-\frac{1}{2}(\delta-\varepsilon)|k_i|}.
 \end{equation}

   If (\ref{Case12}) holds, we have
   \begin{equation}\label{Phi2}
    ||\Phi(m_i)-\Phi_i(m_i)||\leq C e^{-\frac{1}{2}(\delta-\varepsilon)|k_i|}.
 \end{equation}
 Thus  in case 1  there exists $\iota\in\{-1,1\}$  such that
 \begin{equation*}
    ||\Phi(m_i)+\iota\Phi_i(m_i)||\leq C e^{-\frac{1}{2}(\delta-\varepsilon)|k_i|}.
 \end{equation*}
  Let $T_i^1 $ and $T_i^2$ be the transfer matrices  with the potentials  $V$ and $V_i$ respectively, taking $\Phi(m_i),\Phi_i(m_i)$ to $\Phi(0),\Phi_i(0)$.

  By (\ref{G24}), (\ref{Eqp}),  the usual uniform upper semi-continuity and telescoping, one has
  \begin{equation*}
    ||T_i^1||, ||T_i^2||\leq C e^{ (L(E) +\varepsilon)m_i}.
  \end{equation*}
  and
  \begin{equation*}
    ||T_i^1-T_i^2||\leq C e^{ (L(E)-2\delta+\varepsilon)m_i}.
  \end{equation*}
  Then

  \begin{eqnarray}
    ||\Phi(0)+\iota\Phi_i(0)|| &=& ||T_i^1\Phi(m_i)+\iota T_i^2 \Phi_i(m_i)|| \nonumber\\
    &=& ||T_i^1\Phi(m_i)+\iota T_i^1 \Phi_i(m_i)-\iota T_i^1 \Phi_i(m_i)+\iota T_i^2 \Phi_i(m_i)|| \nonumber \\
     &\leq &  || T_i^1||||\Phi(m_i)+\iota\Phi_i(m_i)||+ ||T_i^1-T_i^2|| ||\Phi_i(m_i) ||\nonumber\\
      &\leq &  e^{ -( \delta-L(E)-\varepsilon)m_i} + e^{ (L(E)-2\delta+\varepsilon)m_i}\nonumber\\
       &\leq & e^{ -( \delta-L(E)-\varepsilon)m_i} \label{laadd}.
  \end{eqnarray}
  This implies $||\Phi(0)||-||\Phi(2m_i+1)||\to 0$.  This is impossible because  $u\in \ell^2(\Z).$

Case 2. $k_i$ is odd. Let $\tilde{m}_i=\frac{k_i-1}{2}$,
then
 \begin{equation*}
   \Phi(\tilde{m}_i+1)=\left(\begin{array}{cc}
                   u(\tilde{m}_i+1) \\ u(\tilde{m}_i)
                 \end{array}
   \right);
   \Phi_i(\tilde{m}_i+1)=\left(\begin{array}{cc}
                   u( \tilde{m}_i) \\ u(\tilde{m}_i+1)
                 \end{array}
   \right).
 \end{equation*}

 Applying  (\ref{Equ8}) with $n=\tilde{m}_i$, we have
 \begin{equation*}
    |u(\tilde{m}_i)+u(\tilde{m}_i+1)||u(\tilde{m}_i)-u(\tilde{m}_i+1)|\leq C e^{-(\delta-\varepsilon)|k_i|}.
 \end{equation*}
 This implies
 \begin{equation*}
   | u(\tilde{m}_i)+u(\tilde{m}_i+1)|\leq C e^{-\frac{1}{2}(\delta-\varepsilon)|k_i|},
 \end{equation*}
 or
 \begin{equation*}
    |u(\tilde{m}_i+1)-u(\tilde{m}_i)|\leq C e^{-\frac{1}{2}(\delta-\varepsilon)|k_i|}.
 \end{equation*}

 Thus  in case 2,  there also exists $\iota\in\{-1,1\}$  such that
 \begin{equation*}
    ||\Phi(\tilde{m}_i+1)+\iota\Phi_i(\tilde{m}_i+1)||\leq C e^{-\frac{1}{2}(\delta-\varepsilon)|k_i|}.
 \end{equation*}
Thus by the arguments of the case 1,  we can also  get a contradiction.
\end{proof}
\subsection{Lower bound on the eigenfunctions}

  Now we turn to the proof of the left  inequality in (\ref{G.Asymptotics}).
Our key argument for the lower bound is
  \begin{lemma}\label{Keylemma2}
  Suppose for some $k>0$ and $0<t<\ln\lambda$,
  \begin{equation*}
    ||2\theta+k\alpha||=e^{-t k}.
  \end{equation*}
  Then for any $\varepsilon>0$, we must have for large $k$,
  \begin{equation}\label{Equ3}
   || U(k)||\geq e^{-(\ln\lambda-t+\varepsilon)k}.
  \end{equation}

  \end{lemma}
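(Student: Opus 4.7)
The plan is to adapt the palindromic reflection idea of Theorem \ref{general} to extract a sharp lower bound on the eigenfunction at the resonance. Set $\tilde\phi(n) = \phi(k - n)$. By evenness of $v$ together with the resonance hypothesis $|\sin\pi(2\theta + k\alpha)| = e^{-tk}$, $\tilde\phi$ is an exact solution of the reflected equation with potential $V_i(n) = V(k-n)$, where $|V(n) - V_i(n)| \leq Ce^{-tk}$. Equivalently, $(H - E)\tilde\phi(n)$ has size at most $Ce^{-tk}|\phi(k - n)|$. Let $\psi$ be a solution of $H_{\lambda,\alpha,\theta}\psi = E\psi$ linearly independent of $\phi$ with $||\tilde U(0)|| = 1$, and let $\mathcal{W} = W(\phi, \psi) \neq 0$ be the constant Wronskian.

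I would decompose $U^{\tilde\phi}(n) = \alpha(n) U(n) + \beta(n) \tilde U(n)$; a standard discrete variation-of-parameters calculation yields
\[
\alpha(n+1) - \alpha(n) = \frac{V(n) - V_i(n)}{\mathcal{W}} \phi(k-n)\psi(n),\qquad \beta(n+1) - \beta(n) = -\frac{V(n) - V_i(n)}{\mathcal{W}}\phi(k-n)\phi(n).
\]
The boundary data give $|\alpha(0)|, |\beta(0)| \leq C||U(k)||$ (one-step bound (\ref{G.new18}) applied to $U^{\tilde\phi}(0) = (\phi(k),\phi(k+1))^T$), while $||U^{\tilde\phi}(k+1)|| = ||(\phi(-1),\phi(0))^T|| = 1$ exactly by the normalization $|\phi(-1)|^2 + |\phi(0)|^2 = 1$. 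Triangle inequality thus forces
\[
|\alpha(k+1)|\cdot||U(k+1)|| + |\beta(k+1)|\cdot||\tilde U(k+1)|| \geq 1.
\]

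In the principal case $|\alpha(k+1)|\cdot||U(k+1)|| \geq 1/2$, combined with $|\alpha(0)| \leq C||U(k)||$, one obtains $|\alpha(k+1)-\alpha(0)| \geq c/||U(k)||$ whenever $||U(k)||$ is small. Telescoping the evolution and inserting the decay $|\phi(j)| \leq e^{-(\ln\lambda - \delta - \varepsilon)|j|}$ (from Theorem \ref{MaintheoremAL}) together with $|\psi(j)| \leq e^{(\ln\lambda + \varepsilon)|j|}$ (from (\ref{G24})) makes the sum $\sum_{n=0}^{k}|\phi(k-n)||\psi(n)|$ a convergent geometric series with common ratio $e^{-(2\ln\lambda - \delta)} < 1$ (since $\delta < \ln\lambda$ in the localization regime) and prefactor $e^{(\ln\lambda + \varepsilon)k}$, giving $|\alpha(k+1)-\alpha(0)| \leq C'e^{(\ln\lambda - t + \varepsilon)k}$. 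Combining yields $||U(k)|| \geq c'' e^{-(\ln\lambda - t + \varepsilon)k}$, the desired bound.

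The hard part is the complementary case $|\beta(k+1)|\cdot||\tilde U(k+1)|| \geq 1/2$: the explicit formula $\beta(k+1)\mathcal{W} = W(\phi,\tilde\phi)(k) = \phi(k+1)\phi(0) - \phi(k)\phi(-1)$ forces $|\beta(k+1)| \leq C||U(k)||$, so the crude bound $||\tilde U(k+1)|| \leq e^{(\ln\lambda + \varepsilon)k}$ only delivers $||U(k)|| \geq c e^{-(\ln\lambda + \varepsilon)k}$, which is short of the target by the missing factor $e^{tk}$. To close this gap I would exploit the rigidity imposed by Case B — that $\tilde U(k+1)$ must be nearly parallel to $U^{\tilde\phi}(k+1) = SU(0)$, with $S$ the coordinate swap — together with the palindromic identity $A_{k+1}(\theta - \varepsilon) = S A_{k+1}(\theta)^{-1} S$ (where $|\varepsilon| \leq e^{-tk}$, which follows from evenness of $v$). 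This alignment should force the unstable direction of $A_{k+1}$ into a symmetric configuration at the resonance, reducing Case B to Case A or yielding the sharp bound directly. Closing this Case B gap is the main technical obstacle of the proof.
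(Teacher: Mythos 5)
Your Case A analysis is sound and in fact does give the sharp bound in that regime; the genuine problem is Case B, which you correctly flag as an unresolved gap, and the paper's proof of Lemma \ref{Keylemma2} does not face this dichotomy because it takes a structurally different route.

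The paper argues by contradiction. Assuming $||U(k)|| \leq e^{-(\ln\lambda - t + \sigma)k}$, it first uses the already-established upper bounds from Section \ref{loc} (Lemma \ref{Keylemma} with $k_0 = k$) to show that $|\phi(n)\hat\phi(n)| = |\phi(n)\phi(k-n)|$ is at most $e^{-(\ln\lambda + \sigma' - \varepsilon)k}$ uniformly in $|n| \leq Ck$, where $\sigma' = \min\{\sigma, t\}$. Together with the almost-constancy of $W(\phi,\hat\phi)$ (increments of size $e^{-tk}|\phi\hat\phi|$) and the far-field decay of $\phi$ at $n_0 = Ck$, this yields $|W(\phi,\hat\phi)(n)| \leq e^{-(\ln\lambda + \sigma' - \varepsilon)k}$ for all $|n| \leq Ck$ — a bound on the Wronskian over the entire interval, not just at two endpoints. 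The palindromic midpoint structure (at $m = k/2$, $\hat U(m) = SU(m)$ exactly) then converts this into $||U(m) + \iota\hat U(m)|| \leq e^{-\frac12(\ln\lambda + \sigma' - \varepsilon)k}$ for some sign $\iota$. Finally, it propagates back to $n=0$ using $||T|| \leq e^{(\ln\lambda + \varepsilon)m}$, the telescoping transfer-matrix comparison $||T - \hat T|| \leq e^{(\ln\lambda - 2t + \varepsilon)m}$, and the non-resonant half-scale bound $||\hat U(m)|| \leq e^{-(\ln\lambda-\varepsilon)m}$ (from the right side of \eqref{G.Asymptotics} at $\ell = m$); adding the two contributions gives $||U(0) + \iota\hat U(0)|| \lesssim e^{-\frac{\sigma'}{2}k} + e^{-tk} \to 0$, contradicting $||U(0)|| = 1$ and $\phi \in \ell^2(\Z)$.

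The crucial quantitative ingredient your approach lacks is precisely this telescoped comparison $||T - \hat T|| \leq e^{(\ln\lambda - 2t)m}$ at the half-scale $m = k/2$: the factor $e^{-2tm} = e^{-tk}$ it buys over the trivial $2e^{\ln\lambda m}$ is exactly the missing factor $e^{tk}$ separating your Case B output from the target. Your variation-of-parameters decomposition only extracts two-point information on $\beta$ (at $n=0$ and $n=k+1$), whereas the paper leverages interval-wide smallness of the Wronskian plus the midpoint symmetry. The ``palindromic identity for $A_{k+1}$ plus alignment rigidity'' you sketch as a fix is not fleshed out, and as you acknowledge, it is exactly where the difficulty lies; without the half-scale bookkeeping it does not obviously close. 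So the proposal as written has a substantive gap, and the paper's proof is not a repackaging of your Cases A and B but a genuinely different argument.
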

  \begin{proof}

We  let $ \hat{\phi}(n)= \phi(k-n)$,
 $V(n)=2\lambda\cos2\pi(\theta+n\alpha)$ and $\hat{V}(n)=2\lambda\cos2\pi(\theta+(k-n)\alpha)$.
 Then by the assumption, one has for all $n\in\Z$,
 \begin{equation}\label{Eqpnew}
    |V(n)-\hat{V}(n)|\leq Ce^{-tk}.
 \end{equation}
 We also have
 \begin{equation}\label{Equnew}
 \phi(n+1)+\phi(n-1)+ V(n)u(n)=E\phi(n)
 \end{equation}
 and
  \begin{equation}\label{Equinew}
\hat{\phi}(n+1)+\hat{\phi}(n-1)+ \hat{V}(n)\tilde{\phi}(n)=E\hat{\phi}(n).
 \end{equation}
 Let
  \begin{equation*}
   \hat{U}(n)=\left(\begin{array}{cc}
                  \hat{\phi}(n) \\ \hat{\phi}(n-1)
                 \end{array}
   \right).
 \end{equation*}

Suppose for some small $\sigma>0$,
\begin{equation*}
    ||U(k)||\leq e^{-(\ln\lambda-t+\sigma)k}.
\end{equation*}
By Lemma \ref{Keylemma} and (\ref{G.new18})($k_0=k, y=n,y^\prime=2n$), we have
for any $k\leq |n|\leq Ck$,
\begin{eqnarray*}
  ||U(n)|| &\leq & e^{-|n-k|\ln\lambda} e^{\varepsilon|n|}||U(k)||+e^{-(\ln\lambda-\varepsilon)|n|} \\
  &\leq &  e^{-(\ln\lambda-\varepsilon)|n|}e^{(t-\sigma)k}.
\end{eqnarray*}
By Lemma \ref{Keylemma} again,  we have  for
$|n|\leq k$,
\begin{eqnarray*}
  ||U(n)|| &\leq & \max\{e^{-|n|\ln\lambda}, e^{-|n-k|\ln\lambda} ||U(k)|| \}e^{\varepsilon k}+e^{-(\ln\lambda-\varepsilon)|n|} \\
  &\leq & e^{-|n|\ln\lambda}e^{\varepsilon k}+ e^{-(2k-|n|)\ln\lambda }e^{(t-\sigma+\varepsilon)k}.
\end{eqnarray*}
This implies for $|n|\leq Ck$,
\begin{eqnarray*}
   |\hat{\phi}(n)|  |\phi(n)| &=&  |{\phi}(k-n)|  |\phi(n)| \\
  &\leq & e^{-(\ln\lambda-t  +\sigma-\varepsilon)k}+e^{-(\ln\lambda-\varepsilon)k}.
\end{eqnarray*}

By a standard calculation using (\ref{Eqpnew}), (\ref{Equnew}) and (\ref{Equinew}),
we have for any $|n|\leq C|k|$,
\begin{eqnarray}
   |W(\phi,\hat{\phi})(n)-W(\phi,\hat{\phi})(n-1)| &\leq & |V(n)-\hat{V}(n)||\phi(n)\hat{\phi}(n)|  \nonumber \\
  &\leq &   e^{-tk}|\phi(n)\hat{\phi}(n)|\nonumber   \\
  &\leq & e^{-(\ln\lambda+\sigma^{\prime}-\varepsilon)k} ,\nonumber
\end{eqnarray}
where $\sigma^\prime=\min\{\sigma,t\}$.
This implies for any $0<m\leq Ck$ and $|n|\leq Ck$,
\begin{eqnarray}
   |W(\phi,\hat{\phi})(n+m)-W(\phi,\hat{\phi})(n-1)| &\leq &  \sum_{j=0}^{m-1} e^{-(\ln\lambda+\sigma^\prime-\varepsilon)k}\nonumber \\
  &\leq &  e^{-(\ln\lambda+\sigma^\prime -\varepsilon )k}. \label{almostconstancy2}
\end{eqnarray}


By  (\ref{decayingupper}), for some $n_0=Ck$,
we must have
\begin{equation*}
    |\phi(n_0)|,|\phi(n_0-1)|\leq e^{-(\ln\lambda-\delta-\varepsilon)n_0}\leq e^{-(\ln\lambda+\sigma^{\prime})k}.
\end{equation*}
This implies,
\begin{equation*}
     |W(\phi,\hat{\phi})(n_0)|\leq  e^{-(\ln\lambda+\sigma^\prime)k}.
\end{equation*}
Combining with (\ref{almostconstancy2}), we must have that
\begin{equation}\label{almostconstancyn}
    |W(\phi,\hat{\phi})(n)|\leq  e^{-(\ln\lambda+\sigma^{\prime}-\varepsilon)k}
\end{equation}
holds for all $|n|\leq Ck$.

Now we split $k$ into odd or even to discuss the problem.

Case 1. $k$ is even. Let $m=\frac{k}{2}$,
then
 \begin{equation*}
  U(m)=\left(\begin{array}{cc}
                   \phi(m) \\ \phi(m-1)
                 \end{array}
   \right);
   \hat{U}(m)=\left(\begin{array}{cc}
                   {\phi}( m) \\ {\phi}(m+1)
                 \end{array}
   \right).
 \end{equation*}
 Applying  (\ref{almostconstancyn}) with $n=m-1$, we have
 \begin{equation*}
    |\phi(m)||\phi(m+1)-\phi(m-1)|\leq e^{-(\ln\lambda+\sigma^\prime-\varepsilon)k}.
 \end{equation*}
 This implies
 \begin{equation}\label{Case11new}
    |\phi(m)|\leq  e^{-\frac{1}{2}(\ln\lambda+\sigma^\prime-\varepsilon)k},
 \end{equation}
 or
 \begin{equation}\label{Case12new}
    |\phi(m+1)-\phi(m-1)|\leq  e^{-\frac{1}{2}(\ln\lambda+\sigma^\prime-\varepsilon)k}.
 \end{equation}
 If (\ref{Case11new}) holds, by (\ref{Equnew}), we also have
 \begin{equation}\label{Case13new}
    |\phi(m+1)+\phi(m-1)|\leq e^{-\frac{1}{2}(\ln\lambda+\sigma^\prime-\varepsilon)k}.
 \end{equation}
 Putting (\ref{Case11new})  and (\ref{Case13new})  together, we get
 \begin{equation}\label{Phi1new}
    ||U(m)+\hat{U}(m)||\leq e^{-\frac{1}{2}(\ln\lambda+\sigma^\prime-\varepsilon)k}.
 \end{equation}

   If (\ref{Case12new}) holds,  we have
   \begin{equation}\label{Phi2new}
    ||U(m)-\hat{U}(m)||\leq e^{-\frac{1}{2}(\ln\lambda+ \sigma^{\prime}-\varepsilon)k}.
 \end{equation}
 Thus  in case 1  there exists $\iota\in\{-1,1\}$  such that
 \begin{equation*}
    ||U(m)+\iota\hat{U}(m)||\leq e^{-\frac{1}{2}(\ln\lambda+\sigma^\prime-\varepsilon)k}.
 \end{equation*}
  Let $T $ and $\hat{T}$ be the transfer matrices associated  to potentials  $V$ and $\hat{V}$, taking $U(m),\hat{U}(m)$ to $U(0),\hat{U}(0)$ correspondingly.

  By (\ref{G24}), (\ref{Eqpnew}),
    the usual uniform upper semi-continuity and telescoping, one has
  \begin{equation*}
    ||T||, ||\hat{T}||\leq  e^{ (\ln\lambda +\varepsilon)m}.
  \end{equation*}
  and
  \begin{equation*}
    ||T-\hat{T}||\leq  e^{ (\ln\lambda-2t+\varepsilon)m}.
  \end{equation*}
  By the  right  inequality of (\ref{G.Asymptotics})($\ell=m,x_0=k$), it is easy to see that
  \begin{equation}\label{half}
   ||\hat{U}(m) ||\leq e^{-(\ln\lambda-\varepsilon)m}.
  \end{equation}

  Then, as in (\ref{laadd}), we have
  \begin{eqnarray*}
   ||U(0)+\iota \hat{U}(0)||   &\leq &  || T||||U(m)+\iota\hat{U}(m)||+ ||T-\hat{T}|| ||\hat{U}(m) ||\\
      &\leq &  e^{ (\ln\lambda+\varepsilon)m}e^{-\frac{1}{2}(\ln\lambda+\sigma^\prime-\varepsilon)k} + e^{ (\ln\lambda-2t+\varepsilon)m}e^{-m\ln\lambda }.
  \end{eqnarray*}
  This implies $||U(0)||-||{U}(2m+1)||\to 0$. This is impossible because  $\phi\in \ell^2(\Z).$

Case 2. $k$ is odd. Let $\tilde{m}=\frac{k-1}{2}$,
then
 \begin{equation*}
   U(\tilde{m}+1)=\left(\begin{array}{cc}
                   \phi(\tilde{m}+1) \\ \phi(\tilde{m})
                 \end{array}
   \right);
   \hat{U}(\tilde{m}+1)=\left(\begin{array}{cc}
                   \phi( \tilde{m}) \\ \phi(\tilde{m}+1)
                 \end{array}
   \right).
 \end{equation*}

 Combining with (\ref{almostconstancyn}), we have
 \begin{equation*}
    |\phi(\tilde{m})+\phi(\tilde{m}+1)||\phi(\tilde{m})-\phi(\tilde{m}+1)|\leq  e^{-(\ln\lambda+\sigma^\prime-\varepsilon)k}.
 \end{equation*}
 This implies
 \begin{equation*}
   | \phi(\tilde{m})+\phi(\tilde{m}+1)|\leq  e^{-\frac{1}{2}(\ln\lambda+\sigma^\prime-\varepsilon)k},
 \end{equation*}
 or
 \begin{equation*}
    |\phi(\tilde{m}+1)-\phi(\tilde{m})|\leq  e^{-\frac{1}{2}(\ln\lambda+\sigma^\prime-\varepsilon)k}.
 \end{equation*}

 Thus  in case 2,  there also exists $\iota\in\{-1,1\}$  such that
 \begin{equation*}
    ||U(\tilde{m}+1)+\iota\hat{U}(\tilde{m}+1)||\leq C e^{-\frac{1}{2}(\ln\lambda+\sigma^\prime-\varepsilon)k}.
 \end{equation*}
  As before, we also  get   a contradiction.
\end{proof}
{\bf Proof of the  left    inequality of (\ref{G.Asymptotics})}
\begin{proof}
The  left  inequality of (\ref{G.Asymptotics}) already follows  except for Subcase ii in the proof of Theorem \ref{MaintheoremAL}, by Remark \ref{Rcase1}.

Thus we only need to consider the case when $\eta\geq \gamma=\frac{\varepsilon}{C}$. Letting  $t=\eta\frac{|\ell|}{|x_0|}$ and $k=x_0$ in Lemma \ref{Keylemma2},
we obtain
\begin{equation*}
    ||U(x_0)||\geq e^{-(\ln\lambda+\varepsilon)|x_0|} e^{\eta|\ell|}.
\end{equation*}
Combining with (\ref{G.new18}), this completes the proof.
\end{proof}

\section{Universal reflective hierarchical structure}\label{refl}
We first present the local version of Theorem \ref{Maintheoremdecay}. The definition of $f(\ell)$ in Theorem \ref{Maintheoremdecay} depends on $\theta$ and $\alpha$.
Thus sometimes we will write $f_{\alpha,\theta}(\ell)$ to make clear what $\theta$ is used.

                     \begin{theorem}\label{Maintheoremdecaylocal}
                     Fix $\delta$ with $0<\delta<\ln\lambda$.
                     Suppose   $\alpha$ is  Diophantine.
                    Let $\varepsilon>0$ be small enough.
                     Then there exists $L_0=L_0(\lambda,\alpha,\delta,\hat{C})$ \footnote{We omit the dependence on $\varepsilon$ whenever  $\varepsilon$ is (implicitly) present in the
statement.} such that if for all $k$ with $  L_1\leq |k|\leq C|\ell| $,
\begin{equation}\label{2thetaalphalocal}
   ||2\theta+2s_0\alpha+k \alpha||_{\R/\Z}\geq  e^{- (\delta+\varepsilon)|k |}.
\end{equation}
and the solution of $H\phi=E\phi$  satisfies  \eqref{shn} for all $k$
with $|k-s_0|\leq C|\ell|$ and $||U(s_0)||=1$, where $C=C(\alpha,\delta,\lambda)$ is a large constant and  $L_0\leq L_1\leq \frac{|\ell|}{C}$, 
then the following statement holds:
                      \par
                      Let   $x_0$ (we can choose any one if $x_0$ is not unique) be such that
              \begin{equation*}
             |\sin\pi(2\theta+2s_0\alpha+x_0\alpha)|  = \min_{|x|\leq 2|\ell|}|\sin\pi(2\theta+2s_0\alpha+x\alpha)|.
              \end{equation*}
              Then  if $|x_0|\geq L_1$, we have
                     \begin{equation}\label{G.Asymptoticslocal}
                      f_{\alpha,\theta+s_0\alpha}(\ell)e^{-\varepsilon|\ell|} \leq ||U(\ell)||\leq f_{\alpha,\theta+s_0\alpha}(\ell)e^{\varepsilon|\ell|}.
                       \end{equation}
                       If $|x_0|\leq L_1$, we have
                     \begin{equation}\label{G.Asymptoticslocalnew}
                     e^{-\ln\lambda|\ell|}e^{-\varepsilon|\ell|} \leq ||U(\ell)||\leq e^{-\ln\lambda|\ell|} e^{\varepsilon|\ell|}.
                       \end{equation}
                     \end{theorem}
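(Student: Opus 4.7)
The plan is to mirror the proofs of Theorem \ref{MaintheoremAL} (upper bound) and the left inequality of \eqref{G.Asymptotics} from Section \ref{js94} (lower bound), but carried out locally around $s_0$ instead of the origin. Translate: set $\theta'=\theta+s_0\alpha$ and $\tilde\phi(n)=\phi(s_0+n)$, so $H_{\lambda,\alpha,\theta'}\tilde\phi=E\tilde\phi$ with $\|U^{\tilde\phi}(0)\|=1$ and $\tilde\phi$ satisfying \eqref{shn} on $|n|\leq C|\ell|$. The hypothesis \eqref{2thetaalphalocal} becomes exactly the statement that, in the shifted phase, $|\sin\pi(2\theta'+k\alpha)|\geq e^{-(\delta+\varepsilon)|k|}$ for $L_1\leq|k|\leq C|\ell|$. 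Consequently, in the window $[-2C|\ell|,2C|\ell|]$ the only index at which the ``phase small divisor'' can be exponentially small is the minimizer $x_0$ (and only when $|x_0|\geq L_1$), exactly as in the global case.

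The three work-horses Lemmas \ref{Keylemma}, \ref{Keylemma1}, \ref{Keylemma2} are stated for a generic solution $\varphi$ and only use (i) the Diophantine condition on $\alpha$, (ii) the uniform bound \eqref{G24} on transfer matrices, (iii) the Uniformity Theorem \ref{universalth} with the Block Expansion Theorem \ref{blockth}, and (iv) the normalization at the base point through Lemma \ref{lem}. Item (iv) is the only place where $\|U(0)\|=1$ entered globally, and here it is replaced by our assumed $\|U(s_0)\|=1$; item (iii) uses bounds on $|\sin\pi(2\theta'+(i+j)\alpha)|$ that are supplied by \eqref{2thetaalphalocal} together with the DC exactly as before. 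So all three lemmas transfer verbatim, provided every interval built in their proofs stays inside $[s_0-C|\ell|,s_0+C|\ell|]$; this dictates the choice of $C=C(\alpha,\lambda,\delta)$ in the hypothesis.

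For the upper bound: if $|x_0|\leq L_1$, there is no exponentially strong resonance of length comparable to $|\ell|$, so apply Lemma \ref{Keylemma} with $y=\ell$, $k=2|\ell|$, $y'=2\ell$ to obtain $\|U(s_0+\ell)\|\leq e^{-(\ln\lambda-\varepsilon)|\ell|}$, proving \eqref{G.Asymptoticslocalnew}. If $|x_0|\geq L_1$, follow the dichotomy of Subcases i and ii in the proof of Theorem \ref{MaintheoremAL}. When $\eta\leq\gamma$, the uniformity argument on a single interval pair straddling $\ell$ directly yields the pure $\ln\lambda$ decay, matching the second summand in the definition of $f$. When $\eta\geq\gamma$, apply Lemma \ref{Keylemma1} with $k=x_0$ and $t=\eta|\ell|/|x_0|$ to bound $\|U(s_0+x_0)\|\leq e^{-|x_0|\ln\lambda+\eta|\ell|+\varepsilon|x_0|}$, then propagate to $\ell$ by Lemma \ref{Keylemma}, distinguishing $x_0\cdot\ell\leq 0$ (which reduces to pure decay and gives Case 1 of $f$) from $x_0\cdot\ell>0$ (which gives the first summand of Case 2). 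The lower bound is handled the same way: the transfer-matrix estimate \eqref{G.new18} gives it immediately except when $x_0\cdot\ell>0$ and $\eta\geq\gamma$, in which case Lemma \ref{Keylemma2} applied with $k=x_0$, $t=\eta|\ell|/|x_0|$ yields a matching lower bound at $s_0+x_0$, then \eqref{G.new18} propagates it to $\ell$.

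The main obstacle is the lower bound step. The palindromic/Wronskian argument in Lemma \ref{Keylemma2} was originally closed by invoking $\phi\in\ell^2(\Z)$; here that is unavailable. The replacement is to use the already-established local upper bound at a single point $n_0\sim C|x_0|$ to make the Wronskian small there, propagate that smallness via \eqref{almostconstancy2} to the midpoint $m$, and then derive a contradiction with $\|U(s_0)\|=1$ via the reflective identity $\|U(s_0)+\iota\hat U(s_0)\|\to 0$ combined with the transfer-matrix upper bound on the reflected companion. Checking that $n_0$ stays inside the validity window $|n-s_0|\leq C|\ell|$, and that $\ln\lambda-\delta$ comfortably beats the resonance contribution $\eta|\ell|/|x_0|\leq\delta+\varepsilon$, forces the choice of $L_0$ in the statement and is the only place where the quantitative constants really have to be arranged carefully.
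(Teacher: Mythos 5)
Your proposal is essentially the paper's own proof, made explicit. The paper handles Case $|x_0|\geq L_1$ by the one-line remark that shifting the operator by $s_0$ units and restricting the generalized-eigenfunction bound \eqref{shn} to the scale $C|\ell|$ makes the Section~3--4 arguments go through verbatim, and Case $|x_0|\leq L_1$ by citing Lemma~\ref{Keylemma} after the same shift; you carry out exactly that shift-and-reuse plan, checking that Lemmas~\ref{Keylemma}, \ref{Keylemma1}, \ref{Keylemma2} only see (i) the Diophantine condition, (ii) the uniform upper bound \eqref{G24}, (iii) the Uniformity/Block-Expansion machinery fed by \eqref{2thetaalphalocal}, and (iv) the normalization at the base point, all of which the local hypotheses supply.

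One adjustment to your assessment of ``the main obstacle'': the published proof of Lemma~\ref{Keylemma2} does not actually close via $\phi\in\ell^2$ in any essential way. The smallness of the Wronskian at $n_0=Ck$ is already obtained from the upper bound \eqref{decayingupper} at a single point, and the final contradiction is simply $\|U(0)\|=1$ versus the exponential smallness of $\|U(k+1)\|$ from that same local upper bound; the phrase ``impossible because $\phi\in\ell^2$'' is shorthand for the already-established decay rather than a separate global input. So the ``replacement'' you propose is in fact what the proof already does, and the only genuine bookkeeping is what you note at the end: ensuring $n_0\sim C|x_0|\leq 2C|\ell|$ stays inside the window on which \eqref{shn} and \eqref{2thetaalphalocal} are assumed, which pins down $C$ and $L_0$. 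With that understood, the proposal is correct and matches the intended argument.
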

                     \begin{proof}
                     Case 1:$|x_0|\geq L_1$.
                     In sections 3 and 4, we completed the proof of
                     Theorem \ref{Maintheoremdecay}. It is immediate
                     that if we shift the operator by $s_0$ units and replace the definition of the generalized eigenfunctions
                     $\phi$ with the assumption of \eqref{shn} only on the scale $C|\ell|$, our arguments will hold for  \eqref{G.Asymptoticslocal}  directly. In order to avoid  repetition, we  omit the proof.
                     \par
                     Case 2: $|x_0|\leq L_1$.
                     \eqref{G.Asymptoticslocalnew} follows directly from  Lemma \ref{Keylemma} by
                     shifting the operator   $s_0$ units.

                     \end{proof}
                     \begin{remark}\label{re52}
                     In order to obtain \eqref{G.Asymptoticslocal}, we
                     only need  condition  \eqref{2thetaalphalocal} on
                     scale $\frac{|\ell|}{C}\leq |k|\leq C|\ell|$ and
                     condition \eqref{shn}  on scale $|k|\leq C|\ell|$. Moreover, if we assume for $|k|\leq L_1$,
                     \begin{equation}\label{2thetaalphalocalnew}
   ||2\theta+2s_0\alpha+k \alpha||_{\R/\Z}\geq  e^{- (\delta+\varepsilon)|k |},
\end{equation}
then \eqref{G.Asymptoticslocalnew} and \eqref{G.Asymptoticslocal} imply
 \begin{equation}\label{G.Asymptoticslocalnew1}
                      f_{\alpha,\theta+s_0\alpha}(\ell)e^{-\varepsilon|\ell|} \leq ||U(\ell)||\leq f_{\alpha,\theta+s_0\alpha}(\ell)e^{\varepsilon|\ell|}.
                       \end{equation}
                       in both cases.
                     \end{remark}

We will now prove Theorem \ref{Universalend}.

\begin{theorem}\label{localmaximum}
Fix $\varsigma_1>0$, $0<\delta<\ln\lambda$ and $s_0\in \Z$.
Then there exists a constant $L_0=L_0(\alpha,\lambda,\delta,\varsigma_1)$  such that the following statement holds.
Let $L_1\geq L_0$. Suppose  $K$ satisfies
 $ |K| \geq CL_1$ and
\begin{equation}\label{Kiepsilon}
   ||2\theta+2s_0\alpha+K \alpha||_{\R/\Z}\leq  e^{-\varsigma_1|K |},
\end{equation}
and  for all $k$ with $L_1\leq |k|\leq C|K|$
\begin{equation}\label{2thetaalpha}
   ||2\theta+2s_0\alpha+k \alpha||_{\R/\Z}\geq  e^{- (\delta+\varepsilon)|k |},
\end{equation}
and
$s_0$ is a $CK$-local maximum, 
where $C=C(\alpha,\lambda,\delta,\varsigma_1)$ is a large constant.
 Then  there exists a $\frac{3\varsigma_1}{4\ln\lambda}K$-local
 maximum \footnote{$3/4$ can be replaced with $1-\varepsilon$ for any $\varepsilon>0$.}$b_K$ such that
\begin{equation}\label{Kilocalold}
    |b_K-K-s_0|\leq 2L_1.
\end{equation}

\end{theorem}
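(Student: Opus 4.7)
The plan is to invoke the local asymptotic Theorem \ref{Maintheoremdecaylocal} centered at $s_0$, read off the shape of $\|U(s_0+\ell)\|$ on scales $|\ell|\sim|K|$, and then define $b_K$ as a maximizer of $\|U(\cdot)\|$ over the small window $[s_0+K-2L_1,s_0+K+2L_1]$. Both hypotheses of Theorem \ref{Maintheoremdecaylocal} are in place: normalizing $\|U(s_0)\|=1$ and using that $s_0$ is a $CK$-local maximum gives $\|U(s_0+k)\|\le 1$ on $|k|\le CK$, so \eqref{shn} holds with $\hat{C}=1$, while \eqref{2thetaalpha} is exactly the required condition \eqref{2thetaalphalocal}. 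Hence for every $\ell$ with $L_1\le|\ell|\le C|K|/2$ I obtain
\begin{equation*}
f_{\alpha,\theta+s_0\alpha}(\ell)e^{-\varepsilon|\ell|}\le\|U(s_0+\ell)\|\le f_{\alpha,\theta+s_0\alpha}(\ell)e^{\varepsilon|\ell|}.
\end{equation*}

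Next I would identify the minimizer $x_0(\ell)$ appearing in the definition of $f$ for $|\ell|$ comparable to $|K|$. At $x=K$ the phase satisfies $|\sin\pi(2\theta+2s_0\alpha+K\alpha)|\le e^{-\varsigma_1|K|}$ by \eqref{Kiepsilon}, while every other $x$ with $L_1\le|x|<|K|$ has $|\sin\pi(2\theta+2s_0\alpha+x\alpha)|\ge e^{-(\delta+\varepsilon)|x|}>e^{-\varsigma_1|K|}$ by \eqref{2thetaalpha} and $\varsigma_1<\delta$; the Diophantine condition on $\alpha$ rules out any competing minimizer of comparable size at other $|x|\le 2|\ell|$, since a sum-to-product reduction of such a competitor $x'$ would force $\|(K-x')\alpha\|\lesssim e^{-\varsigma_1|K|}$, which is incompatible with the DC for $|K-x'|\le 4|K|$. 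Therefore $x_0=K$, $x_0\cdot\ell>0$ for $\ell>0$, and $\eta=\varsigma_1|K|/|\ell|$, placing us in Case 2 of the definition of $f$. Writing $\ell=K+j$ this yields
\begin{equation*}
f(K+j)=e^{-(\ln\lambda)(K+|j|)+\varsigma_1K}+e^{-(\ln\lambda)|K+j|},
\end{equation*}
whose peak at $j=0$ is $f(K)\sim e^{-(\ln\lambda-\varsigma_1)K}$ (from the first term).

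A direct computation yields $f(K+j)\le e^{-\varsigma_1K/4}f(K)$ for every $0<|j|\le\frac{3\varsigma_1}{4\ln\lambda}K$: on the right ($j>0$) both summands decay exponentially in $j$, whereas on the left ($j<0$) the boosted term $e^{-(\ln\lambda-\varsigma_1)K}e^{-|j|\ln\lambda}$ decays at rate $\ln\lambda$ and the unboosted tail $e^{-(\ln\lambda)(K-|j|)}$ reaches the peak height only at $|j|=\varsigma_1K/\ln\lambda$, so on $|j|\le\frac{3\varsigma_1}{4\ln\lambda}K$ it stays at most $e^{-(\ln\lambda-3\varsigma_1/4)K}=e^{-\varsigma_1K/4}f(K)$. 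Choosing $b_K$ as a maximizer of $\|U(\cdot)\|$ on $[s_0+K-2L_1,s_0+K+2L_1]$, the two-sided bound at $\ell=K$ gives $\|U(b_K)\|\ge f(K)e^{-\varepsilon K}$, and for any $s$ with $|s|\le\frac{3\varsigma_1}{4\ln\lambda}K$ the same bound at $\ell'=b_K-s_0+s$ gives
\begin{equation*}
\frac{\|U(b_K+s)\|}{\|U(b_K)\|}\le\frac{f(\ell')}{f(K)}e^{2\varepsilon K}\le e^{-\varsigma_1K/4+2\varepsilon K}\le 1,
\end{equation*}
once $\varepsilon<\varsigma_1/8$, establishing the local-maximum property.

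The main obstacle is precisely this competition between the two terms of $f(K+j)$ for $j<0$: the unboosted tail overtakes the boosted term at $|j|=\varsigma_1K/(2\ln\lambda)$ and catches up to the peak only at $|j|=\varsigma_1K/\ln\lambda$, so $\varsigma_1K/\ln\lambda$ is the natural ceiling on the admissible window, and the constant $\frac{3}{4}$ (replaceable by any $1-\varepsilon$ per the footnote) is exactly the safety margin needed to absorb the multiplicative $e^{\pm\varepsilon K}$ errors coming from Theorem \ref{Maintheoremdecaylocal}. The secondary subtlety is ensuring $x_0=K$ uniformly in $\ell$ for $|\ell|\sim|K|$, which is handled by \eqref{2thetaalpha} and the DC once $C(\alpha,\lambda,\delta,\varsigma_1)$ (and hence $L_1$) is chosen large enough.
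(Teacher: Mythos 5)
Your outline correctly applies Theorem \ref{Maintheoremdecaylocal} centered at $s_0$ and correctly identifies $x_0=K$ for $\ell'$ of order $K$, but the key inequality
\begin{equation*}
\frac{\|U(b_K+s)\|}{\|U(b_K)\|}\le\frac{f(\ell')}{f(K)}e^{2\varepsilon K}\le e^{-\varsigma_1K/4+2\varepsilon K}\le 1
\end{equation*}
is false for intermediate values of $s$. The middle inequality $f(\ell')/f(K)\le e^{-\varsigma_1 K/4}$ requires $|\ell'-K|\gtrsim \varsigma_1K/\ln\lambda$, but the range you need to control is all of $2L_1<|s|\le\frac{3\varsigma_1}{4\ln\lambda}K$. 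Take, for instance, $|s|\approx 5L_1$, so $|\ell'-K|\approx$ a few $L_1$ (a constant, not a fraction of $K$). Then $f(\ell')/f(K)\approx e^{-c L_1}$ for some fixed $c$, while the error factor $e^{2\varepsilon K}$ is unbounded as $K\to\infty$. The product is $e^{-cL_1+2\varepsilon K}\gg 1$ once $K$ is large. So your argument establishes the local-maximum property only for $|s|\le 4L_1$ (where the maximizer defining $b_K$ handles it trivially) and for $|s|\gtrsim C\varepsilon K$ (where $f$ has genuinely decayed by $\ge e^{-2\varepsilon K}$), leaving the range in between uncontrolled. This intermediate range is exactly the nontrivial content of the theorem, and it cannot be closed by the $e^{\pm\varepsilon|\ell|}$-asymptotics alone, because those error factors are proportional to the full scale $K$ rather than to the offset $|s|$.

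The paper's proof handles this in two separate steps. First, the asymptotics of Theorem \ref{Maintheoremdecaylocal} are used only to conclude that the sup of $\|U\|$ over $|k|\le\frac{3\varsigma_1}{4\ln\lambda}|K|$ is achieved at some $|k|\le\varepsilon|K|$; that is, $b_K$ is located within the coarse window $\varepsilon|K|$ of $K$, a statement which the $e^{\pm\varepsilon K}$ errors \emph{can} support. Then, assuming for contradiction that the maximum sits at $K+k_0$ with $2L_1\le|k_0|\le\varepsilon|K|$, the paper runs a fresh Green's-function argument (Theorem \ref{universalth}, Lemma \ref{Le.Uniform}, block expansion) directly at the hypothetical local maximum $K+k_0$, on intervals of size comparable to $|k_0|$ rather than $K$. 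Because the relevant uniformity and regularity scales are tied to $|k_0|$, this yields an inequality of the form $\|U(K+k_0)\|\le e^{-c|k_0|}\|U(K+k_0)\|$ with $c>0$ independent of $K$, which is a clean contradiction since $|k_0|\ge 2L_1$. Your proposal is missing this second, scale-matched argument entirely; without it the claim is not established.
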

\begin{proof}
By shifting the operator, we can assume $s_0=0$.
Let $\epsilon $ be such that
$$||2\theta+2s_0\alpha+K \alpha||_{\R/\Z}=  e^{-\epsilon |K|}.$$
Then $\varsigma_1\leq\epsilon\leq \delta+\varepsilon$.

  By Theorem \ref{Maintheoremdecaylocal}\footnote{$s_0$ is a local
    maximum so that $\hat{C}$ in \eqref{shn} is 1, thus the largeness
    in Theorem \ref{Maintheoremdecaylocal} does not depend on $\hat{C}.$} with
  $\ell=x_0=K$, one has
  \begin{equation}\label{Newequ1}
 e^{-(\ln\lambda-\epsilon +\varepsilon)|K |} \leq \frac{|| U(s_0+K)||}{||U(s_0)||}\leq e^{-(\ln\lambda-\epsilon -\varepsilon)|K|}.
  \end{equation}
  By Theorem \ref{Maintheoremdecaylocal} again, one has
  \begin{equation}\label{Newequ2}
   \sup_{|k|\leq \varepsilon |K |}|| U(K +k)||=\sup_{|k|\leq \frac{3\varsigma_1}{4\ln\lambda}|K|} || U(K +k)||.
  \end{equation}
  Thus there exists a $\frac{3\varsigma_1}{4\ln\lambda}|K|$-local maximum  $b_{K}$ such that
   \begin{equation}\label{Newequ3}
| b_{K}-K|\leq \varepsilon |K|.
  \end{equation}
Suppose \eqref{Kilocalold} does not hold. Then there exists
$k_0$  with $  2L_1\leq |k_0|\leq \varepsilon K $   such that
\begin{equation}\label{20171}
      || U(K +k_0)||= \sup_{|k|\leq \varepsilon |K |}|| U(K +k)||=\sup_{|k|\leq \frac{3\varsigma_1}{4\ln\lambda}|K|} || U(K +k)||.
\end{equation}
where $L_1$ is such that  (\ref{Kiepsilon}),(\ref{2thetaalpha}) hold.

Case 1. $\min_{|k|\leq 2|k_0|} ||2\theta+k\alpha||_{\R/\Z}\geq e^{-\varepsilon |k_0|}$.

Let $\frac{p_n}{q_n}$ be the continued fraction expansion of $\alpha$.
For $\gamma>0$ (we will let $\gamma=\frac{\varepsilon}{C}$),
let $n$ be the largest integer such that
\begin{equation*}
   2 q_n \leq  \gamma |k_0|,
\end{equation*}
and  let $s$ be the largest positive integer such that $2sq_n\leq
\gamma |k_0| $.

Construct intervals $I_1=[sq_n,sq_{n}-1]$ and $I_2=[K+k_0-sq_n,K+k_0+sq_{n}-1]$.

\textbf{ Claim 1:} We have
\begin{equation}\label{smallplus}
    \min_{i,i^\prime\in I_1\cup I_2}||2\theta+(i+i^\prime)\alpha||_{\R/\Z} \geq e^{-\varepsilon |k_0|}
\end{equation}
and  for any $i\neq i^\prime$, $i,i^\prime\in I_1\cup I_2$,
\begin{equation}\label{smallplus1}
    ||(i-i^\prime)\alpha||_{\R/\Z} \geq e^{-\varepsilon |k_0|}.
\end{equation}

By Theorem \ref{universalth} and the DC condition on $\alpha$, $\{\theta_i\}_{i\in I_1\cup I_2}$ is $\varepsilon$-uniform.
Combining with  Lemma  \ref{Le.Uniform}, there exists some $i_0$ with  $i_0\in I_1\cup I_2$
    such that
      $ \theta_{i _0}\notin  A_{4sq_n-1,\ln\lambda-\varepsilon }$.
      By Lemma \ref{lem}, $i_0$ cannot be in $I_1$ so must be in $I_2$.
      Set $I=[i_0-2sq_n+1,i_0+2sq_n-1]=[x_1,x_2]$.   By    (\ref{Cramer1}), (\ref{Cramer2}) and (\ref{Numerator}) again,
 one has
\begin{equation*}
|G_I(K +k_0,x_i)|\leq e^{(\ln\lambda+\varepsilon )(4sq_n-1-|K_j+k_0-x_i|)-(4sq_n-1)(\ln\lambda -\varepsilon)}\leq e^{ \varepsilon sq_n}e^{-|K+k_0-x_i|\ln \lambda }.
\end{equation*}
 Notice that $|K+k_0-x_1|,|K+k_0-x_2|\geq sq_n-1$.
 By (\ref{Block}) and (\ref{20171}),
 \begin{equation*}
    |\phi(K +k_0)|\leq e^{-(\ln\lambda-\varepsilon)sq_n } (|\phi(x_1)|+|\phi(x_0)|)\leq e^{-(\ln\lambda-\varepsilon)sq_n  } ||U(K+k_0)||.
\end{equation*}
Similarly,
\begin{equation*}
    |\phi(K +k_0-1)|\leq e^{-(\ln\lambda-\varepsilon)sq_n  } ||U(K+k_0)||.
\end{equation*}
The last two inequalities imply that
\begin{equation}\label{imp}
  ||U(K+k_0)||  \leq e^{-(\ln\lambda-\varepsilon)sq_n  } ||U(K+k_0)||.
\end{equation}
Since $2(s+1)q_n\geq \gamma |k_0|$ and  $|k_0|\geq 2L_1$,  \eqref{imp}   is impossible.  

Case 2.  $\min_{|k|\leq 2|k_0|} ||2\theta+k\alpha||_{\R/\Z}\leq e^{-\varepsilon |k_0|}$ for some $\varepsilon>0$.

In this case,   we  construct, as before,  intervals $I_1$ around $0$ and $I_2$ around $K+k_0$.

Suppose $i\in I_1$.
For $i^\prime\in I_2$, we have
\begin{eqnarray}
  \nonumber ||2\theta +(i+i^\prime)\alpha|| _{\R/\Z}&\geq & ||(i+i^\prime-K)\alpha|| _{\R/\Z}-||2\theta +K\alpha|| _{\R/\Z} \\
   &\geq &  ||(i+i^\prime-K)\alpha|| _{\R/\Z}-e^{-\epsilon |K|}\label{exchange1}
\end{eqnarray}
and
\begin{eqnarray}
 \nonumber ||(i-i^\prime)\alpha|| _{\R/\Z}&\geq & ||2\theta+(i-i^\prime+K)\alpha|| _{\R/\Z}-||-2\theta -K\alpha|| _{\R/\Z} \\
   &\geq &   ||2\theta+(i-i^\prime+K)\alpha|| _{\R/\Z}-e^{-\epsilon |K|}.\label{exchange2}
\end{eqnarray}
Suppose $i\in I_2$.

For $i^\prime\in I_1$, we have
\begin{eqnarray}
  \nonumber ||2\theta +(i+i^\prime)\alpha|| _{\R/\Z}&\geq & ||(i-K+i^\prime)\alpha|| _{\R/\Z}-||2\theta +K\alpha|| _{\R/\Z} \\
   &\geq &  ||(i-K+i^\prime)\alpha|| _{\R/\Z}-e^{-\epsilon |K|}\label{exchange5}
\end{eqnarray}
and
\begin{eqnarray}
  ||(i-i^\prime)\alpha|| _{\R/\Z} &\geq &|| 2\theta-(i-K-i^\prime)\alpha|| _{\R/\Z}-||2\theta +K\alpha|| _{\R/\Z}\\
 &\geq & || 2\theta-(i-K-i^\prime)\alpha|| _{\R/\Z}-e^{-\epsilon |K|} .\label{exchange6}
\end{eqnarray}
For $i^\prime\in I_2$, we have
\begin{eqnarray}
  \nonumber ||2\theta +(i+i^\prime)\alpha|| _{\R/\Z}&\geq & ||-2\theta+(i-K+i^\prime-K)\alpha|| _{\R/\Z}-||4\theta +2K\alpha|| _{\R/\Z} \\
   &\geq & ||2\theta-(i-K+i^\prime-K)\alpha|| _{\R/\Z}-2e^{-\epsilon |K|}\label{exchange3}
\end{eqnarray}
and
\begin{eqnarray}
  ||(i-i^\prime)\alpha|| _{\R/\Z}= || (i-K-(i^\prime-K))\alpha|| _{\R/\Z}.\label{exchange4}
\end{eqnarray}
Inequalities from (\ref{exchange1}) to (\ref{exchange4}) imply that
the small divisor
conditions on $\theta_i +\theta_{i^\prime}$ and
$\theta_i -\theta_{i^\prime}$ get swapped upon shifting of the elements in $I_2$ by $K$ units.

Let $|x_0|\leq 2|k_0|$ be such that
$||2\theta+x_0\alpha||_{\R/\Z}\leq e^{-\varepsilon |k_0|}$.

Case 2.1. $|k_0+x_0|\geq \varepsilon |k_0|$. In this case let $[x_1,x_2]=[K+k_0-\varepsilon |k_0|,K+k_0+\varepsilon |k_0|]$.
 By the small divisor conditions (\ref{exchange1}) to
 (\ref{exchange4}) and following the proof of  \eqref{first},  and \eqref{second}, we get
 \begin{eqnarray}
    ||U(K+k_0)|| &\leq& e^{-(\ln\lambda-\varepsilon) |x_1-K-k_0|}||U(x_1)||+e^{-(\ln\lambda-\varepsilon) |x_2-K-k_0|}||U(x_2)|| \\
    &\leq&  e^{-(\ln\lambda-\varepsilon) \varepsilon |k_0|}||U(x_1)||+e^{-(\ln\lambda-\varepsilon) \varepsilon|k_0|}||U(x_2)||\\
    &\leq& e^{-(\ln\lambda-\varepsilon) \varepsilon |k_0|}||U(K+k_0)||,\label{imp2}
 \end{eqnarray}
 where the third inequality holds because $K+k_0$ is the local maximum. \eqref{imp2} is also impossible for $|k_0|\geq 2L_1$.


Case 2.2. $|k_0+x_0|\leq \varepsilon |k_0|$.
In this case, $|x_0|\geq  \frac{1}{2} |k_0|\geq   L_1$ so that the condition \eqref{2thetaalpha} holds for all $|k|\geq |x_0| $.
  By the  small divisor conditions (\ref{exchange1}) to (\ref{exchange4}) again, and following the proof of (\ref{Equ9}), we get (using \eqref{20171})
  \begin{eqnarray*}
   ||U(K+k_0)|| 
      &\leq& ||U(K+k_0)||e^{-(\ln\lambda -\delta-\varepsilon) |k_0|}.
  \end{eqnarray*}
 This is also impossible. 

\end{proof}
\textbf{Proof of Claim 1.}
\begin{proof}
Without loss of generality assume $i\in I_1$.
For $i^\prime\in I_2$, by the DC condition on $\alpha$ we have
\begin{eqnarray*}
  ||2\theta +(i+i^\prime)\alpha|| _{\R/\Z}&\geq & ||(i+i^\prime-K)\alpha|| _{\R/\Z}-||2\theta +K\alpha|| _{\R/\Z} \\
   &\geq & e^{-\varepsilon |k_0|}
\end{eqnarray*}
and
\begin{eqnarray*}
  ||(i-i^\prime)\alpha|| _{\R/\Z}&\geq & ||2\theta+(i-i^\prime+K)\alpha|| _{\R/\Z}-||-2\theta -K\alpha|| _{\R/\Z} \\
   &\geq & e^{-\varepsilon |k_0|}.
\end{eqnarray*}
For $i^\prime\in I_1$, the proof is trivial.
\end{proof}
{\bf Proof of Theorem  \ref{Universalend}}
\begin{proof}
Without loss of generality, assume $k_0=0$.
Let $\hat{K}=L_0(\alpha,\lambda,\delta,
\varsigma)$
in Theorem \ref{localmaximum}.

 By Theorem \ref{localmaximum} with $s_0=0$,    $K=K_{j_0}$, $\varsigma_1= \varsigma $ and $L_1=\hat{K}$,
there exists a local $\frac{3\varsigma}{4\ln\lambda}K_{j_0}$ maximum $b_{{j_0}}$ such that
  $| b_{{j_0}}-K_{j_0}|\leq 2\hat{K}$.
  Let $ b_{{j_0}}-K_{j_0}=b_{{j_0}}^\prime$ with $|b_{{j_0}}^\prime|\leq 2\hat{K}$.

  Shifting the operator $H_{\lambda,\alpha,\theta}$  by $b_{{j_0}}$ units, we get the operator $H_{\lambda,\alpha,\theta+b_{{j_0}}\alpha}$.
  By the conditions of Theorem \ref{Universalend},
  $\zeta<\delta+\varepsilon<\ln \lambda,$  we have
  \begin{eqnarray}
  \nonumber  ||2(\theta+b_{{j_0}}\alpha)+k\alpha||_{\R/\Z} &\geq& ||2\theta-(2b_{{j_0}}^\prime\alpha+k\alpha)||_{\R/\Z} -||4\theta+2K_{j_0}\alpha||_{\R/\Z} \\
   \nonumber  &\geq&  ||2\theta-(2b_{{j_0}}^\prime\alpha+k\alpha)||_{\R/\Z} - 2e^{-(\varsigma +\varepsilon)|K_{j_0}|}\\
      \nonumber &\geq& e^{- (\delta+\varepsilon)(|k | +4\hat{K})}\\
                 &\geq& e^{- (\delta+\varepsilon) |k |},\label{smallshift}
  \end{eqnarray}
  for all $\frac{1}{2}\hat{K}^2\leq |k|\leq \frac{ \varsigma}{\ln\lambda} |K_{j_0}|$.
  Similarly, we have
  \begin{eqnarray}
  \nonumber  ||2(\theta+b_{{j_0}}\alpha) +(-K_{j_1}-2b_{{j_0}}^\prime)\alpha||_{\R/\Z} &\leq& ||2\theta+K_{j_1}\alpha||_{\R/\Z} +||4\theta+2K_{j_0}\alpha||_{\R/\Z} \\
   \nonumber  &\leq&  ||2\theta+K_{j_1}\alpha||_{\R/\Z} - 2e^{-(\varsigma +\varepsilon)|K_{j_0}|}\\
                 &\leq& e^{- \frac{3}{4} \varsigma |-K_{j_1}-2b_{{j_0}}^\prime |},\label{smallshift1}
  \end{eqnarray}

  By Theorem \ref{localmaximum} with $s_0=b_{{j_0}}$,    $K=-K_{j_1}-2b_{{j_0}}^\prime$, $\varsigma_1=\frac{3}{4} \varsigma $  and $L_1=\frac{1}{2}\hat{K}^2$, we get
  there exists a local $\frac{9\varsigma}{16\ln\lambda}K_{j-1}$ maximum $b_{K_{j_0},K_{j_1}}$ such that
  $| b_{{j_0},{j_1}}-b_{{j_0}}-(-K_{j_1}-2b_{{j_0}}^\prime)|\leq \hat{K}^2$.
  This implies $ b_{{j_0},{j_1}}=K_{j_0}-b_{{j_0}}^\prime-K_{j_1}+b_{{j_1}}^\prime$ with $|b_{{j_1}}^\prime|\leq \hat{K}^2$.

 Shifting the operator $H_{\lambda,\alpha,\theta}$  by $b_{{j_0},{j_1}}$ units, we get the operator $H_{\lambda,\alpha,\theta+b_{{j_0},{j_1}}\alpha}$.
  Thus we have
  \begin{eqnarray}
  \nonumber  ||2(\theta+b_{{j_0},{j_1}}\alpha)+k\alpha||_{\R/\Z} &\geq& ||2\theta- 2b_{{j_0}}^\prime\alpha+2b_{{j_1}}^\prime \alpha+k\alpha)||_{\R/\Z} -2||2\theta+K_{j_0}\alpha||_{\R/\Z} -2||2\theta+K_{j_1}\alpha||_{\R/\Z} \\
   \nonumber  &\geq&  ||2\theta+(- 2b_{K_{j_0}}^\prime+2b_{{j_1}}^\prime +k)\alpha)||_{\R/\Z} - 4e^{-(\varsigma +\varepsilon)|K_{j_1}|}\\
      \nonumber &\geq& e^{- (\delta+\varepsilon)(|k | +2\hat{K}+2\hat{K}^2)}\\
                 &\geq& e^{- (\delta+\varepsilon) |k |},\label{smallshift2}
  \end{eqnarray}
  for all $\frac{1}{2}(\hat{K}+\hat{K}^2)\hat{K}\leq |k|\leq \frac{ \varsigma}{\ln\lambda} |K_{j-1}|$.
  Similarly, we have
  \begin{eqnarray}
  \nonumber  ||2(\theta+b_{{j_0},{j_1}}\alpha) +(K_{j-2}+2b_{K_j}^\prime-2b_{{j_1}}^\prime)\alpha||_{\R/\Z} &\leq& ||2\theta+K_{j_2}\alpha||_{\R/\Z} + 2e^{-(\varsigma +\varepsilon)|K_{j_1}|} \\
                 &\leq& e^{- \frac{3}{4} \varsigma |K_{j_2}+2b_{{j_0}}^\prime-2b_{{j_1}}^\prime |}.
  \end{eqnarray}

  By Theorem \ref{localmaximum} with $s_0=b_{{j_0},{j_1}}$,
  $K=K_{j_2}+2b_{{j_0}}^\prime-2b_{{j_1}}^\prime$,
  $\varsigma_1=\frac{3}{4} \varsigma $ and
  $L_1=\frac{1}{2}(\hat{K}^2+\hat{K}^3)$, we get that
  there exists a local $\frac{9\varsigma}{16\ln\lambda}K_{j_2}$ maximum $b_{{j_0},{j_1},{j_2}}$ such that
  $ b_{{j_0},{j_1},{j_2}}=K_{j_0}+b_{{j_0}}^\prime-K_{j_1}-b_{{j_1}}^\prime+K_{j_2}+b_{{j_2}}^\prime$ with $|b_{{j_2}}^\prime|\leq \hat{K}^2+\hat{K}^3$.

Define $a_{n}=\hat{K}^2(\hat{K}+1)^{n-2}$  for $n\geq2$  and $a_1=\hat{K}$.
Then $a_n=\hat{K}\sum_{i=1}^{n-1} a_i$.
 Notice that by (\ref{KiKi-1})
  \begin{equation}\label{induction1}
    \sum_{i=0}^s ||2\theta+K_{j_i}\alpha||_{\R/\Z}\leq  \sum_{i=0}^s e^{-(\varsigma +\varepsilon)|K_{j_i}|}\leq 2e^{-(\varsigma +\varepsilon)|K_{j_s}|}.
  \end{equation}
 We will  prove that for any $1\leq s\leq k$
  there exists a local $\frac{9\varsigma}{16\ln\lambda}K_{j_s}$ maximum $b_{{j_0},{j_1},\cdots,{j_s}}$ such that
  \begin{equation}\label{induction2}
   b_{{j_0},{j_1},\cdots,{j_s}}=\sum_{i=0}^s (-1)^iK_{j_i}+(-1)^{i-s}b_{{j_i}}^\prime
  \end{equation}
  with $|b_{{j_i}}^\prime|\leq a_{i+1}$ by induction in $s.$

Assume that \eqref{induction2} holds for $s$. We will prove that \eqref{induction2} holds for $s+1$.

 Shifting the operator $H_{\lambda,\alpha,\theta}$  by $b_{{j_0},{j_1},\cdots,{j_s}}$ units, we get the operator $H_{\lambda,\alpha,\theta+b_{{j_0},{j_1},\cdots,{j_s}}\alpha}$.
  Arguing as in  (\ref{smallshift2}) we have
  \begin{equation*}
    ||2(\theta+b_{{j_0},{j_1},\cdots,{j_s}}\alpha)+k\alpha||_{\R/\Z}\;\;\;\;\;\;\;\;\;\;\;\;\;\;\;\;\;\;\;\;\;\;\;\;\;\;\;\;\;\;\;\;\;\;\;\;\;\;\;\;\;\;
  \end{equation*}
\begin{equation*}
    \;\;\;\;\;\;\;\;\;\;\;\;\;\;\;\;\;\;\;\;\;\;\;\;\;\;\;\;\geq ||2\theta+(2\sum_{i=0}^{s}(-1)^{i+1} b_{{j_i}}^\prime)\alpha +(-1)^{s+1}k\alpha||_{\R/\Z} - 2\sum_{i=0}^{s}||2\theta+K_{j_i}\alpha||_{\R/\Z}
  \end{equation*}
  \begin{equation}\label{shift1}
    \;\;\;\;\;\;\;\;\;\;\;\;\;\;\;\;\;\;\;\;\;\;\;\;\;\;\;\;\geq ||2\theta+(2\sum_{i=0}^{s}(-1)^{i+1} b_{{j_i}}^\prime)\alpha +(-1)^{s+1}k\alpha||_{\R/\Z} - 2\sum_{i=0}^{s}e^{-(\varsigma +\varepsilon)|K_{j_i}|}
  \end{equation}
  \begin{equation*}
   \geq  e^{- (\delta+\varepsilon)(|k | +2\sum_{i=1}^{s+1}a_i)} \;\;\; \;\;\;\;\;\;\;\;\;\;\;\;\;\;\;\;\;\;\;\;\;\;\;\;\;\;\;\;\;\;\;\;\;\;\;\;\;\;\;\;\;\;
  \end{equation*}
  \begin{equation}\label{smallshift3}
    \geq e^{- (\delta+\varepsilon) |k |}\;\;\;\;\;\;\;\;\;\;\;\;\;\;\;\;\;\;\;\;\;\;\;\;\;\;\;\;\;\;\;\;\;\;\;\;\;\;\;\;\;\;\;\;\;\;\;\;\;\;\;\;\;\;\;\;\;\;\;\;\;\;
  \end{equation}
  for all $\frac{1}{2}a_{s+2}\leq |k|\leq  \frac{ \varsigma}{\ln\lambda} |K_{j_s}|$, since $\sum_{i=1}^{s+1}a_i=\frac{1}{\hat{K}}a_{s+2}$.
  Similarly to (\ref{smallshift1}), we have
   $$||2(\theta+b_{{j_0},{j_1},\cdots,{j_s}}\alpha) +((-1)^{s+1}K_{j_{s+1}}+2\sum_{i=0}^{s}(-1)^{s+i+1} b_{{j_i}}^\prime)\alpha||_{\R/\Z} $$
   \begin{equation*}
   \leq ||2\theta+K_{j_{s+1}}\alpha||_{\R/\Z} +4e^{-(\varsigma +\varepsilon)|K_{j_1}|}
 \end{equation*}
 \begin{equation}\label{smallshift44}
           \;\;\;\;\;\;  \leq    e^{- \frac{3}{4} \varsigma |(-1)^{s+1} K_{j_{s+1}}+2\sum_{i=0}^{s}(-1)^{s+i+1} b_{{j_i}}^\prime |}.
\end{equation}

  By Theorem \ref{localmaximum} with $s_0=b_{{j_0},{j_1},\cdots,{j_s}}$,    $K=(-1)^{s+1}K_{j_{s+1}}+2\sum_{i=0}^{s}(-1)^{s+i+1} b_{{j_i}}^\prime$, $\varsigma_1=\frac{3}{4} \varsigma $ and $L_1=\frac{1}{2}a_{s+2}$, we get
  that there exists a local $\frac{9\varsigma}{16\ln\lambda}K_{j_{s+1}}$ maximum $b_{{j_0},{j_1},\cdots,{j_{s+1}}}$ such that
  \begin{eqnarray*}
    b_{{j_0},{j_1},\cdots,{j_{s+1}}} &=& b_{{j_0},{j_1},\cdots,{j_s}}+(-1)^{s+1}K_{j_{s+1}}+2\sum_{i=0}^{s}(-1)^{s+i+1} b_{{j_i}}^\prime+b_{{j_{s+1}}}^\prime \\
      &=& \sum_{i=0}^{s-1} (-1)^iK_{j_i}+(-1)^{i-s-1}b_{{j_i}}^\prime
  \end{eqnarray*}
  with $|b_{{j_{s+1}}}^\prime|\leq a_{s+2}$.

 By the fact
  \begin{eqnarray*}
     |b_{{ j_0},{ {j_1}},...,{ {j_s}}}-
  \sum_{i=0}^s (-1)^{i} K_{j_i}| &\leq& \sum_{i=0}^s   |b_{{j_i}}^\prime |\\
    &\leq&  \sum_{i=1}^{s+1} a_i\\
     &\leq& ( \hat{K}+1)^{s+1} ,
  \end{eqnarray*}
  we complete the proof of  I of Theorem  \ref{Universalend}.

Now we start to prove  II of Theorem  \ref{Universalend}.
Fix some $0\leq s\leq k$.
Let us consider a local $\frac{9\varsigma}{16\ln\lambda}K_{j_s}$
maximum $b_{{j_0},{j_1},\cdots,{j_s}}$ and shift the operator by $b_{{j_0},{j_1},\cdots,{j_s}}$
units. We get the operator $H_{\lambda,\alpha,\theta+b_{{j_0},{j_1},\cdots,{j_s}}\alpha}$.
As in \eqref{shift1}, we also have
\begin{equation*}
    ||2(\theta+b_{{j_0},{j_1},\cdots,{j_s}}\alpha)+k\alpha||_{\R/\Z}\;\;\;\;\;\;\;\;\;\;\;\;\;\;\;\;\;\;\;\;\;\;\;\;\;\;\;\;\;\;\;\;\;\;\;\;\;\;\;\;\;\;
  \end{equation*}
\begin{equation}\label{notchange}
    \;\;\;\;\;\;\;\;\;\;\;\;\;\;\;\;\;\;\;\;\;\;\;\;\;\;\;\;\leq ||2\theta+(2\sum_{i=0}^{s}(-1)^{i+1} b_{{j_i}}^\prime)\alpha +(-1)^{s+1}k\alpha||_{\R/\Z} + 2\sum_{i=0}^{s}e^{-(\varsigma +\varepsilon)|K_{j_i}|}
  \end{equation}
for all $a_{s+2}\leq |k|\leq  \frac{ \varsigma}{\ln\lambda} |K_{j_s}|$.

Actually, the definition of $f(\ell)$ in Theorems \ref{Maintheoremdecay} and \ref{Maintheoremdecaylocal}, depends on $\theta$ and $\alpha$.
Thus we will use $f_{\alpha,\theta}(\ell)$ with $|\ell|\geq Ca_{s+2}$.
 Let   $\ell_0$  be such that
              \begin{equation*}
             |\sin\pi(2\theta+2b_{{j_0},{j_1},\cdots,{j_s}}\alpha+\ell_0\alpha)|  = \min_{|x|\leq 2|\ell|}|\sin\pi(2\theta +2b_{{j_0},{j_1},\cdots,{j_s}}\alpha+x\alpha)|.
              \end{equation*}
By \eqref{shift1} and \eqref{notchange}, we have for $|\ell_0|\geq a_{s+2}$,
\begin{equation}\label{notchange1}
  e^{-\varepsilon |\ell|} f_{\alpha,\theta}((-1)^{s+1}\ell) \leq f_{\alpha,\theta+b_{{j_0},{j_1},\cdots,{j_s}}\alpha}(\ell)\leq f_{\alpha,\theta}((-1)^{s+1}\ell) e^{\varepsilon |\ell|}.
\end{equation}
If $|\ell_0|\leq a_{s+2}$, we have
\begin{equation}\label{ellcase}
   e^{-\varepsilon |\ell|} e^{-\ln\lambda |\ell|} \leq f_{\alpha,\theta}(\ell)\leq  e^{-\ln\lambda |\ell|}e^{\varepsilon |\ell|},
\end{equation}
since $|\ell|\geq C a_{s+2}$.

Let $x_s=x-b_{{j_0},{j_1},\cdots,{j_s}}$.
If $|x_s| \in[Ca_{s+2},\frac{1}{C}\frac{ \varsigma}{\ln\lambda} |K_{j_s}|]$, II of Theorem
\ref{Universalend} follows from
  Theorem \ref{Maintheoremdecaylocal} and \eqref{notchange1}  and \eqref{ellcase}.

  If $|x_s| \in[\frac{1}{C}\frac{ \varsigma}{\ln\lambda} |K_{j_s}|,\frac{ \varsigma}{4\ln\lambda} |K_{j_s}|]$, II of Theorem
\ref{Universalend} follows from
  Lemma \ref{Keylemma} and the fact that  $b_{{j_0},{j_1},\cdots,{j_s}}$ is  a local $\frac{9\varsigma}{16\ln\lambda}K_{j_s}$ maximum. Notice that in this case
\begin{equation*}
   e^{-\varepsilon |x_s|} e^{-\ln\lambda |x_s|} \leq f_{\alpha,\theta}((-1)^{s+1}x_s)\leq  e^{-\ln\lambda |x_s|}e^{\varepsilon |x_s|}.
\end{equation*}

\end{proof}

\section{Asymptotics of the transfer matrices}\label{transfer}

{\bf    Proof of Theorem \ref{Thtransferasy}  }

Proof.
Without loss of generality, we consider $\ell>0$. First assume $x_0<0$ or $\eta\leq \gamma=\frac{\varepsilon}{C}.$
By Theorem \ref{Maintheoremdecay}, in those cases,   
 one has
\begin{equation*}
    ||U(\ell)||\leq e^{-(\ln\lambda-\varepsilon)\ell}.
\end{equation*}
By (\ref{G.new17}),
we have
\begin{equation*}
    ||A_{\ell}||\geq ||U(\ell)||^{-1}\geq e^{(\ln\lambda-\varepsilon)\ell}.
\end{equation*}
Combining with (\ref{G24}), the proof follows.

Now we turn to the proof of  the case when $x_0>0$ and $\eta>\gamma$.  We will assume $\ell>0$ is large enough.  By  (\ref{largex0}), one has  $x_0>0$ is large enough.
Thus below we always assume $x_0$ is large.
\begin{theorem}\label{Th.new2transfer}
Assume $j x_0\leq k<(j+1)x_0$ with $k\geq \frac{x_0}{8}$, where $j=0,1$.  Then
we have
\begin{equation}\label{G.newnewnew1transfernew}
  ||A_k||\leq \max\{e^{-|k-jx_0|\ln\lambda}||A_{jx_0}||,e^{-|k-(j+1)x_0|\ln\lambda}||A_{(j+1)x_0}||\}e^{\varepsilon k},
\end{equation}
\begin{equation}\label{G.newnewnew2transfernew}
  ||A_k||\geq \max\{e^{-|k-jx_0|\ln\lambda}||A_{jx_0}||,e^{-|k-(j+1)x_0|\ln\lambda}||A_{(j+1)x_0}||\}e^{-\varepsilon k}.
\end{equation}
\end{theorem}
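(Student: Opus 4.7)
The lower bound (\ref{G.newnewnew2transfernew}) is immediate from the cocycle identity combined with the uniform upper bound (\ref{G24}). Writing $A_k(\theta) = A_{k-jx_0}(\theta+jx_0\alpha)\,A_{jx_0}(\theta)$ and using $\det A_m \equiv 1$, we have $\|A_{k-jx_0}^{-1}\| = \|A_{k-jx_0}\| \leq e^{(\ln\lambda+\varepsilon)(k-jx_0)}$, so
\[
 \|A_{jx_0}\| \leq \|A_{k-jx_0}^{-1}(\theta+jx_0\alpha)\|\cdot\|A_k\| \leq e^{(\ln\lambda+\varepsilon)(k-jx_0)}\,\|A_k\|,
\]
i.e.\ $\|A_k\|\geq e^{-\ln\lambda\,(k-jx_0)}\|A_{jx_0}\|\,e^{-\varepsilon k}$. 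The analogous bound with $(j+1)x_0$ in place of $jx_0$ follows from $A_{(j+1)x_0}(\theta)=A_{(j+1)x_0-k}(\theta+k\alpha)\,A_k(\theta)$, and taking the maximum proves (\ref{G.newnewnew2transfernew}).

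For the upper bound (\ref{G.newnewnew1transfernew}), the plan is to relate $\|A_k\|$ to the norm of the growing solution $\tilde U(k)$ and then apply block expansion. Writing $A_k=M_kM_0^{-1}$ with $M_m=(U(m)\mid\tilde U(m))$, and noting that the Wronskian $\det M_m\equiv c\neq 0$ is a fixed constant while $\|M_0\|=O(1)$, we obtain $\|A_k\|\leq C(\|U(k)\|+\|\tilde U(k)\|)$ and similarly a matching lower bound $\|A_k\|\geq c^{-1}\max(\|U(k)\|,\|\tilde U(k)\|)$. Since $k\geq x_0/8$ makes $\|U(k)\|$ exponentially smaller than $\|\tilde U(k)\|$ (by Theorem \ref{Maintheoremdecay}, as $\phi$ decays while $\psi$ grows), we conclude $\|A_k\|$ is comparable to $\|\tilde U(k)\|$, and the same holds at the endpoints $jx_0, (j+1)x_0$. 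Thus it suffices to prove the block bound
\[
 \|\tilde U(k)\|\leq \max\bigl\{e^{-\ln\lambda\,(k-jx_0)}\|\tilde U(jx_0)\|,\,e^{-\ln\lambda\,((j+1)x_0-k)}\|\tilde U((j+1)x_0)\|\bigr\}\,e^{\varepsilon k}
\]
for the solution $\psi$ itself. This is exactly the conclusion of Theorem \ref{blockth} applied on the interval $[y_1,y_2]=[jx_0,(j+1)x_0]$ with $\tau=\ln\lambda-\varepsilon$, once we verify that every interior point $p\in(jx_0,(j+1)x_0)$ away from the endpoints is $(\ln\lambda-\varepsilon,k_1)$-regular for some $k_1\lesssim\min(p-jx_0,(j+1)x_0-p)$.

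The regularity of interior points is established by the same Cramer's rule plus Uniformity machinery used in Lemma \ref{Keylemma}: one chooses intervals $I_1$ around a reference point (either $0$, $jx_0$, or the appropriate shifted origin) and $I_2$ around $p$ of combined size proportional to $sq_n$, then applies Theorem \ref{universalth} (Uniformity) together with Lemma \ref{Le.Uniform} to extract a non-resonant $j_0\in I_2$ (the exclusion $j_0\notin I_1$ being guaranteed by Lemma \ref{lem}), which gives the required Green's function decay. The main technical obstacle is the verification of the small-divisor hypothesis
\[
 \min_{i,i'\in I_1\cup I_2}\|2\theta+(i+i')\alpha\|_{\R/\Z}\geq e^{-\varepsilon sq_n},
\]
since both endpoints $jx_0$ and $(j+1)x_0$ are themselves near-resonances, so sums $i+i'$ falling close to $0$, $-2jx_0$, or $-2(j+1)x_0$ must be avoided. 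This is handled by the case split analogous to Cases 1--6 in the proof of Lemma \ref{Keylemma}, using the Diophantine condition on $\alpha$ and the extremality of $x_0$ (no other small resonance in $[-2|\ell|,2|\ell|]$); the hypothesis $k\geq x_0/8$ ensures enough interior room for the construction. Translating the resulting bound on $\|\tilde U(k)\|$ back through $\|A_m\|\approx\|\tilde U(m)\|$ at $m=k,jx_0,(j+1)x_0$ yields (\ref{G.newnewnew1transfernew}).
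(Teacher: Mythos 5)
Your proposal is correct and follows essentially the same route as the paper: reduce the upper bound on $\|A_k\|$ to a block-expansion estimate on $\|\tilde U(k)\|$ via the comparability $\|A_k\|\asymp\|\tilde U(k)\|$, and get the lower bound from (\ref{G24}). The only difference is that you re-derive from scratch two ingredients the paper simply cites: the comparability $\|A_k\|\asymp\|\tilde U(k)\|$ is exactly Last--Simon's inequality (8.6) (the paper's (\ref{G.last})), and the block estimate $\|\tilde U(k)\|\leq\max\{\cdots\}e^{\varepsilon k}$ is precisely the already-established remark (\ref{second}) applied with $\varphi=\psi$, $k_0=x_0$, $y=k$, $y'=2x_0$, so the Uniformity/regularity machinery does not need to be rerun.
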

\begin{proof}

  Apply (\ref{second}) with $k_0=x_0, y=k$  $y^\prime=2x_0$ and $\varphi=\psi$.
  We have for $j x_0\leq k<(j+1)x_0$ with $k\geq \frac{x_0}{8}$,
  \begin{equation}\label{Equ1}
    ||\tilde{U}(k)||\leq \max\{e^{-|k-jx_0|\ln\lambda}||\tilde{U}{(jx_0)}||,e^{-|k-(j+1)x_0|\ln\lambda}||\tilde{U}{((j+1)x_0})||\}e^{\varepsilon k}
  \end{equation}

By Last-Simon's arguments ((8.6) in \cite{last1999eigenfunctions}), one has
\begin{equation}\label{G.last}
||A_k||\geq  ||A_k\tilde{{U}} (0)||\geq c ||A_k|| .
\end{equation}
Then (\ref{G.newnewnew1transfernew}) holds by (\ref{G.last}) and (\ref{Equ1}).

(\ref{G.newnewnew2transfernew}) holds directly by (\ref{G24}).
\end{proof}
\begin{lemma}\label{Keylemma3}
For any $2x_0\leq k\leq Cx_0$,

\begin{equation*}
e^{-\varepsilon x_0} ||A_{2x_0}||e^{\ln\lambda |k-2x_0|}\leq||A_{k}||\leq e^{\varepsilon x_0} ||A_{2x_0}||e^{\ln\lambda |k-2x_0|}
\end{equation*}
\end{lemma}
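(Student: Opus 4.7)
The plan is to prove the two inequalities separately. The upper bound is immediate from the cocycle identity and the uniform Lyapunov estimate \eqref{G24}; the lower bound is the heart of the lemma and says that past the resonance point at $2x_0$, the cocycle genuinely expands at the full rate $\ln\lambda$. The lower bound proof will combine the Last-Simon identification $\|A_k\|\asymp\|\tilde U(k)\|$ already used in the proof of Theorem \ref{Th.new2transfer} with an application of Lemma \ref{Keylemma} (in the form \eqref{second}) to the linearly independent solution $\psi$, exploiting that the only near-resonance on scale $2Ck$ is at $x_0$, so the window $[2x_0,Cx_0]$ sits in a uniformly non-resonant regime.

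For the upper bound, I would write the cocycle
$$A_k(\theta)=A_{k-2x_0}(\theta+2x_0\alpha)\,A_{2x_0}(\theta),$$
bound the first factor via \eqref{G24} by $e^{(\ln\lambda+\varepsilon)(k-2x_0)}$, and absorb the small loss $\varepsilon(k-2x_0)\le\varepsilon(C-2)x_0$ by re-labeling $\varepsilon$. This yields $\|A_k\|\le e^{\varepsilon x_0}\|A_{2x_0}\|e^{\ln\lambda(k-2x_0)}$.

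For the lower bound, I would first apply \eqref{G.last} (Last-Simon) at both scales $k$ and $2x_0$ to get $\|A_k\|\ge c\|\tilde U(k)\|$ and $\|A_{2x_0}\|\le C\|\tilde U(2x_0)\|$, reducing the inequality to
$$\|\tilde U(k)\|\ge e^{-\varepsilon x_0}\|\tilde U(2x_0)\|\,e^{\ln\lambda(k-2x_0)}.$$
When $k-2x_0<\varepsilon x_0$ the cocycle identity and the uniform bound give this for free, so it suffices to treat $k-2x_0\ge\varepsilon x_0$. In that regime I would apply Lemma \ref{Keylemma}/\eqref{second} to $\varphi=\psi$ with $y=2x_0$, endpoints $y_i=x_0,\,y_j=k$, and $k_0=x_0$, choosing $\gamma=\varepsilon/C$ so that $|y-y_i|=x_0$ and $|y-y_j|=k-2x_0$ both exceed $10\gamma k$. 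The resulting estimate
$$\|\tilde U(2x_0)\|\le\max\Bigl\{\|\tilde U(x_0)\|\,e^{-(\ln\lambda-\varepsilon)(x_0-14\gamma k)},\ \|\tilde U(k)\|\,e^{-(\ln\lambda-\varepsilon)(k-2x_0-14\gamma k)}\Bigr\}$$
reduces the claim to excluding the first term in the maximum: if it dominated, then Last-Simon would translate it into $\|A_{2x_0}\|\le\|A_{x_0}\|e^{-(\ln\lambda-\varepsilon)x_0+O(\gamma x_0)}$, contradicting the two-sided comparison of $\|A_{x_0}\|$ and $\|A_{2x_0}\|$ supplied by Theorem \ref{Th.new2transfer} (in particular \eqref{G.newnewnew2transfernew} at $j=1$). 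Hence the second term dominates and rearranging yields the desired lower bound.

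The main obstacle I anticipate is the extremality argument excluding the first maximum: one must quantitatively compare $\|A_{2x_0}\|$ and $\|A_{x_0}\|$ using both inequalities of Theorem \ref{Th.new2transfer} in the resonant regime, and verify that the worst-case value of $\eta$ still produces a contradiction with the hypothesis $\eta<\ln\lambda$. A secondary technical point is confirming the non-resonance hypothesis of Lemma \ref{Keylemma} on the window $[-2Ck,2Ck]$, which follows from the extremality of $x_0$ at scale $2\ell$ together with the Diophantine lower bound $\|(x_0-m)\alpha\|\ge\tau/|x_0-m|^\kappa$ ruling out any secondary small denominator in the range. All $O(\gamma x_0)$ and $O(\varepsilon(k-2x_0))$ losses are to be swept into a single $e^{\pm\varepsilon x_0}$ by fixing $\gamma=\varepsilon/C$ with $C=C(\lambda,\delta)$ at the outset.
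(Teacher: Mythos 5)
Your overall skeleton matches the paper's: reduce via Last--Simon to bounding $\|\tilde U(k)\|$ from below, apply the block expansion \eqref{second} to $\psi$ with $k_0=x_0$, $y=2x_0$, and endpoints $x_0,k$, and then exclude the possibility that the ``$x_0$-term'' dominates the resulting maximum. The upper bound and the treatment of the regime $k-2x_0<\varepsilon x_0$ are also the same as the paper's.

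The gap is precisely in the step you flagged as the ``main obstacle.'' You propose to rule out $\|\tilde U(2x_0)\|\le e^{-(\ln\lambda-\varepsilon)x_0}\|\tilde U(x_0)\|$ (equivalently $\|A_{2x_0}\|\lesssim e^{-(\ln\lambda-\varepsilon)x_0}\|A_{x_0}\|$) by ``the two-sided comparison of $\|A_{x_0}\|$ and $\|A_{2x_0}\|$ supplied by Theorem~\ref{Th.new2transfer} (in particular \eqref{G.newnewnew2transfernew} at $j=1$).'' But Theorem~\ref{Th.new2transfer} gives no such comparison: both \eqref{G.newnewnew1transfernew} and \eqref{G.newnewnew2transfernew} express $\|A_k\|$ for intermediate $k\in[x_0,2x_0)$ as (up to $e^{\pm\varepsilon k}$) the maximum of two terms involving $\|A_{x_0}\|$ and $\|A_{2x_0}\|$, and a maximum is insensitive to either term being arbitrarily small relative to the other. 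The only direct constraint you can extract is the trivial submultiplicativity $\|A_{2x_0}\|\ge e^{-(\ln\lambda+\varepsilon)x_0}\|A_{x_0}\|$, which is \emph{compatible} with the hypothetical bad inequality, not contradictory to it. Nor can you invoke Lemma~\ref{Keylemma4} at this point --- its upper bound on $\|A_{2x_0}\|$ is proved \emph{using} Lemma~\ref{Keylemma3}, so that would be circular. The paper's actual mechanism, which your proposal omits, is a Wronskian argument coupled with Lemma~\ref{Keylemma1}: applying Lemma~\ref{Keylemma1} to $\psi$ yields $\|\tilde U(x_0)\|\le e^{-(\ln\lambda-\delta-\varepsilon)x_0}\max\{\|\tilde U(0)\|,\|\tilde U(2x_0)\|\}$; the option with $\|\tilde U(0)\|$ is impossible because then both $\|U(x_0)\|$ and $\|\tilde U(x_0)\|$ would be exponentially small, forcing $|\phi(x_0)\psi(x_0-1)-\phi(x_0-1)\psi(x_0)|\to 0$, contradicting the constancy of the Wronskian \eqref{W}. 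This establishes \eqref{Equ2}, i.e.\ $\|\tilde U(x_0)\|\le e^{-(\ln\lambda-\delta-\varepsilon)x_0}\|\tilde U(2x_0)\|$, which when fed back into the block-expansion maximum makes the $x_0$-term impossible (it would give $\|\tilde U(2x_0)\|\le e^{-(2\ln\lambda-\delta-2\varepsilon)x_0}\|\tilde U(2x_0)\|$). So the missing idea is the Wronskian/linear-independence input; the transfer-matrix inequalities of Theorem~\ref{Th.new2transfer} alone cannot separate the two terms.
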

\begin{proof}
The right inequality holds directly. It suffices to show the left inequality.

By (\ref{Equ9}) and  noting $t\leq \delta+\varepsilon$,
we have
\begin{equation*}
    ||\tilde{U} (x_0)||\leq \max\{e^{-(\ln \lambda-\delta-\varepsilon)x_0 } ||\tilde{U} (0)||,  e^{-(\ln \lambda-\delta-\varepsilon)x_0 } ||\tilde{U} (2x_0)|| \}
\end{equation*}
Clearly,
 $||\tilde{U} (x_0)|| \leq e^{-(\ln \lambda-\delta-\varepsilon)x_0 } ||\tilde{U} (0)||$ can not happen. Otherwise, by the fact
 $ || {U} (x_0)||\leq e^{-(\ln \lambda-\delta-\varepsilon)x_0 } ||{U} (0)||$, we must have
 \begin{equation*}
    |\phi(x_0)\psi(x_0-1)-\phi(x_0-1)\psi(x_0)|\leq e^{-(\ln \lambda-\delta-\varepsilon)x_0}.
 \end{equation*}
 This   contradicts (\ref{W}).

Thus we must have
\begin{equation}\label{Equ2}
    ||\tilde{U} (x_0)||\leq   e^{-(\ln \lambda-\delta-\varepsilon)x_0 } ||\tilde{U} (2x_0)||.
\end{equation}
Lemma holds directly if $k\leq 2x_0+\frac{\varepsilon}{C}x_0$.
If  $ k-2x_0 \geq \frac{\varepsilon}{C}x_0$,
 by (\ref{second}) again ($k_0=x_0, y=2x_0, y^{\prime}=k,\gamma=\frac{\varepsilon}{C}$), one has
\begin{equation*}
    ||\tilde{U} (2x_0)||\leq \max\{e^{-(\ln \lambda-\varepsilon)x_0 } ||\tilde{U} (x_0)||, e^{-(\ln \lambda-\varepsilon)|k-2x_0| } ||\tilde{U} (k)||\}.
\end{equation*}
Combining with (\ref{Equ2}), we must have
\begin{equation*}
   ||\tilde{U} (k)||\geq e^{(\ln \lambda-\varepsilon)|k-2x_0| }  ||\tilde{U} (2x_0)||.
\end{equation*}
Combining with (\ref{G.last}),
 we get the left inequality.

\end{proof}
\begin{lemma}\label{Keylemma4}
The following holds
\begin{equation}\label{Gx0}
   e^{(\ln\lambda-\varepsilon)x_0}\leq || A_{x_0}|| \leq e^{(\ln\lambda+\varepsilon)x_0},
\end{equation}
\begin{equation}\label{G2x0}
   e^{(\ln\lambda-\varepsilon)2x_0}e^{-\eta\ell}\leq || A_{2x_0}|| \leq e^{(\ln\lambda+\varepsilon)2x_0}e^{-\eta\ell}.
\end{equation}
\end{lemma}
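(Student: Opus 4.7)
The plan is to establish the four inequalities separately, using the eigenfunction asymptotics together with the palindromic structure from Section~\ref{js94}. The upper bound in (\ref{Gx0}) is immediate from the uniform Lyapunov estimate (\ref{G24}): $||A_{x_0}|| \leq e^{(\ln\lambda + \varepsilon) x_0}$.

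For the lower bound in (\ref{G2x0}), I use the $SL(2)$ identity $||A_{2x_0}|| = ||A_{2x_0}^{-1}||$ together with $U(0) = A_{2x_0}^{-1} U(2x_0)$ to obtain $||A_{2x_0}|| \geq ||U(0)||/||U(2x_0)|| = 1/||U(2x_0)||$. Applying Theorem~\ref{Maintheoremdecay} at $\ell = 2x_0$, the Diophantine condition ensures that the original $x_0$ remains the unique minimizer of $|\sin\pi(2\theta + x\alpha)|$ over $|x| \leq 4x_0$, with the new $\eta'$ satisfying $\eta' \cdot 2x_0 = \eta\ell$; then Case 2 of the definition of $f$ gives $f(2x_0) = e^{-2x_0 \ln\lambda}(e^{\eta\ell} + 1)$, hence $||U(2x_0)|| \leq 2\, e^{-2x_0 \ln\lambda + \eta\ell + \varepsilon x_0}$, which yields the lower bound in (\ref{G2x0}).

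For the upper bound in (\ref{G2x0}), I would exploit the approximate palindromic symmetry $|V(j) - V(x_0 - j)| \leq C e^{-\eta\ell}$ for $j$ in the relevant range, which follows from $||2\theta + x_0\alpha||_{\R/\Z} \leq e^{-\eta\ell}$ and the evenness of $v$. Following the Wronskian almost-constancy argument in the proof of Theorem~\ref{general}, setting $\hat\phi(n) := \phi(x_0 - n)$ gives $|W(\phi,\hat\phi)(n+m) - W(\phi,\hat\phi)(n-1)| \leq C e^{-\eta\ell} \sum_j |\phi \hat\phi|(n+j)$; since $\phi$ is polynomially bounded, this sum is polynomially bounded, so $W(\phi,\hat\phi)$ is quasi-constant on $[0, 2x_0]$ up to $e^{-\eta\ell + O(\log x_0)}$ error. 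Applied at the endpoints of $[0, 2x_0]$, this approximate Wronskian identity shows that $A_{x_0}(\theta + x_0\alpha)$ is an approximate $J$-conjugate of $A_{x_0}(\theta)^{-1}$ (where $J$ is the swap matrix, using $A^{-1}(v) = J A(v) J$ on each factor) with discrepancy at scale $e^{(\ln\lambda + \varepsilon) x_0} \cdot e^{-\eta\ell}$; the upper bound on $||A_{2x_0}||$ then follows from the product factorization $A_{2x_0} = A_{x_0}(\theta + x_0\alpha)\, A_{x_0}(\theta)$.

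For the lower bound in (\ref{Gx0}), I combine the factorization $A_{2x_0} = A_{x_0}(\theta + x_0\alpha)\, A_{x_0}(\theta)$ with the uniform upper bound $||A_{x_0}(\theta + x_0\alpha)|| \leq e^{(\ln\lambda + \varepsilon) x_0}$ from (\ref{G24}) and the lower bound on $||A_{2x_0}||$ just established. The palindromic symmetry implies $||A_{x_0}(\theta + x_0\alpha)||$ and $||A_{x_0}(\theta)||$ lie at the same Lyapunov scale $e^{x_0\ln\lambda}$ (using $||J M^{-1} J|| = ||M||$ in $SL(2)$), so the $e^{-\eta\ell}$ factor in $||A_{2x_0}||$ reflects cancellation in the product rather than a deficit in either factor; extracting this yields $||A_{x_0}|| \geq e^{(\ln\lambda - \varepsilon) x_0}$. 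The main obstacle throughout is the upper bound step: quantifying the palindromic Wronskian calculation for polynomially bounded (rather than $\ell^2$) solutions so that the $e^{-\eta\ell}$ cancellation survives multiplication by exponentially large Lyapunov corrections, with all subexponential error terms tracked uniformly in $x_0$.
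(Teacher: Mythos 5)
You get the two easy bounds right by essentially the paper's method: the upper bound in (\ref{Gx0}) is just (\ref{G24}), and the lower bound in (\ref{G2x0}) is $||A_{2x_0}||\geq 1/||U(2x_0)||$ combined with the sharp upper bound on $||U(2x_0)||$ from Theorem~\ref{Maintheoremdecay}. But the other two bounds are where the substance of the lemma lies, and your approach there has genuine gaps.

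For the upper bound in (\ref{G2x0}): the palindromic Wronskian does not give what you claim. Even under exact symmetry $V(j)=V(x_0-j)$, the identity $JA(v)J=A(v)^{-1}$ yields $JA_{x_0}(\theta)^{-1}J = A(\theta)A(\theta+\alpha)\cdots A(\theta+(x_0-1)\alpha)$ (product in \emph{ascending} order), whereas the palindrome rewrites $A_{x_0}(\theta+x_0\alpha)$ as $A(\theta-(x_0-1)\alpha)\cdots A(\theta)$ (descending order with negative shifts) --- these are not conjugate. More fundamentally, even if you did have such an approximate conjugacy, $||JM^{-1}JM||$ can be anywhere between $1$ and $||M||^2$, so no upper bound on $||A_{2x_0}||$ follows. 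The Wronskian argument controls the decaying solution $\phi$ around the center $x_0/2$, but the upper bound on $||A_{2x_0}||$ requires controlling the \emph{growing} direction, which is a different piece of information. The paper supplies that missing piece via the Last--Simon subordinacy inequality (equations (8.5), (8.7) of~\cite{last1999eigenfunctions}): it writes $||A_kU(0)||^2 \leq ||A_k||^2 m(k)^2 + ||A_k||^{-2}$ and bounds $m(2x_0)$ using the lower bounds on $||A_p||$ for $p\geq 2x_0$ (which in turn come from Lemma~\ref{Keylemma3} and (\ref{decayingupper})). This is the indispensable ingredient absent from your proposal.

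For the lower bound in (\ref{Gx0}): the factorization $A_{2x_0}=A_{x_0}(\theta+x_0\alpha)A_{x_0}(\theta)$ together with $||A_{x_0}(\theta+x_0\alpha)||\leq e^{(\ln\lambda+\varepsilon)x_0}$ and your lower bound on $||A_{2x_0}||$ only gives $||A_{x_0}||\geq e^{(\ln\lambda-\varepsilon)x_0}e^{-\eta\ell}$, which is short of the target by the resonance factor $e^{\eta\ell}$. You acknowledge there is cancellation to "extract," but you give no mechanism for it, and there is none along this route: the submultiplicativity inequality saturates only when singular vectors align, which is exactly what fails at a resonance. The paper instead takes an evaluation point $k_0=x_0-\frac{\eta'}{2\ln\lambda}x_0$ inside $[0,x_0]$, combines the upper bound $||A_{k_0}||\leq\max\{e^{-k_0\ln\lambda},e^{-|k_0-x_0|\ln\lambda}||A_{x_0}||\}e^{\varepsilon k_0}$ from Theorem~\ref{Th.new2transfer} with the trivial $||A_{k_0}||\geq ||U(k_0)||^{-1}$ and the sharp eigenfunction asymptotics at $k_0$; the choice of $k_0$ makes the two terms in the max comparable and forces the $||A_{x_0}||$ term to dominate, yielding the clean lower bound. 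That balancing device is the second key idea your proposal does not contain.
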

\begin{proof}
We first prove (\ref{Gx0}).  The right inequality holds  by (\ref{G24}) directly. Thus it suffices to show the left one.
By (\ref{second}), for any $\frac{x_0}{8}\leq k< x_0$, one has
\begin{equation*}
    ||U(k)||\leq \max\{e^{-k\ln\lambda},e^{-|k-x_0|\ln\lambda} ||U(x_0)||\} e^{\varepsilon k}.
\end{equation*}

 Clearly
\begin{equation}\label{G.transfer2}
||A_k||\geq ||U(k)||^{-1},
\end{equation}
thus  by  (\ref{G.newnewnew1transfernew}),
we must have for any $\frac{x_0}{8}\leq k< x_0$,
\begin{equation}\label{G.transfer3}
  \max\{e^{-k\ln\lambda} ,e^{-|k-x_0|\ln\lambda}||A_{x_0}||\}e^{\varepsilon k}\geq(\max\{e^{-k\ln\lambda},e^{-|k-x_0|\ln\lambda} ||U(x_0)||\})^{-1}e^{-\varepsilon k}.
\end{equation}
Recall that by (\ref{x0}) and (\ref{Equ3}),
\begin{equation}\label{Equ10}
  e^{-(\ln\lambda-\eta^\prime +\varepsilon)x_0} \leq  ||U(x_0)||\leq e^{-(\ln\lambda-\eta^\prime -\varepsilon)x_0},
\end{equation}
where $\eta^{\prime}=\frac{\ell}{x_0}\eta$.
Let
\begin{equation*}
  k_0=x_0- \frac{\eta^\prime }{2\ln\lambda} x_0.
\end{equation*}
One has $k_0\geq \frac{x_0}{2}$,
thus by (\ref{Equ10})
\begin{equation*}
\max\{e^{-k_0\ln\lambda},e^{-|k_0-x_0|\ln\lambda}     || U({x_0})||\}\leq  e^{-(\ln\lambda-\frac{\eta^\prime}{2})x_0} e^{\varepsilon k_0}.
\end{equation*}
Combining with (\ref{G.transfer3}), one has
\begin{equation*}
 \max\{e^{-k_0\ln\lambda} ,e^{-|k_0-x_0|\ln\lambda}||A_{x_0}||\}\geq  e^{(\ln\lambda-\frac{\eta^\prime}{2})x_0} e^{-\varepsilon k_0}.
\end{equation*}
This  implies
\begin{equation*}
 ||A_{x_0}||\geq e^{(\ln \lambda-\varepsilon)x_0}.
\end{equation*}
Now we start to prove (\ref{G2x0}).
By (\ref{G.Asymptotics})($\ell=2x_0$), one has
\begin{equation}\label{Equ5}
  e^{-(\ln\lambda+\varepsilon)2x_0}e^{\eta ^{\prime}x_0}\leq  || U(2x_0)||\leq e^{-(\ln\lambda-\varepsilon)2x_0}e^{\eta ^{\prime}x_0},
\end{equation}
Combining with (\ref{G.transfer2}),
one has
\begin{equation*}
   || A_{2x_0}||\geq e^{(\ln\lambda-\varepsilon)2x_0}e^{-\eta ^{\prime}x_0}.
\end{equation*}
Thus it remains  to prove the right inequality of (\ref{G2x0}).
By (8.5) and (8.7) in \cite{last1999eigenfunctions}
we have
\begin{equation}\label{G.transfer8}
   ||A_kU(0)||^2\leq ||A_k||^2m(k)^2+||A_k||^{-2},
\end{equation}
where
\begin{equation}\label{G.transfer9}
  m(k)\leq C\sum_{p=k}^{\infty}\frac{1}{||A_p||^2}.
\end{equation}
If  $ k\geq Cx_0 $ ($C$ may depend on $\ln\lambda,\delta$),
 by Theorem \ref{Maintheoremdecay} we have
\begin{eqnarray}
  ||A_k|| &\geq & ||U(k)|| ^{-1}\nonumber\\
   &\geq & e^{(\ln\lambda-\delta-\varepsilon)k}\label{laadd1}
\end{eqnarray}
and by   (\ref{G24}) we have
\begin{equation*}
  ||A_{2x_0}||\leq e^{(\ln\lambda+\varepsilon)2x_0}.
\end{equation*}
Combining with (\ref{laadd1}), we have
\begin{equation}\label{G.lastsimon2}
  ||A_k||\geq ||A_{2x_0}||e^{\frac{\ln\lambda-\delta}{2}k}.
\end{equation}
for $k\geq C x_0$, where  $C$ is large enough.

If $2x_0\leq k\leq Cx_0$, by Lemma \ref{Keylemma3},  we have
\begin{equation}\label{Equ4}
   || A_k||\geq ||A_{2x_0}|| e^{(\ln\lambda-\varepsilon)|k-2x_0|} e^{-\varepsilon x_0}.
\end{equation}

Thus by (\ref{G.transfer9}), (\ref{G.lastsimon2}) and (\ref{Equ4}), we have
\begin{equation}\label{G.transfer10}
  m(2x_0)\leq ||A_{2x_0}||^{-2}e^{\varepsilon x_0}.
\end{equation}
Let $k=2x_0$ in (\ref{G.transfer8}). Then
\begin{equation*}
 || U(2x_0)||\leq \frac{e^{\varepsilon x_0}}{||A_{2x_0}||}.
\end{equation*}
Thus by (\ref{Equ5}), we obtain
\begin{equation}\label{G.transfer11}
 ||A_{2x_0}||\leq e^{(2\ln\lambda-\eta^\prime-\varepsilon )x_0}.
\end{equation}
\end{proof}

  Theorem \ref{Thtransferasy} for  the remaining case ($\eta\geq \gamma=\frac{\varepsilon}{C}$ and $x_0>0$) now follows
  directly from  Theorem \ref{Th.new2transfer} and Lemma \ref{Keylemma3},\ref{Keylemma4}.

{\bf Proof of Corollary \ref{C.anysolution}}
\begin{proof}
The Corollary follows from Theorem \ref{Thtransferasy} and (\ref{G.last}).
\end{proof}
{\bf Proof of Corollary \ref{densityco}}
\begin{proof}
i) and ii) of
 Corollary \ref{densityco} follow from  Theorem \ref{Thtransferasy}  and Corollary \ref{C.anysolution} directly.

Fix some small  $\varepsilon_1,\varepsilon_2>0$.
By the definition of $\delta$, there exists a sequence $n_j$ (assume $n_j>0$ for simplicity)
such that
\begin{equation*}
 e^{- (\delta+\varepsilon_1)n_j}   \leq ||2\theta+n_j\alpha||_{\R/\Z}\leq e^{-\frac{\delta}{2}n_j}.
\end{equation*}
 By the Diophantine condition on $\alpha$, we have
 \begin{equation*}
    n_{j+1}\geq e^{\frac{n_j}{C}}.
 \end{equation*}

We prove (\ref{densityE}) first.
By  Theorem \ref{Maintheoremdecay},
one has for any $ |k|\in[\varepsilon_2n_{j+1}, \frac{n_{j+1}}{2}]$,
\begin{equation*}
    ||U(k)||\leq e^{-(\ln\lambda -\varepsilon_1)|k|}.
\end{equation*}
This implies (\ref{densityE}) by the arbitrariness  of $\varepsilon_1,\varepsilon_2$.

Now we turn to the proof of (\ref{densityT}).
By  Theorem \ref{Thtransferasy},
one has for any $ |k|\in[\varepsilon_2n_{j+1}, n_{j+1}]$,
\begin{equation*}
    ||A_k||\geq  e^{(\ln\lambda -\varepsilon_1)|k|}.
\end{equation*}
This implies (\ref{densityT}).
\end{proof}
\appendix
\section{Uniformity}
The following lemma is critical when we prove Theorem \ref{universalth}.
\begin{lemma} (\text{Lemma } 9.7, \cite{MR2521117})\label{a1}
Let $\alpha\in \mathbb{R}\backslash \mathbb{Q}$, $x\in\mathbb{R}$ and $0\leq k_0 \leq q_n-1$ be such that
$ |\sin\pi(x+k_0\alpha)|=\inf_{0\leq k\leq q_n-1}    |\sin\pi(x+k \alpha)|$, then for some absolute constant $C > 0$,
\begin{equation}\label{G927}
    -C\ln q_n\leq \sum _{k=0,k\neq k_0}^{q_n-1} \ln|\sin\pi(x+k\alpha )|+(q_n-1)\ln2\leq  C\ln q_n.
\end{equation}
  \end{lemma}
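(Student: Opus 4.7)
The plan is to compare the target sum to its counterpart where $\alpha$ is replaced by its $n$-th continued fraction approximant $p_n/q_n$, for which the corresponding product admits a closed form. Since $\gcd(p_n,q_n)=1$, the map $k\mapsto kp_n\bmod q_n$ permutes $\{0,\dots,q_n-1\}$, so the factorisation $z^{q_n}-1=\prod_{k=0}^{q_n-1}(z-e^{2\pi ik/q_n})$ evaluated at $z=e^{2\pi ix}$ yields
\begin{equation*}
\sum_{k=0}^{q_n-1}\ln|2\sin\pi(x+kp_n/q_n)|=\ln|2\sin\pi q_nx|.
\end{equation*}
This is the exact reference identity. What remains is to show (i) passing from $p_n/q_n$ to $\alpha$ changes the partial sum by only $O(\ln q_n)$, and (ii) $\ln|\sin\pi q_n x|-\ln|\sin\pi(x+k_0p_n/q_n)|=\ln q_n+O(1)$.

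For step (i), the inequality $|\alpha-p_n/q_n|<1/(q_nq_{n+1})$ gives $|e_k|\le 1/q_{n+1}$ for $e_k:=k\alpha-kp_n/q_n$ and $|k|\le q_n-1$. The mean value theorem applied to $\ln|\sin\pi z|$ gives
\begin{equation*}
\bigl|\ln|\sin\pi(x+k\alpha)|-\ln|\sin\pi(x+kp_n/q_n)|\bigr|\le \pi|e_k|\max|\cot\pi z|,
\end{equation*}
with $|\cot\pi z|$ controlled by the reciprocal of the distance of $z$ from $\Z$. Using the best-approximation property $\min_{0<|m|<q_n}\|m\alpha\|_{\R/\Z}=\|q_{n-1}\alpha\|_{\R/\Z}\ge 1/(q_n+q_{n-1})$, the triangle inequality $\|x+k\alpha\|_{\R/\Z}\ge \|(k-k_0)\alpha\|_{\R/\Z}-\|x+k_0\alpha\|_{\R/\Z}$ combined with the three distance theorem allows me to list $\{\|x+k\alpha\|_{\R/\Z}:k\ne k_0\}$ in increasing order so that the $j$-th value is at least $cj/q_n$; the same ordering is inherited by the rational points up to the $1/q_{n+1}$ shift.

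The main obstacle is that a crude $O(1)$ per-term bound would accumulate to $O(q_n)$, far worse than the target $O(\ln q_n)$. The ranked lower bound saves it:
\begin{equation*}
\sum_{k\ne k_0}\bigl|\ln|\sin\pi(x+k\alpha)|-\ln|\sin\pi(x+kp_n/q_n)|\bigr|\le C\sum_{j=1}^{q_n-1}\frac{1/q_{n+1}}{j/q_n}\le \frac{Cq_n}{q_{n+1}}\ln q_n\le C\ln q_n.
\end{equation*}

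For step (ii), the set $\{x+kp_n/q_n\bmod 1\}_{k=0}^{q_n-1}$ coincides with the arithmetic progression $\{x+j/q_n\bmod 1\}_{j=0}^{q_n-1}$, whose smallest distance to $\Z$ equals $\|q_nx\|_{\R/\Z}/q_n$. When $k_0$ agrees with the rational minimizer (the generic case), $|\sin\pi(x+k_0p_n/q_n)|$ is of order $\|q_nx\|_{\R/\Z}/q_n$, hence of order $|\sin\pi q_nx|/q_n$, which gives $\ln|\sin\pi q_nx|-\ln|\sin\pi(x+k_0p_n/q_n)|=\ln q_n+O(1)$. In the exceptional case where the two minimizers disagree, the perturbation estimate forces $\|q_nx\|_{\R/\Z}$ to be close to $1/2$, whence $|\sin\pi q_nx|\ge c$ while $|\sin\pi(x+k_0p_n/q_n)|$ is still of order $1/q_n$, so the same estimate $\ln q_n+O(1)$ holds. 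Combining the three steps delivers both inequalities in \eqref{G927}.
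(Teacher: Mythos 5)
The paper does not prove Lemma~\ref{a1}; it is cited verbatim from \cite{MR2521117}, so there is no internal argument to compare against. Assessing your proposal on its own: the strategy of rationalizing via $p_n/q_n$, invoking the exact identity $\sum_{k=0}^{q_n-1}\ln|2\sin\pi(x+kp_n/q_n)|=\ln|2\sin\pi q_nx|$, and controlling the correction through the ranked lower bounds supplied by best approximation and the three-distance theorem is the right one and matches the standard proof, and the estimate that saves the computation, $\sum_{j\ge 1}(1/q_{n+1})/(j/q_n)\lesssim(q_n/q_{n+1})\ln q_n\le\ln q_n$, is correctly identified. Two details, however, do not hold up as written.

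First, the mean-value bound $\pi|e_k|\max|\cot\pi z|$ is vacuous (or useless) for the finitely many indices $k$ at which the segment from $x+kp_n/q_n$ to $x+k\alpha$ approaches or crosses $\mathbb{Z}$. This is a real possibility: when $q_{n+1}$ is comparable to $q_n$ the segment length $\le 1/q_{n+1}$ is comparable to the smallest gap $\|q_{n-1}\alpha\|\sim 1/q_n$, so the segments through the orbit points nearest $\mathbb{Z}$ can cross it. Those $O(1)$ terms must instead be bounded directly, using that $\|x+k\alpha\|\ge c/q_n$ for all $k\ne k_0$ (the second-nearest point to $0$ lies across $0$ from $x+k_0\alpha$ and is at distance $\ge\|q_{n-1}\alpha\|-\|x+k_0\alpha\|\ge\tfrac12\|q_{n-1}\alpha\|$), and that $\|x+kp_n/q_n\|\ge 1/(2q_n)$ for every $k$ other than the rational minimizer $k_1$; each such term then contributes at most $O(\ln q_n)$ and there are $O(1)$ of them, which is compatible with the target.

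Second, the step (ii) assertion that a disagreement $k_0\ne k_1$ ``forces $\|q_nx\|_{\R/\Z}$ to be close to $1/2$'' is incorrect. The perturbation argument only yields $\|q_nx\|\ge\tfrac12-2q_n/q_{n+1}$, which is vacuous whenever $q_{n+1}\le 4q_n$ --- in particular for every $\alpha$ of bounded type --- and one can have $k_0\ne k_1$ with $\|q_nx\|$ arbitrarily small. The correct mechanism is a cancellation: when $k_1\ne k_0$, the one potentially huge term $\ln|\sin\pi(x+k_1p_n/q_n)|=\ln\|q_nx\|-\ln q_n+O(1)$ appearing in the rational sum is exactly what neutralizes the $\ln\|q_nx\|$ in $\ln|2\sin\pi q_nx|$; what remains is $\ln q_n$ from that cancellation, $-\ln|2\sin\pi(x+k_0p_n/q_n)|=\ln q_n+O(1)$ (because $k_0\ne k_1$ forces $\|x+k_0p_n/q_n\|\asymp 1/q_n$), and $\ln|2\sin\pi(x+k_1\alpha)|\ge -\ln q_n-O(1)$, all of size $O(\ln q_n)$. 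With these two points repaired your argument closes.
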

{\bf Proof of Theorem \ref{universalth}}
\begin{proof}
Let $i_0,j_0\in I_1\cup I_2$  be such that $|\sin\pi (2\theta+(i_0+j_0)\alpha)|=\min_{i,j\in I_1\cup I_2}|\sin\pi (2\theta+(i+j)\alpha)|$.
By the Diophantine condition on $\alpha$, there exist
$\tau^\prime,\kappa^\prime>0$ such that for any $i+j\neq  i_0+j_0$ and $i, j\in I_1\cup I_2$,
\begin{equation}\label{appsmall1}
   |\sin\pi(2\theta+(i+j)\alpha)|\geq \frac{\tau^\prime}{(sq_n)^{\kappa^\prime}}.
 \end{equation}
 Also for all
 $i,j\in I_1\cup I_2,i\neq j$, we have
 \begin{equation}\label{appsmall2}
   |\sin\pi( j-i)\alpha|\geq \frac{\tau^\prime}{(sq_n)^{\kappa^\prime}}.
 \end{equation}
 In (\ref{Def.Uniform}), let $x=\cos2\pi a$, $k=sq_{n}-1$  and
take    the logarithm. Then
  $$ \ln \prod_{ j\in I_1\cup  I_2  , j\neq i } \frac{|\cos2\pi a-\cos2\pi\theta_j|}
        {|\cos2\pi\theta_i-\cos2\pi\theta_j|}\;\;\;\;\;\;\;\;\;\;\;\;\;\;\;\;\;\;\;\;\;\;\;\;\;\;\;\;\;\;\;\;\;\;\;\;\;\;\;\;\;\;\;\;\;\;\;\;\;\;\;\;\;
        \;\;\;\;\;\;\;\;\;\;\;\;\;\;\;$$
          $$=   \sum _{ j\in I_1\cup  I_2  , j\neq i }\ln|\cos2\pi a-\cos2\pi \theta_j|- \sum _{ j\in I_1\cup  I_2  , j\neq i }\ln|\cos2\pi\theta_i  -\cos2\pi \theta_j| .$$

First,  we  estimate $ \sum _{ j\in I_1\cup  I_2  , j\neq i }\ln|\cos2\pi a-\cos2\pi \theta_j| $.
Obviously,
   $$ \sum _{ j\in I_1\cup  I_2  , j\neq i }\ln|\cos2\pi a-\cos2\pi \theta_j| \;\;\;\;\;\;\;\;\;\;\;\;\;\;\;\;\;\;\;\;\;\;\;\;\;\;\;\;\;\;\;\;\;\;\;\;\;\;\;\;\;\;\;\;\;\;\;\;\;\;\;\;\;\;\;\;$$
$$\;\;\;\;\;\;\;\;\;\;\;\;\;\;\;\;\;\;\;\;\;=\sum_{ j\in I_1\cup  I_2  , j\neq i }\ln|\sin\pi(a+\theta_j)|+\sum_{ j\in I_1\cup  I_2  , j\neq i }\ln |\sin\pi(a-\theta_j)|
+(sq_{n}-1)\ln2  $$
\begin{equation*}
    =\Sigma_{+}+\Sigma_-+(sq_{n}-1)\ln2.  \;\;\;\;\;\;\;\;\;\;\;\;\;\;\;\;\;\;\;\;\; \;\;\;\;\;\;\;\;\;\;\;\;\;\;\;\;\;\;\;\;\;\;\;\;\;\;\;\;\;\;\;\;
\end{equation*}
Both $\Sigma_+$ and $\Sigma_-$ consist of $s$ terms of the form of  (\ref{G927}), plus s terms of the form
\begin{equation*}
    \ln\min_{j=0,1,\cdots,q_{n}}|\sin\pi(x+j\alpha)|,
\end{equation*}
minus $\ln|\sin\pi(a\pm\theta_i)|$.
       Thus,  using   (\ref{G927})  s times for $\Sigma_{+}$ and $\Sigma_{-}$ respectively,  one has
 \begin{equation}\label{G.appnumerator}
   \sum_{j \in I_1  \cup I_2,j\neq i}\ln|\cos2\pi a-\cos2\pi \theta_{j}|\leq-sq_{n}\ln2+Cs\ln q_{n}.
\end{equation}
If $a=\theta_i$,
we obtain
   $$ \sum _{j \in I_1  \cup I_2,j\neq i}\ln|\cos2\pi \theta_i-\cos2\pi \theta_j| \;\;\;\;\;\;\;\;\;\;\;\;\;\;\;\;\;\;\;\;\;\;\;\;\;\;\;\;\;\;\;\;\;\;\;\;\;\;\;\;\;\;\;\;\;\;\;\;\;\;\;\;\;\;\;\;$$
$$\;\;\;\;\;\;\;\;\;\;\;\;\;\;\;\;\;\;\;\;\;=\sum_{j \in I_1  \cup I_2,j\neq i}\ln|\sin\pi(\theta_i+\theta_j)|+\sum_{j \in I_1  \cup I_2,j\neq i}\ln |\sin\pi(\theta_i-\theta_j)|
+(sq_{n}-1)\ln2  $$
\begin{equation}\label{G.appsumdenumerate}
    =\Sigma_{+}+\Sigma_-+(sq_{n}-1)\ln2,  \;\;\;\;\;\;\;\;\;\;\;\;\;\;\;\;\;\;\;\;\;\;\;\; \;\;\;\;\;\;\;\;\;\;\;\;\;\;\;\;\;\;\;\;\;\;\;\;\;\;\;\;\;\;\;
\end{equation}
 where
  \begin{equation*}
   \Sigma_{+}=\sum_{j \in I_1  \cup I_2,j\neq i}\ln |\sin\pi(2\theta+ (i+j) \alpha)|,
 \end{equation*}
 and
\begin{equation*}
     \Sigma_-=\sum_{j \in I_1  \cup I_2,j\neq i}\ln |\sin\pi( i-j)\alpha|.
 \end{equation*}
  We will estimate $\Sigma_+ $.
 Set
$J_1=[1,s_1]$ and
$J_2=[s_1+1,s]$, which are two adjacent disjoint intervals of length
$s_1,s_2$ respectively.
 Then $I_1\cup
I_2$ can be represented as a disjoint union of segments $B_j,\;j\in
J_1\cup J_2,$ each of length $q_{n}$.
Applying (\ref{G927}) to  each  $B_j$, we
obtain
\begin{equation}\label{G312}
\Sigma_+ \geq -sq_{n}\ln 2+
\sum_{j\in J_1\cup J_2 }\ln
|\sin  \pi\hat \theta_j|-Cs\ln q_{n}-\ln|\sin2\pi (\theta+i\alpha)|,
\end{equation}
where
\begin{equation}\label{G313}
|\sin  \pi\hat \theta_j|=\min_{\ell \in B_j}|\sin  \pi
(2\theta +(\ell +  i)\alpha )|.
\end{equation}
By (\ref{appG1}) and (\ref{appsmall1}), we have
\begin{equation}\label{appG2}
   \sum_{j\in J_1\cup J_2 }\ln
|\sin  \pi\hat \theta_j|\geq -\gamma s q_n -Cs\ln sq_n
\end{equation}
Putting (\ref{appG2}) in (\ref{G312}), we get
\begin{equation}\label{appG3}
    \Sigma_+ \geq -sq_{n}\ln 2 -\gamma s q_n -Cs\ln sq_n.
\end{equation}
Similarly, replacing (\ref{appG1}), (\ref{appsmall1}) with (\ref{appsmall2}),
 and  arguing as in  the proof of (\ref{appG3}), we obtain,
\begin{equation}\label{newG323}
\Sigma_- > -sq_{n}\ln 2  -Cs\ln sq_n.
\end{equation}

From  (\ref{G.appsumdenumerate}), (\ref{appG3}) and (\ref{newG323}), one has
$$ \sum _{j \in I_1  \cup I_2,j\neq i}\ln|\cos2\pi \theta_i-\cos2\pi \theta_j| \;\;\;\;\;\;\;\;\;\;\;\;\;\;\;\;\;\;\;\;\;\;\;\;\;\;\;\;\;\;\;\;\;\;\;\;\;\;\;\;\;\;\;\;\;\;\;\;\;\;\;\;\;\;\;\;$$
\begin{equation}\label{G.app325}
  \geq -sq_{n}\ln 2 -\gamma s q_n -Cs\ln sq_n.
\end{equation}
By (\ref{G.appnumerator})  and (\ref{G.app325}),
we have
\begin{equation*}
        \max_{ i\in I_1\cup I_2} \prod_{j \in I_1\cup I_2,j \neq i } \frac{|x-\cos2\pi\theta_{j }|}
        {|\cos2\pi\theta_i-\cos2\pi\theta_{j }|}<e^{ sq_{n}(\gamma+C\frac{\ln sq_n}{q_n}  )  }.
      \end{equation*}
      By the assumption $s\leq q_n^C$, we get for any $\varepsilon>0$ and large $n$,
      \begin{equation*}
        \max_{ i\in I_1\cup I_2} \prod_{j \in I_1\cup I_2,j \neq i } \frac{|x-\cos2\pi\theta_{j }|}
        {|\cos2\pi\theta_i-\cos2\pi\theta_{j }|}<e^{ sq_{n}(\gamma+  \varepsilon  )  }.
      \end{equation*}
      This completes the proof.
\end{proof}
\section{Block Expansion Theorem}
{\bf Proof of Theorem \ref{blockth}}
\begin{proof}

\par
   For any $\hat{y}  \in [y_1+\gamma k,y_2-
\gamma k]$, by the assumption we have
 there exists an interval $ I(\hat{y})=[x_1,x_2]\subset
[ y_1,y_2]$
such that $\hat{y}\in I(\hat{y})$ with $\frac{\gamma}{20}k\leq|I(\hat{y})|\leq \frac{1}{2}\text{dist }(y,\{y_1,y_2\})$, and
\begin{equation}\label{G329}
    \text{dist}(\hat{y},\partial I(\hat{y}))\geq  \frac{1}{40} |I(\hat{y})| \geq\frac{\gamma}{800}k
\end{equation}
and
\begin{equation}\label{G330}
  |G_{I(\hat{y})}(\hat{y},x_i)| \leq e^{-\tau|\hat{y}-x_i|},\;i=1,2,
\end{equation}
where
 $ \{x_1,x_2\}=\partial I(\hat{y})$ is the boundary of the interval $I(\hat{y})$.
   For $z  \in  \partial I(\hat{y})$,  let
  $z' $ be the neighbor of $z$, (i.e., $|z-z'|=1$) not belonging to $I(\hat{y})$.
\par
If $x_2+1\leq y_2- \gamma k$ or  $x_1-1\geq  y_1+\gamma k$,
we can expand $\varphi(x_2+1)$ or $\varphi(x_1-1)$ using (\ref{Block}). We can continue this process until we arrive to $z$
such that $z+1>y_2- \gamma k$ or  $z-1<  y_1+\gamma k$, or the  numbers of iterations  reach
$\lfloor\frac{160 0}{ \gamma}\rfloor$. Then, by (\ref{Block})
\begin{equation}\label{G331}
   \varphi(y)=\displaystyle\sum_{s ; z_{i+1}\in\partial I(z_i^\prime)}
G_{I(y)}(k,z_1) G_{I(z_1^\prime)}
(z_1^\prime,z_2)\cdots G_{I(z_s^\prime)}
(z_s^\prime,z_{s+1})\varphi(z_{s+1}^\prime),
\end{equation}
where in each term of the summation one has
$y_1+\gamma k+1\leq z_i\leq  y_2-\gamma k-1$, $i=1,\cdots,s,$ and
  either $z_{s+1} \notin [y_1+\gamma k+1,y_2+\gamma k-1]$, $s+1 < \lfloor\frac{1600}{ \gamma}\rfloor$; or
$s+1= \lfloor \frac{1600}{ \gamma}\rfloor$.
We should mention that $z_{s+1}\in[y_1,y_2] $.
\par
 If $z_{s+1} \in [y_1,y_1+ \gamma k]$, $s+1 <  \lfloor\frac{1600}{ \gamma}\rfloor$,
this implies
\begin{equation*}
    |\varphi(z_{s+1}^\prime)|\leq r_{y_1}^{\varphi}.
\end{equation*}

By  (\ref{G330}), we have for such terms
\begin{equation*}
    \nonumber
   | G_{I( {y})}( {y},z_1) G_{I(z_1^\prime)}
(z_1^\prime,z_2)\cdots G_{I(z_s^\prime)}
(z_s^\prime,z_{s+1})\varphi(z_{s+1}^\prime)|
\end{equation*}
\begin{eqnarray}
\nonumber
&\leq & r_{y_1} ^{\varphi}e^{-\tau(|y-z_1|+\sum_{i=1}^{s}|z_i^\prime-z_{i+1}|)}
 \\
\nonumber
&\leq & r_{y_1} ^{\varphi}e^{-\tau(|y-z_{s+1}|-(s+1))}  \\
&\leq & r_{y_1} ^{\varphi} e^{-\tau(|y-y_1|- \gamma  k -\frac{ 1600}{ \gamma})}.
\label{G332}
\end{eqnarray}
 If $z_{s+1} \in [y_2-\gamma k,y_2 ]$, $s+1 < \lfloor\frac{1600}{ \gamma}\rfloor$,
 by the same arguments,
  we have
  \begin{equation}\label{G.addGreen}
    | G_{I(y)}(y,z_1) G_{I(z_1^\prime)}
(z_1^\prime,z_2)\cdots G_{I(z_s^\prime)}
(z_s^\prime,z_{s+1})\varphi(z_{s+1}^\prime)|\leq  r_{y_2} ^{\varphi}e^{-\tau(|y-y_2|- \gamma k -\frac{ 1600}{ \gamma})}.
  \end{equation}
  If $s+1= \lfloor\frac{1600}{ \gamma}\rfloor,$
using   (\ref{G329}) and (\ref{G330}), we obtain
\begin{equation}\label{G333}
     | G_{I(y)}(y,z_1) G_{I(z_1^\prime)}
(z_1^\prime,z_2)\cdots G_{I(z_s^\prime)}
(z_s^\prime,z_{s+1})\varphi(z_{s+1}^\prime)|\leq e^{-\tau \frac{\gamma}{800} k \lfloor\frac{1600}{ \gamma}\rfloor}|\varphi(z_{s+1}^\prime)| .
\end{equation}
Notice that the total number of terms in (\ref{G331})
is  at most  $2^{\lfloor\frac{1600}{ \gamma}\rfloor}$ and $ |y-y_1|,|y-y_2|\geq 10 \gamma k$.  By (\ref{G332}), (\ref{G.addGreen}) and (\ref{G333}),  we have
\begin{equation}\label{G.add1}
|\varphi(y)|\leq  \max\{r_{y_1} ^{\varphi}e^{-\tau (|y-y_1|-3\gamma k) }, r_{y_2}^{\varphi} e^{-\tau (|y-y_2|-3\gamma k) }, \max_{p\in[y_1,y_2]}\{e^{-\tau k }|\varphi(p)|\}\}.
\end{equation}
Now we will show that for any $p\in[y_1,y_2]$, one has
$ |\varphi(p)|\leq  \max\{ r_{y_1}^{\varphi},r_{y_2}^{\varphi}\}$. Then (\ref{G.add1}) implies Theorem \ref{blockth}.
Otherwise, by the definition of $r_{y_1}^{\varphi}$ and $r_{y_2}^{\varphi}$, if   $|\varphi(p^\prime)|$ is the largest one of $|\varphi(z)|,z\in [ y_1+10\gamma k +1,y_2-10\gamma k-1]$,
then $|\varphi(p^\prime)|>\max\{ r_{y_1}^{\varphi},r_{y_2}^{\varphi}\}$. Applying (\ref{G.add1}) to $\varphi(p^\prime) $ and noticing  that  $ |p^\prime-y_1|,|p^\prime-y_2|\geq 10 \gamma k$,
we get
\begin{equation*}
|\varphi(p^\prime)|\leq  \max\{  e^{-7\tau \gamma k } r_{y_1}^{\varphi}, e^{-7\tau \gamma k } r_{y_2}^{\varphi},e^{-\tau k}|\varphi(p^\prime)|\}.
\end{equation*}
This is impossible  because $|\varphi(p^\prime)|>\max\{ r_{y_1}^{\varphi},r_{y_2}^{\varphi}\}$.

\end{proof}
 \section*{Acknowledgments}
The
work of S.J. was supported by  the Simons Foundation.  W.L. was supported by the AMS-Simons Travel Grant 2016-2018. This research was
partially
 supported by NSF DMS-1401204 and NSF DMS-1700314.
  We are grateful to the Isaac
    Newton Institute for Mathematical Sciences, Cambridge, for its
    hospitality, supported by EPSRC Grant Number EP/K032208/1, during the programme Periodic and Ergodic Spectral
    Problems where this work was started.


\footnotesize
 \bibliographystyle{abbrv} 

\end{document}